\DeclareMathOperator*{\argmin}{arg\,min}
\def\openone{\leavevmode\hbox{\small1\kern-3.8pt\normalsize1}}
\def\CC{\mathbb{C}}
\def\RR{\mathbb{R}}
\def\NN{\mathbb{N}}
\def\PP{\mathbb{P}}
\def\EE{\mathbb{E}}
\newtheorem{theorem}{Theorem}
\newtheorem{lemma}{Lemma}
\newtheorem{proposition}{Proposition}
\newtheorem{corollary}{Corollary}
\theoremstyle{definition}
\newtheorem{example}{Example}
\newtheorem{condition}{Condition}
\def\reff#1{(\ref{#1})}
\def\eps{\varepsilon}
\newcommand{\supp}{\mathop{\rm supp}\nolimits}
\newcommand{\tr}{\mathop{\rm Tr}\nolimits}
\newcommand{\im}{\mathop{\rm Im}\nolimits}
\newcommand{\spec}{{\rm sp}}
\newcommand{\cA}{{\cal A}}
\newcommand{\cB}{{\cal B}}
\newcommand{\cD}{{\cal D}}
\newcommand{\cE}{{\cal E}}
\newcommand{\cF}{{\cal F}}
\newcommand{\cG}{{\cal G}}
\newcommand{\cH}{{\cal H}}
\newcommand{\cP}{\mathcal{P}}
\newcommand{\cX}{{\cal X}}
\def\d{\mathrm{d}}
\def\e{\mathrm{e}}
\theoremstyle{definition}
\theoremstyle{remark}
\newtheorem{remark}{Remark}
\numberwithin{equation}{section}
\newcommand{\indc}{{\mathbf{1}}}
\DeclareRobustCommand\openone{\leavevmode\hbox{\small1\normalsize\kern-.33em1}}
\newcommand{\id}{\rm{id}}
\newcommand{\be}{\begin{equation}}
\newcommand{\ee}{\end{equation}}
\newcommand{\bea}{\begin{eqnarray}}
\newcommand{\eea}{\end{eqnarray}}
\newcommand{\beas}{\begin{eqnarray*}}
	\newcommand{\eeas}{\end{eqnarray*}}
\begin{document}
	\bibliographystyle{abbrv}
	
	\title{Finite blocklength and moderate deviation analysis \\of hypothesis testing of correlated quantum states\\ and application to classical-quantum channels with memory
		}
	\author[a]{Cambyse Rouz\'e}
           \author[a]{Nilanjana Datta}
	\affil[a]{\small Statistical Laboratory, Centre for Mathematical Sciences, University of Cambridge, Cambridge~CB30WB, UK}
	\maketitle

	\begin{abstract}
	Martingale concentration inequalities constitute a powerful mathematical tool in the analysis of problems in a wide variety of fields ranging from probability and statistics to information theory and machine learning. Here we apply techniques borrowed from this field to quantum hypothesis testing, which is the problem of discriminating quantum states belonging to two different sequences $\{\rho_n\}_{n}$  and $\{\sigma_n\}_n$. We obtain upper bounds on the finite blocklength type II Stein- and Hoeffding errors, which, for i.i.d. states, are in general tighter than the corresponding bounds obtained by Audenaert, Mosonyi and Verstraete in \cite{AMV12}. We also derive finite blocklength bounds and moderate deviation results for pairs of sequences of correlated states satisfying a (non-homogeneous) factorization property. Examples of such sequences include Gibbs states of spin chains with translation-invariant finite range interaction, as well as finitely correlated quantum states. We apply our results to find bounds on the capacity of a certain class of classical-quantum channels with memory, which satisfy a so-called \textit{channel factorization property} - both in the finite blocklength and moderate deviation regimes. 
	\end{abstract}
	\section{Introduction}\label{sec_intro}
	
	\subsection*{Quantum Hypothesis Testing}
The goal of binary quantum hypothesis testing is to determine the state of a quantum system, given the knowledge that it is one of two specific states ($\rho$ or $\sigma$, say), by making suitable measurements on the state. In the language of hypothesis testing, one considers two hypotheses -- the {\em{null hypothesis}} $H_0 : \rho$ and the {\em{alternative hypothesis}} $H_1 :\sigma$.
The measurement done to determine the state is given most generally by a POVM $\{T, \mathbb{I} - T\}$ where $0\le T\le \mathbb{I}$, and $\mathbb{I}$ denotes the identity operator acting on the Hilbert space of the quantum system. Adopting the nomenclature from classical hypothesis testing, we refer to $T$ as a {\em{test}}. 
There are two associated error probabilities: 
\begin{align*} {\hbox{Type I error:}} \quad \alpha(T) &:= \tr\left(( \mathbb{I} -  T)\rho\right), \quad {\hbox{and}} \quad {\hbox{Type II error:}} \quad \beta(T) := \tr\left(T \sigma\right), 
	\end{align*}
which are, respectively, the probabilities of erroneously inferring the state to be $\sigma$ when it is actually $\rho$ and vice versa.
There is a trade-off between the two error probabilities, and 
	there are various ways to optimize them, depending on whether or not 
	the two types of errors are treated on an equal footing. In the case of {\em{symmetric hypothesis testing}}, one minimizes the total probability of error $\alpha(T)+\beta(T)$, whereas in {\em{asymmetric hypothesis testing}} one minimizes the type II error under a suitable constraint on the type I error.

	Quantum hypothesis testing was originally studied in the {\em{asymptotic i.i.d.~setting}}, in which, instead of a single copy, multiple (say $n$) 
	identical copies of the state were assumed to be available, and a joint measurement on all of them was allowed. 
The optimal error probabilities were
evaluated in the asymptotic setting ($n \to \infty$) and shown to decay exponentially in $n$. The decay rates were quantified by different statistical distance measures
in the different cases: in symmetric hypothesis testing it is given by the so-called quantum Chernoff distance \cite{Aetal07, NussbaumSzkola}; in asymmetric hypothesis
testing, the optimal decay rate of the type II error probability, when evaluated under the constraint that the type I error is less than a given threshold value,
is given by the quantum relative entropy \cite{HP91, ON00}, whereas, when evaluated under the constraint that the type I error decays with a given exponential speed, it is given by the so-called
Hoeffding distance \cite{HM7, N06, OH04}. The type II errors in these two cases of asymmetric hypothesis testing, are often referred to as the Stein error and the Hoeffding error, respectively.
   
The consideration of the asymptotic i.i.d.~setting in quantum hypothesis testing is, however, of little practical relevance, since in a realistic scenario only finitely many copies ($n$) of a state are available. More generally, one can even consider the hypothesis testing problem involving a {\em{finite}} sequence of states $\{\omega_n\}_n$, where for each $n$, $\omega_n$ is one of two states $\rho_n$ or $\sigma_n$, which need not be of the i.i.d.~form: $\rho_n= \rho^{\otimes n}$ and $\sigma_n=\sigma^{\otimes n}$. 
We refer to the hypothesis testing problem in these non-asymptotic scenarios as {\em{finite blocklength quantum hypothesis testing}}, the name ``finite blocklength'' referring to the finite value of $n$. Finding bounds on the error probabilities in these scenarios is an important problem in quantum statistics and
quantum information theory. To our knowledge, this problem has been studied thus far only by Audenaert, Mosonyi and Verstraete \cite{AMV12}. They obtained bounds for both the symmetric and asymmetric cases mentioned above, in the non-asymptotic but i.i.d.~scenario.

	In this paper we focus on asymmetric, finite blocklength quantum hypothesis testing, and find improved upper bounds on the Stein- and Hoeffding errors, in comparison to those obtained
in \cite{AMV12} in the i.i.d. setting. We also find upper bounds on the same quantities in the case of non i.i.d. states satisfying a factorization property. Our framework can also be applied to the analysis of the case where the error of type I converges sub-exponentially with a rate given by means of a moderate sequence, extending the results recently found in \cite{HC17,CTT17}. Finally, we apply our results to the problem of finding bounds on capacities of a certain type of classical-quantum channels with memory, both in the finite blocklength case and in the asymptotic framework of moderate deviations.\\\\
In the case of uncorrelated states, we obtain our results by use of {\em{martingale concentration inequalities}}. Concentration inequalities deal with deviations of functions
of independent random variables from their expectation, and provide upper bounds on tail probabilities of the type $\PP(|X-\EE[X]|\ge t)$ which are exponential in $t$; here $X$ denotes a random variable which is a function of independent random variables. These simple and yet powerful inequalities have turned out to be very useful in the analysis of various problems in different branches of mathematics, such as pure and applied probability theory (random matrices, Markov processes, random graphs, percolation), information theory, statistics, convex geometry, functional analysis and machine learning. Concentration inequalities have been established using a host of different methods. These include martingale methods, information-theoretic methods, the so-called ``entropy method'' based on logarithmic Sobolev inequalities, the decoupling method, Talagrand's induction method etc. (see e.g. \cite{RS13,BLM13} and references therein). In this paper, we apply two inequalities, namely the Azuma-Hoeffding inequality \cite{H63,A67} and the Kearns-Saul inequality \cite{KS98} (which have been established using martingale methods and hence fall in the class of so-called {\em{martingale concentration inequalities}}) to quantum hypothesis testing in the i.i.d.~setting. Moreover, the proofs of the results we obtain in the case of quantum hypothesis testing for correlated states are reminiscent of this framework. We include a brief review of martingales and these inequalities in \Cref{preliminaries}. 
To our knowledge, martingale concentration inequalities have had rather limited applications in quantum information theory thus far (see e.g.~\cite{EW16,IK10}). We hope that our use of these inequalities in finite blocklength and moderate deviation analyses of quantum hypothesis testing will lead to further applications of them in studying quantum information theoretic problems.

	\subsection*{Quantum Stein's lemma and its refinements}

	Consider the quantum hypothesis testing problem in which the state $\omega_n$ which is received is either $\rho_n $ or $\sigma_n$, the latter being 
states on a finite-dimensional Hilbert space $\cH_n$. The type I and type II errors for a given test $T_n$ (where $0 \le T_n \le \mathbb{I}_n$ and $\mathbb{I}_n$ is the 
identity operator on $\cH_n$), are given by
\begin{align}\label{error-probs}
\alpha(T_n)= \tr[(\mathbb{I}_n - T_n)\rho_n]\text{ and }\beta(T_n) = \tr[T_n \sigma_n].
\end{align}
As mentioned in the Introduction, in the asymmetric setting, one usually optimizes the type II error $\beta(T_n)$ under one of the following constraints 
on the type I error $\alpha(T_n) $: 
$(i)$  $\alpha(T_n) $ is less than or equal to a fixed threshold value $\eps \in (0,1)$ or $(ii)$  $\alpha(T_n) $ satisfies an exponential constraint  $\alpha(T_n) \le e^{-nr}$, for some fixed parameter $r >0$. The optimal type II errors are then given by the following expressions, respectively:
\begin{align}
\beta_n(\eps)&:=\inf_{0 \leq T_n \leq \mathbb{I}_n}\{\beta(T_n)|\alpha(T_n)\le \eps\}\label{qSteinII}\\
\tilde{ \beta}_n(r)&:=\inf_{0 \leq T_n \leq \mathbb{I}_n}\{\beta(T_n)|\alpha(T_n)\le \e^{-nr}\}\label{qHoeffII}.
\end{align} 
We refer to $\beta_n(\eps)$ as the \textit{type II error of the Stein type} (or simply the {\em{Stein error}}), and we refer to $\tilde{ \beta}_n(r)$ as the \textit{type II error of the Hoeffding type} (or simply the {\em{Hoeffding error}}).

In the i.i.d.~setting,  $\rho_n:=\rho^{\otimes n} $  and $\sigma_n:=\sigma^{\otimes n}$, with $\rho$ and $\sigma$ being states on a finite-dimensional Hilbert space $\cH$, and $\cH_n \simeq \cH^{\otimes n}$.  Explicit expressions
of the type II errors defined in \Cref{qSteinII} and \Cref{qHoeffII} are not known even in this simple setting. However, their behaviour in the asymptotic limit ($n \to \infty$) is known. 
The asymptotic behaviour of $\beta_n(\eps)$ is given by the well-known quantum Stein lemma \cite{HP91,ON00}: 
	\begin{align*} \lim_{n \to \infty}- \frac{1}{n} \log \beta_n(\eps)& =D(\rho||\sigma) \quad \forall \, \eps \in (0,1),
\end{align*} 
	where $D(\rho||\sigma)$ denotes the quantum relative entropy defined in \Cref{qrelent}.

The asymptotic behaviour of $\tilde{ \beta}_n(r)$ is given in terms of the so-called Hoeffding distance: For any $r>0$,
	\begin{align*} \lim_{n \to \infty}- \frac{1}{n} \log \tilde{\beta}_n(r)& = H_r(\rho||\sigma):= - \inf_{0\le t <1} \left\{ \frac{ t r + \log \left(\tr (\rho^t \sigma^{1-t}) \right)}{1-t}\right\}.
\end{align*} 
The problem of finding error exponents can be mapped (in the i.i.d.~case) to the problem of characterizing the probability that a sum of $n$ i.i.d.~random variables makes an order-$n$ deviation from its mean, which is the subject of \textit{large deviations} and Cram\'{e}r's theorem. In fact, it is known that in the context of Stein's lemma, allowing the error of type II to decay exponentially, with a rate smaller than Stein's exponent, $D(\rho\|\sigma)$, the error of type I decays exponentially, with a rate given by
\begin{align*}
	\sup_{0<\alpha<1}\frac{\alpha -1}{\alpha}[r-D_\alpha(\rho\|\sigma)],
\end{align*}
where $D_{\alpha}(\rho\|\sigma)$ is the so-called $\alpha$-R\'{e}nyi divergence:
\begin{align*}
	D_\alpha(\rho\|\sigma)=\frac{1}{\alpha-1}\log\operatorname{Tr}(\rho^\alpha\sigma^{1-\alpha}).
\end{align*}
If instead the error of type II is restricted to decay exponentially with a rate greater than Stein's exponent, the error of type I converge exponentially to $1$, with a rate given by
\begin{align*}
	\sup_{1<\alpha}\frac{\alpha-1}{\alpha}[r-D^*_\alpha(\rho\|\sigma)],
\end{align*}
where $D_\alpha^*(\rho\|\sigma)$ is the so-called Sandwiched $\alpha$-R\'{e}nyi divergence:
\begin{align*}
	D_\alpha^*(\rho\|\sigma)	 =\frac{1}{\alpha-1}\log\operatorname{Tr}(\rho^{1/2}\sigma^{\frac{1-\alpha}{\alpha}}\rho^{1/2})^\alpha.
\end{align*}
These phenomena are the manifestation of a coarse-grained analysis.\\\\
A more refined analysis of the {\em{type~II error exponent}}, $(-\log \beta_n(\eps))$, is given by its {\em{second order asymptotic expansion}}, which was derived independently by Li \cite{L14}, and Tomamichel and Hayashi \cite{TH13}. It can be expressed as follows:
	\begin{align}\label{soa}
	\frac{1}{n}\log\beta_n(\eps)=-\,D(\rho||\sigma)+\frac{s_1(\eps)}{\sqrt{n}}+\mathcal{O}\left(\frac{\log n}{n}\right),
	\end{align}
	where the second-order coefficient $s_1(\eps)$ displays a Gaussian behaviour given by
	\begin{align}\label{s1s1}
		s_1(\eps):=-\Phi^{-1}(\eps)\sqrt{V(\rho\|\sigma)}.
		\end{align}
	Here $\Phi$ denotes the cumulative distribution function (c.d.f.)~of a standard normal distribution, and $V(\rho||\sigma)$ is called the {\em{quantum information variance}} and is defined in \Cref{V}. Both first order and second order asymptotics of the type~II error exponent have been generalized to contexts beyond the i.i.d.~setting under different conditions on the states $\rho_n$ and $\sigma_n$ (see e.g.~\cite{DPR16} and references therein).  The problem of finding second order asymptotic expansions can actually be mapped (in the i.i.d.~case) to the one of characterizing the probability that a sum of i.i.d.~random variables makes an order-$\sqrt{n}$ deviation from its mean, which is the subject of \textit{small deviations} and the Central Limit- and Berry Esseen Theorems. \\\\
%	In practice, the error of type II is here allowed to decay sub-exponentially with a sub-exponential term given as a function of $\eps$.\\\\
	Quantum Stein's lemma and second order asymptotics both deal with the convergence of the type II error when the type I error is assumed to be smaller than a pre-fixed constant threshold value $\eps$. However, as mentioned above, imposing the error of type II to decay exponentially with a rate smaller than Stein's rate implies that the error of type I itself decays exponentially. In this paper we carry out a `hybrid analysis' in which we allow the error of type I to decay sub-exponentially with $n$, the error exponent taking the form $\eps_n:=\exp(n a_n^2)$, with $\{a_n\}_{n\in\NN}$ being a so-called moderate sequence\footnote{Such a sequence has the property $a_n\to0$, but $\sqrt{n} a_n\to\infty$, as $n\to \infty$.}. As shown in \cite{HC17,CTT17}, this problem can be mapped (in the i.i.d.~case) into the problem of characterizing the probability that a sum of i.i.d.~random variables makes an order-$a_n$ deviation from its mean. This is the subject ot \textit{moderate deviations}, hence justifying the name `hybrid analysis'. Note that even the classical counterpart of this analysis was done relatively recently, see e.g.~\cite{A14,PV10,S12}.
	\\\\
	Large, moderate and small deviations belong to the asymptotic setting. On the other hand, relatively little is known, about the behaviour of the type II errors ${ \beta}_n(\eps)$ and $\tilde{ \beta}_n(r)$ (for some $r>0$) in the case of finite blocklength, i.e.~for a fixed, finite value of $n$. As mentioned earlier, Audenaert, Mosonyi and Verstraete \cite{AMV12} considered the i.i.d.~case and derived bounds on the quantities $\beta_n(\eps)$ and $\tilde{\beta}_n(r)$ in the asymmetric setting, as well as bounds on the corresponding quantity in the symmetric setting. For example, their bounds on $\beta_n(\eps)$ (see Theorem 3.3 and Equation (35) of \cite{AMV12}) can be expressed as follows:
\begin{align}\label{boundss}
	-f(\eps) \le Q(n,\eps)	\le g(\eps)
	\end{align}	
	where 
\begin{align}\label{qne}
Q(n,\eps) := \frac{1}{\sqrt{n}}\left(\log{\beta_{n}(\eps)} + nD(\rho\|\sigma)\right),
\end{align}
and 
	\begin{align}\label{fgg}
		f(\eps)= 4\sqrt{2}\log\eta \log(1-\eps)^{-1},\qquad g(\eps)=4\sqrt{2} \log \eta \log\eps^{-1},
		\end{align}
		with $\eta:=1+\e^{1/2D_{3/2}(\rho\|\sigma)}+\e^{-1/2D_{1/2}(\rho\|\sigma)}$.

	\subsection*{Our contribution}
	\subsubsection*{Quantum hypothesis testing for uncorrelated and correlated states}
	In this paper, we obtain upper bounds on the optimal type II errors (namely, the Stein and Hoeffding errors) for finite blocklength quantum hypothesis testing, in the case in 
which the received state $\omega_n$ is one of two states $\rho_n$ and $\sigma_n$, where $\rho_n$ and $\sigma_n$ are each given by tensor products of $n$ (not necessarily identical)
states, and hence also for i.i.d.~states. We also derive similar bounds when $\rho_n$ and $\sigma_n$ satisfy the following \textit{upper-factorization} property:
	\[\rho_n\le  R ~\rho_{n-1}\otimes \rho_1,~~~~\sigma_n\le S~\sigma_{n-1}\otimes \sigma_1, ~~~~~\text{ for some }~~ ~~R,S\ge 1.
	\]
This is for example the case of Gibbs states of spin chains with translation-invariant finite-range interactions,
or finitely correlated states (see \cite{FNW92,HMO07}). This class of states was studied in \cite{HMO08b} in the asymptotic framework of Stein's lemma (see also \cite{MO15}). We also consider the case of states satisfying a so-called \textit{lower-factorization property}:
	\[\rho_n\ge  R^{-1} ~\rho_{n-1}\otimes \rho_1,~~~~\sigma_n\ge S^{-1}~\sigma_{n-1}\otimes \sigma_1, ~~~~~\text{ for some }~~ ~~R,S> 1.
	\]
	Gibbs states mentioned above, i.i.d. states and certain classes of finitely correlated states, satisfy both these factorization properties.\\\\
	In the i.i.d.~case, the upper bounds that we derive for the finite blocklength regime are tighter than the ones derived in \cite{AMV12}, for all values of the parameter $\eps$ up to a threshold value (which depends on $\rho$ and $\sigma$). We also extend the recent results of \cite{HC17,CTT17}, in the moderate deviation regime, to the case of such correlated states.%	For example, we obtain the following bound on the Stein error for a given $\rho$ and $\sigma$:
\subsubsection*{{Application to classical-quantum channels}}
Quantum hypothesis testing is one of the fundamental building blocks of quantum information theory since it underlies various other informetion-theoretic tasks.
An important example of such a task is the transmission of classical information through a quantum channel. In particular, it is well-known that the analysis of information transmission through a so-called {\em{classical-quantum (c-q) channel}}\footnote{This amounts to the transmission of classical information through a quantum channel, under the restriction of the encodings being product states.} can be reduced to a hypothesis testing problem. Hence our above results on quantum hypothesis testing can be applied to find bounds on the optimal rates of transmission of information through c-q channels, both in the finite blocklength- and the moderate deviations regime. Most notably, our results on hypothesis testing of correlated quantum states (satisfying the factorization properties mentioned above) allow us to analyze the problem of information transmission through a class of c-q channels with memory. The latter are channels whose output states satisfy a non-homogeneous factorization property (see \Cref{channelcoding} below for details). We say that such channels satisfy a {\em{channel factorization property}}.
\subsection*{Layout of the paper} In \Cref{preliminaries}, we introduce the necessary notations and definitions, including the two key tools that we use, namely, relative modular operators and martingale concentration inequalities. The finite blocklength analysis of hypothesis testing for uncorrelated quantum states is done in \Cref{sec_main} (see \Cref{uncorcase,theorem6}). The bounds that we obtain are compared with previously known finite blocklength- \cite{AMV12} and second order asymptotic \cite{L14,TH13} bounds (see \Cref{fig1}). Our finite blocklength results on correlated states, introduced in \Cref{finitecorrelated}, are given by \Cref{ahi} and \Cref{cor2} of \Cref{finitecorrelated}. Moderate deviation analysis of such states is done in \Cref{moderatedev} (see \Cref{moderate} and \Cref{moderatenh}). Our results are applied to classical-quantum channels with memory in \Cref{channelcoding} (see \Cref{prop,eq29}).

	\section{Notations and Definitions}\label{preliminaries}
\subsection*{Operators, states and relative modular operators}

	Given a finite-dimensional Hilbert space $\cH$, let $\cB(\cH)$ denote the algebra of linear operators acting on $\cH$ and $\cB_{sa}(\cH) \subset \cB(\cH)$ denote the set of self-adjoint operators. Let $\cP(\cH)$ be the set of positive semi-definite operators on $\cH$ and $\cP_{+}(\cH) \subset  \cP(\cH)$ the set of (strictly) positive operators. Further, let $\cD(\cH):=\lbrace\rho\in\cP(\cH)\mid \tr\rho=1\rbrace$ denote the set of density matrices (or states) on $\cH$. We denote the support of an operator $A$ by ${\mathrm{supp}}(A)$ and the range of a projection operator $P$ as ${\mathrm{ran}}(P)$. Let $\mathbb{I}\in\cP(\cH)$ denote the identity operator on $\cH$, and $\id:\cB(\cH)\mapsto \cB(\cH)$ the identity map on operators on~$\cH$.
	Any element $A$ of $\cB_{sa}(\cH)$ has a \textit{spectral decomposition} of the form
	$A = \sum_{\lambda \in \spec(A)} \lambda \,P_\lambda(A),$ where $\spec(A)$ denotes the spectrum of $A$, and $P_\lambda(A)$ is the projection operator corresponding to the eigenvalue $\lambda$. For two superoperators $\Phi_1$ and $\Phi_2$, we denote their composition $\Phi_1\circ\Phi_2$ by $\Phi_1\Phi_2$. We recall that given two $C^*$ algebras of operators $\cA$ and $\cB$, an \textit{operator concave} function $f:\RR\to\RR$ is such that for any two self-adjoint operators $A_1,A_2\in\cA$ and any $\lambda\in [0,1]$:
	\begin{align*}
		f(\lambda A_1+(1-\lambda)A_2)\ge \lambda f(A_1)+(1-\lambda)f(A_2).
		\end{align*}
	A function $f:\RR\to \RR$ is \textit{operator convex} if $-f$ is operator concave. The following  operator generalization of Jensen's inequality will turn out very useful:

		\begin{theorem}[Operator Jensen inequality, see \cite{D57,HP81}]\label{Jensen}
			Let $\mathcal{A}$ and $\mathcal{B}$ be two $C^*$-algebras, and $v:\mathcal{B}\to \mathcal{A}$ a contraction. Then for any operator concave function $f$ on $(0,\infty)$, and any positive element $a\in\mathcal{A}$,
			\begin{align*}
				f(v^*av)\ge v^*(f(a))v, ~~~~~~~~~~~~~~~~~~ \forall a\ge 0.
			\end{align*}	
		\end{theorem}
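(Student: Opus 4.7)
The strategy is to reduce the general contraction case to the isometry case via a standard $2\times 2$ block dilation, and then to handle the isometry case using the integral representation of operator concave functions.

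\emph{Reduction to isometries.} Given a contraction $v:\mathcal{B}\to\mathcal{A}$, so $v^*v\le \mathbb{I}_{\mathcal{B}}$, form the Halmos-type block isometry
\[
\tilde v := \begin{pmatrix} v \\ (\mathbb{I}_{\mathcal{B}} - v^*v)^{1/2} \end{pmatrix} : \mathcal{B}\longrightarrow \mathcal{A}\oplus\mathcal{B},\qquad \tilde v^*\tilde v = \mathbb{I}_{\mathcal{B}},
\]
and for a parameter $c>0$ lying in the domain of $f$ set $\tilde a := \diag(a,\, c\,\mathbb{I}_{\mathcal{B}})$. A direct computation gives
\[
\tilde v^*\tilde a\,\tilde v = v^*av + c(\mathbb{I}_{\mathcal{B}} - v^*v),\qquad \tilde v^* f(\tilde a)\,\tilde v = v^* f(a)\,v + f(c)(\mathbb{I}_{\mathcal{B}} - v^*v).
\]
Applying the isometry version of the inequality (established below) to $\tilde v$ and $\tilde a$ yields
\[
f\bigl(v^*av + c(\mathbb{I}_{\mathcal{B}} - v^*v)\bigr) \ge v^* f(a)\,v + f(c)(\mathbb{I}_{\mathcal{B}} - v^*v).
\]
The desired inequality $f(v^*av)\ge v^* f(a) v$ is then obtained by sending $c\to 0^+$, after a preliminary affine renormalization of $f$ so that $f(0^+)=0$ (admissible since Jensen is saturated with equality for affine functions).

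\emph{The isometry case.} For $v$ with $v^*v=\mathbb{I}_{\mathcal{B}}$, I would invoke the Löwner integral representation of operator concave functions on $(0,\infty)$, which writes any such $f$, up to an affine summand, as a non-negative superposition of the elementary building blocks $f_\lambda(x):= x/(x+\lambda)$ for $\lambda>0$. Since the inequality is linear in $f$ and respects non-negative combinations, it suffices to verify it for each $f_\lambda$ separately. Writing $f_\lambda(x) = 1 - \lambda(x+\lambda)^{-1}$ and using $v^*v=\mathbb{I}_{\mathcal{B}}$, the claim $f_\lambda(v^*av)\ge v^* f_\lambda(a) v$ reduces to
\[
(v^*av + \lambda\mathbb{I}_{\mathcal{B}})^{-1} \le v^*(a+\lambda\mathbb{I}_{\mathcal{A}})^{-1}v,
\]
i.e.\ to the operator convexity of $X\mapsto X^{-1}$ on strictly positive operators (applied to $a+\lambda\mathbb{I}_{\mathcal{A}}$). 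This last inequality follows from an elementary $2\times 2$ Schur-complement computation, and so closes the isometry case.

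\emph{Main obstacle.} The principal delicate point is the boundary behaviour of $f$ at $0$: operator concave functions on $(0,\infty)$ need not extend continuously or even finitely to $0$ (for instance $f(x)=\log x$). Consequently the limit $c\to 0^+$ in the reduction step requires genuine care whenever $v$ is a strict contraction, i.e.\ $\mathbb{I}_{\mathcal{B}}-v^*v\ne 0$. The cleanest route is to first normalize $f$ by an affine correction so that $f(0^+)=0$, and then exploit the fact that on $(0,\infty)$ operator concavity implies operator monotonicity (Löwner), which gives monotonic control on the $c$-dependence and justifies the passage to the limit. Apart from this limiting argument, the only technical point is the Schur-complement step underlying the inversion inequality, which is elementary but should be written out explicitly.
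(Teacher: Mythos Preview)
The paper does not actually prove this theorem: it is stated with a citation to Davis \cite{D57} and Hansen--Pedersen \cite{HP81} and then used as a black box (in the proof of \Cref{ahi} and \Cref{moderate}). So there is no ``paper's own proof'' to compare against.

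Your sketch is essentially the classical Hansen--Pedersen argument: dilate a contraction to an isometry via a $2\times 2$ block, and handle the isometry case through the L\"owner integral representation, reducing to the operator convexity of $x\mapsto x^{-1}$. That is the standard route and is sound in outline.

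One genuine error, however: you assert that ``on $(0,\infty)$ operator concavity implies operator monotonicity (L\"owner)'', and you rely on this to control the $c\to 0^+$ limit. That implication is false in this direction. For instance $f(x)=-x^2$ is operator concave on $(0,\infty)$ (since $x\mapsto x^2$ is operator convex) but is strictly decreasing. The correct relation goes the other way: operator monotone functions on $(0,\infty)$ are operator concave, not conversely. The Hansen--Pedersen result for contractions in fact requires the side condition $f(0)\ge 0$ (equivalently, for convex $f$, $f(0)\le 0$); without it the inequality can fail for strict contractions. Your affine renormalization ``so that $f(0^+)=0$'' is also not available when $f(0^+)=-\infty$ (e.g.\ $f=\log$). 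None of this affects the paper's applications, which only use $f(x)=x^t$ with $t\in[0,1]$ and $f(x)=-x^{-\lambda}$, both of which satisfy the needed boundary condition; but your limiting step as written needs to be repaired, either by imposing $f(0^+)\ge 0$ from the outset or by arguing the $c\to 0^+$ passage directly from the integral representation rather than via a spurious monotonicity claim.
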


We use the framework of {\em{relative modular operators}} in our proofs and intermediate results. Relative modular operators were introduced originally by Araki. He used them to extend the notion of relative entropy to pairs of arbitrary states on a C*-algebra (see \cite{Araki76, Araki77, OP}). The relation between relative modular operators and R\'{e}nyi divergences was studied by Petz (see  \cite{Petz1985} and \cite{Petz1986}). Below we briefly recall the definition and basic properties of relative modular operators in the finite-dimensional setting. For more details see e.g.~\cite{DPR16, Jaksicetal}.

\subsubsection*{Relative Modular Operators}

	To define relative modular operators on a finite-dimensional operator algebra $\cB(\cH)$, we start by equipping $\cA\equiv \cB(\cH)$ with a Hilbert space structure through the Hilbert-Schmidt scalar product, which for $A,B \in \cA$ is given by $\langle A,B\rangle := \tr (A^* B)$. We define a map $\pi:\cB(\cH)\to \cB(\cA)$ by $\pi(A):X\mapsto AX$, i.e.~$\pi(A)$ is the map acting on $\cA$ by left multiplication by $A$. The map $\pi$ is linear, one-to-one and has in addition the properties $\pi(AB)=\pi(A)\pi(B)$, and $\pi(A^*)=\pi(A)^*$, where $\pi(A)^*$ denotes the adjoint of the map $\pi(A)$ defined through the 
	relation $\langle X,\pi(A)(Y)\rangle = \langle \pi(A)^*X,Y\rangle$. The following identity between operator norms holds: $\|\pi(A)\|_{\cB(\cA)}=\|A\|_{\cB(\cH)}$. Due to this identity, and the fact that $\pi(A)X=AX$, we identify $A$ with $\pi(A)$ and simply write~$A$ for $\pi(A)$ (even though $\pi(A)$ is a linear map on $\cA$, and $A$ is not!).
	
	For any $\rho\in \cD(\cH)$, we denote $\Omega_\rho:=\rho^{1/2}\in \cB_{sa}(\cH)$. We then have the identity
	\begin{equation}
	\tr(\rho A) = \langle \Omega_\rho, A \Omega_\rho \rangle \qquad \mbox{for all }A\in\cA,\label{eq_GNS}
	\end{equation}
	where the right-hand side of the above identity should be understood as $\langle \Omega_\rho, \pi(A) \Omega_\rho \rangle$. \Cref{eq_GNS} is nothing but a simple case of the so-called GNS representation (see e.g. Section 2.3.3 of \cite{BR1}).\\\\
	For simplicity of exposition, in this paper, we only consider faithful states, i.e.~states~$\rho$ for which $\supp(\rho)=\cH$. Hence, for any pairs of states $\rho, \sigma$, we have $\supp(\rho)=\supp(\sigma)$. We then define the \textit{relative modular operator} $\Delta_{\sigma|\rho}$ to be the map
	\begin{equation}\label{eq_defDelta}
	\begin{array}{cccc}
	\Delta_{\sigma|\rho}: & \cA & \to & \hspace{-1em} \cA \\
	& A & \mapsto & \sigma A \rho^{-1}
	\end{array}
	\end{equation}
	Note that \eqref{eq_defDelta} defines $\Delta_{\sigma|\rho}$ not only for $\rho,\sigma \in \cD(\cH)$, but for any $\rho, \sigma \in \cP_{+}(\cH)$.\\\\	
	As a linear operator on $\mathcal B(\cH)$, $\Delta_{\sigma|\rho}$ is positive and its spectrum $\spec(\Delta_{\sigma|\rho})$ consists of the ratios of eigenvalues $\mu/\lambda$, $ \lambda \in \spec(\rho)$, 
	$ \mu \in \spec(\sigma)$. For any $x\in \spec (\Delta_{\sigma|\rho})$, the corresponding spectral projection is the map
	\begin{equation} \label{eq_specprojDelta}
	\begin{array}{cccc}
	P_{x}(\Delta_{\sigma|\rho}): & \cA & \to & \hspace{-1em} \cA \\
	& A & \mapsto & \underset{\lambda\in \spec(\rho),~\mu\in\spec(\sigma):~\mu/\lambda=x}{\sum}P_\mu(\sigma) A  P_\lambda(\rho).
	\end{array}
	\end{equation}
%	Further, for any $s \in [0,1]$,
%	$\Delta_{\rho|\sigma}^s(A) = \rho^s A \sigma^{-s}$. For any $\rho, \sigma \in \cP_{+}(\cH)$ and $s \in [0,1]$, the quantity $\Psi_s(\rho|\sigma)$ can be expressed in terms of the 
%	relative modular operator $\Delta_{\rho|\sigma}$ as follows
%	\be\label{phi-s-inner}
%	{\Psi_s(\rho|\sigma)} =  \log \langle\Omega_\sigma, \Delta_{\rho|\sigma}^s\Omega_\sigma \rangle, \qquad \text{where }\Omega_{\sigma}:=\sigma^{1/2}.
%	\ee
	By von Neumann's Spectral Theorem (see e.g. Sections VII and VIII of \cite{RS1}) one can associate a classical random variable $X$ to any pair $({\Lambda}, \Omega)$, where $\Lambda$ is a map
$\Lambda : \cA \to \cA$ and $\Omega \in \cB_{sa}(\cH)$, such that for any bounded measurable function $f$, 
\begin{align*}
\langle\Omega, f(\Lambda)\Omega\rangle = {\mathbb{E}}[f(X)] \equiv \int\! f(x) \,\d\mu(x).
\end{align*}
Here $\mu$ denotes the law of $X$ and is referred to as the {\em{spectral measure}} of $\Lambda$ with respect to $\Omega$. For the choice $\Lambda =  \log \Delta_{\sigma|\rho}$ and $\Omega = \Omega_\rho \equiv \rho^{1/2}$,
this yields
	\begin{align} \label{spec-meas}
	\langle\Omega_\rho, f(\log \Delta_{\sigma|\rho})\Omega_\rho\rangle = \int\! f(x) \,\d\mu_{\sigma|\rho}(x) \equiv {\mathbb{E}}[f(X)],
	\end{align}
	where $X$ is a random variable of law $\mu \equiv \mu_{\sigma|\rho}$.
%	We then have in particular
%	\begin{align} \label{eq_fundamental}
%	\Psi_s(\rho|\sigma) = \log \left(\int e^{-sx}\,\d\mu_{\rho|\sigma}(x)\right) \equiv \log {\mathbb{E}}[e^{-sX}]
%	.
%	\end{align}
The relation \eqref{spec-meas} 
	plays a key role in our proofs since it allows us to express the error probabilities of asymmetric hypothesis testing in terms of probability distributions of a classical random variable, 
and therefore allows us to employ the tools of classical probability theory in our analysis. %Another way to associate classical probability distributions to a pair of quantum states is via the so-called Nussbaum-Szko\l a distributions (see~\cite{NussbaumSzkola}).
Taking $f$ to be the identity function, we get:
	\begin{align}
		\EE[X]=\langle \Omega_\rho,\log\Delta_{\sigma|\rho}\Omega_\rho\rangle=- D(\rho\|\sigma),
	\label{2.7}
\end{align}	
where
	\begin{align}
		D(\rho\|\sigma):=\tr\rho(\log\rho-\log\sigma)\label{qrelent}
		\end{align}
is the \textit{quantum relative entropy} of $\rho$ with respect to $\sigma$.
The last identity in \Cref{2.7} can be verified easily by direct computation. Similarly, by taking $f$ to be the square function, one can verify that 
	\begin{align}\label{VV}
		\EE[X^2]-\EE[X]^2 =\langle \Omega_\rho,(\log\Delta_{\sigma|\rho})^2\Omega_\rho\rangle-D(\rho\|\sigma)^2\equiv V(\rho\|\sigma),
	\end{align}	
	where $V(\rho\|\sigma)$ is called the \textit{quantum information variance} and is defined as follows:
	\begin{align}\label{V}
		V(\rho\|\sigma):=\tr\rho(\log\rho-\log\sigma)^2-D(\rho\|\sigma)^2.
		\end{align}

		\subsection*{Conditional expectations and discrete-time martingales}\label{mart}
		 A discrete-time martingale is a sequence of random variables for which, at a particular time in the realized sequence, the expectation of the next value in the sequence is equal to the present observed value, given the knowledge of all prior observed values. More precisely, it is defined as follows. Let $(E,\cF,\PP)$ be a probability space, where $E$ is a set, $\cF$ is a $\sigma$-algebra on $E$ (which is a set of subsets of $E$ containing the empty set, and closed under the operations of taking the complement and discrete unions), and $\PP$ is a probability measure on $\cF$. In the case of a finite set $E$, $\cF$ is usually the set $2^E$ of all the subsets of $\Omega$. Given a measurable space $(E,\cF)$, a \textit{filtration} $\{\cF_n\}_{n\in \NN\cup \{0\}}$ is a sequence of $\sigma$-algebras such that 
		 	\begin{align*}
		 		\cF_0\subseteq \cF_1\subseteq\dots\subseteq \cF_n\dots\subseteq \cF.
		 		\end{align*}
		 		Given a sequence of random variables $\{X_n\}_{n\in\NN\cup\{0\}}$, we denote by $\sigma(X_1,\dots,X_n)$ the smallest $\sigma$-algebra on which the random variables $X_1,\dots,X_n$ are measurable, and call $$\{\sigma(X_1,\dots,X_n)\}_{n\in\NN\cup\{0\}}$$ the  \textit{natural filtration} of $\{X_n\}_{n\in\NN\cup\{0\}}$. More generally a filtration $\{\cF_n\}_{n\in\NN\cup\{0\}}$ is said to be \textit{adapted} to a sequence of random variables $\{X_n\}_{n\in\NN\cup\{0\}}$ if for each $n$, $X_n$ is $\cF_n$-measurable. For a given $\sigma$-algebra $\cG$ on a discrete space, a random variable $X$ is $\cG$-measurable if it can be written as
		 		\begin{align*}
		 			X(\omega)=\sum_{k\in I} x_k\mathbf{1}_{B_k}(\omega),~~~\omega\in\Omega,
		 			\end{align*}
		 			where $I$ is an index set, and $\{B_k\}_{k\in I}$ is a family of disjoint subsets of $\cG$. 
		 		\\\\Consider a sub-$\sigma$-algebra $\cG$ of $\cF$ and an $\cF$-measurable integrable real-valued random variable $X:\Omega\to \RR$, i.e.
		 \begin{align*}
		 	\EE[|X|]:= \int_E |X(\omega)|~\text{d}\PP(\omega)<\infty.
		 \end{align*}	
		 Then the conditional expectation of $X$ with respect to $\cG$ is defined as the almost surely unique (i.e.~up to a set of measure zero) integrable $\cG$-measurable real random variable $Y:=\EE[X|\cG]:E\to \RR$ such that for any other bounded $\cG$-measurable random variable $Z$:
		 \begin{align*}
		 	\EE[ZY]=\EE[ZX].
		 \end{align*}	 
		 In the case of a discrete probability space, the conditional expectation can be expressed as follows: pick any generating family $\{A_k\}_{k\in J}$ of disjoint subsets of $\cG$, with $J$ denoting an index set. Then
		 \begin{align}\label{sumrv}
		 	\EE[X|\cG](\omega)=\sum_{k\in J}  \frac{\EE[X \mathbf{1}_{A_k}]}{\PP(A_k)}~ \mathbf{1}_{A_k}(\omega),\qquad \omega\in E.
 		 \end{align}	
		  The conditional expectation is a linear operation. Moreover, it is easy to verify from \Cref{sumrv} that for any integrable random variable $X$ and sub-$\sigma$-algebra $\cG$, 
		 \begin{align}\label{indmeas}
		 	\EE[X|\cG]=\left\{\begin{aligned}
		 		&X~~~~~~\text{ if $X$ is $\cG$-measurable}\\
		 		& \EE[X] ~~\text{ if $X$ is independent of $\cG$.}
		 		\end{aligned}\right.
		 \end{align}	
		  Let $\{\cF_n\}_{n\in\NN\cup \{0\}}$ be a filtration of $\cF$ and suppose we are given a sequence of real-valued random variables $\{X_n\}_{n\in \NN\cup \{0\}}$ such that for each $n$, $X_n$ is integrable and $\cF_n$-measurable. Then $\{X_n,\cF_n\}_{n\in \NN\cup \{0\}}$ is said to be a \textit{martingale} if for each $n\in\NN\cup \{0\}$,\footnote{a.s.= almost surely}
		 \begin{align*}
		 	\EE[X_{n+1}|\cF_n]=X_n \text{ a.s.}
		 \end{align*}	
		 Similarly, $\{X_n,\cF_n\}_{n\in \NN\cup \{0\}}$ is said to be a \textit{super-martingale} if for each $n\in\NN\cup \{0\}$,
		 \begin{align*}
		 	\EE[X_{n+1}|\cF_n]\le X_n \text{ a.s.}
		 \end{align*}	
		 and a \textit{sub-martingale} if for each $n\in\NN\cup \{0\}$,
		 \begin{align*}
		 	\EE[X_{n+1}|\cF_n]\ge X_n \text{ a.s.}
		 \end{align*}	
	\begin{example}\label{randomwalk}	 Perhaps the simplest example of a martingale is the sum of independent integrable centered random variables. Indeed, let $\{X_n\}_{n\in\NN}$ be such a sequence, $\{\cF_n:=\sigma(X_1,\dots,X_n)\}_{n\in \NN}$ its natural filtration and define $Y_n=\sum_{k=1}^n X_k$. Then
		 \begin{align*}
		 	\EE[Y_{n+1}|\cF_n]&=\sum_{k=1}^{n+1}\EE[X_k|\cF_n]=\sum_{k=1}^{n}\EE[X_k|\cF_n]+\EE[X_{n+1}|\cF_n]\\
	&=\sum_{k=1}^n X_k + \EE[X_{n+1}]= Y_n+0=Y_n,
		 	\end{align*}
		 	where in the first line we used the linearity of the conditional expectation, and in line two we used both identities of \Cref{indmeas}. Therefore $\{Y_n,\cF_n\}_{n\in\NN\cup \{0\}}$ is a martingale, where $Y_0=0$.
		 	\end{example}
	\subsection*{Martingale concentration inequalities} \label{martconc}
 Roughly speaking, the concentration of measure phenomenon can be stated in the following way \cite{TT96}: ``A random variable that depends in a smooth way on many independent random variables (but not too much on any of them) is essentially constant''. This means that such a random variable, $X$, concentrates around its mean (or median) in a way that the probability of the event $\{|X-\EE[X]|>t\}$ decays exponentially in $t\ge0$. For more details on the theory of concentration of measure see \cite{L05}.\\\\
Several techniques have been developed so far to prove concentration inequalities. The method that we focus on here is the martingale approach (see e.g. \cite{BLM13}, \cite{RS13} Chapter 2 and references therein).
The Azuma-Hoeffding inequality has been often used to prove concentration phenomena for discrete-time martingales whose jumps are almost surely bounded. Hoeffding \cite{H63} proved this inequality for a sum of independent and bounded random variables, and Azuma \cite{A67} later extended it to martingales with bounded differences.

	\begin{theorem}[Azuma-Hoeffding inequality\label{AzumaHoeffding}] Let $\{X_k,\cF_k\}_{k\in\NN\cup\{0\}}$ be a discrete-parameter real-valued super-martingale. Suppose that for every $k\in \{1,\dots, n\}$ the condition $|X_k-X_{k-1}|\le d_k$ holds a.s. for a real-valued sequence $\{d_k\}_{k=1}^n$ of non-negative numbers. Then for every $\alpha\ge 0$,
		\begin{align}\label{azumahoeffding}
			\mathbb{P}(X_n-X_0\ge \alpha)\le  \exp\left(-\frac{\alpha^2}{2\sum_{k=1}^nd_k^2}\right).
		\end{align}
	\end{theorem}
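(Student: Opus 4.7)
The plan is to prove the Azuma--Hoeffding inequality via the standard Chernoff-type argument combined with a telescoping of conditional expectations along the filtration. The main obstacle, insofar as there is one, is the conditional exponential moment bound for a bounded super-martingale increment, which requires a convexity argument that handles the one-sided mean constraint $\EE[X_k-X_{k-1}\mid\cF_{k-1}]\le 0$ rather than a centered assumption.

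First, I would apply the exponential Markov inequality: for any $\lambda\ge 0$,
\begin{align*}
\PP(X_n-X_0\ge \alpha)\le \e^{-\lambda\alpha}\,\EE[\e^{\lambda(X_n-X_0)}].
\end{align*}
Writing $X_n-X_0=\sum_{k=1}^n D_k$ with $D_k:=X_k-X_{k-1}$, each $D_k$ is $\cF_k$-measurable, satisfies $|D_k|\le d_k$ a.s., and obeys $\EE[D_k\mid\cF_{k-1}]\le 0$ by the super-martingale hypothesis. I would then peel off the last increment using the tower property:
\begin{align*}
\EE\bigl[\e^{\lambda\sum_{k=1}^n D_k}\bigr]=\EE\Bigl[\e^{\lambda\sum_{k=1}^{n-1} D_k}\,\EE\bigl[\e^{\lambda D_n}\,\big|\,\cF_{n-1}\bigr]\Bigr].
\end{align*}

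The key step is then to bound the inner conditional expectation. Using convexity of $y\mapsto \e^{\lambda y}$ on $[-d_k,d_k]$ to upper bound it by the secant line, one obtains
\begin{align*}
\e^{\lambda D_k}\le \frac{d_k-D_k}{2d_k}\e^{-\lambda d_k}+\frac{d_k+D_k}{2d_k}\e^{\lambda d_k},
\end{align*}
and taking the conditional expectation with respect to $\cF_{k-1}$ produces an affine function of $\EE[D_k\mid\cF_{k-1}]$ which, for $\lambda\ge 0$, is non-decreasing in this argument. Since $\EE[D_k\mid\cF_{k-1}]\le 0$, this affine function is maximized at $0$, yielding
\begin{align*}
\EE[\e^{\lambda D_k}\mid\cF_{k-1}]\le \cosh(\lambda d_k)\le \e^{\lambda^2 d_k^2/2},
\end{align*}
where the last inequality is the elementary scalar bound $\cosh(x)\le \e^{x^2/2}$.

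Finally, iterating this bound $n$ times via the tower property gives $\EE[\e^{\lambda(X_n-X_0)}]\le \exp(\lambda^2\sum_{k=1}^n d_k^2/2)$, so that
\begin{align*}
\PP(X_n-X_0\ge \alpha)\le \exp\!\left(-\lambda\alpha+\frac{\lambda^2}{2}\sum_{k=1}^n d_k^2\right).
\end{align*}
Optimizing the right-hand side over $\lambda\ge 0$ by choosing $\lambda=\alpha/\sum_{k=1}^n d_k^2$ yields exactly the claimed bound \eqref{azumahoeffding}. The case $\alpha=0$ is trivial since the right-hand side of \eqref{azumahoeffding} equals $1$.
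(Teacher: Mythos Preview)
Your proof is correct and is the standard Chernoff-type argument for the Azuma--Hoeffding inequality. Note, however, that the paper does not actually prove this theorem: it is stated in the preliminaries as a classical result with references to \cite{H63,A67}, and is used as a black box. That said, the paper does employ essentially the same ingredients you used---the secant-line convexity bound $\e^{ts}\le \frac{1}{2c}(\e^{tc}-\e^{-tc})s+\tfrac{1}{2}(\e^{tc}+\e^{-tc})$, the inequality $\cosh(u)\le \e^{u^2/2}$, a Markov-type step, and optimization over the exponential tilt---in its proof of \Cref{ahi}, which it explicitly describes as ``a variant of the proof of the Azuma-Hoeffding inequality''. So your argument is precisely in line with the technique the authors have in mind.
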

	The next result from \cite{MD89} (see also \cite{RS13} Corollary 2.3.2) provides an improvement over the Azuma-Hoeffding inequality  in the limit of large $n$, in the case in which $d_k=d$ for any $k$, by making use of the variance.
	\begin{theorem}\label{tightazu} Let $\{X_k,\cF_k\}_{k\in\NN\cup\{0\}}$ be a discrete-parameter real-valued super-martingale. Assume that, for some constants $0<\nu<d$ the following two inequalities are satisfied almost surely:
		\begin{align*}
			&X_k-\EE[X_k|\cF_{k-1}]\le d\\
			& \EE[(X_k-\EE[X_k|\cF_{k-1}])^2|\cF_{k-1}]\le \nu^2
		\end{align*}
		for every $k\in\{1,\dots,n\}$. Then for every $\kappa \ge 0$,
		\begin{align}\label{improveazuma}
			\PP(X_n-X_0\ge \kappa n)\le \exp\left( -nD_{{bin}}\left(\frac{\delta+\gamma}{1+\gamma}\left\|\frac{\gamma}{1+\gamma}\right.\right)\right),
		\end{align}
		where 
		\begin{align*}
			\gamma:=\frac{\nu^2}{d^2}<1,\qquad \delta:=\frac{\kappa}{d}
		\end{align*}
		and $D_{{bin}}(.\|.)$ here denotes the binary classical relative entropy:
		\begin{align*}
			D_{{bin}}(p\|q):=p\log\left(\frac{p}{q}\right)+(1-p)\log\left(\frac{1-p}{1-q}\right),~~~~p,q\in [0,1].
		\end{align*}
		If $\delta>1$ then these probabilities are equal to zero.
	\end{theorem}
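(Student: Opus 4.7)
The plan is to combine a Chernoff-type exponential Markov bound with a sharp, variance-aware conditional moment-generating function (MGF) estimate. First, center the increments by defining $\xi_k := X_k - \EE[X_k|\cF_{k-1}]$. By hypothesis $\xi_k \le d$ and $\EE[\xi_k^2|\cF_{k-1}] \le \nu^2$ a.s., and by construction $\EE[\xi_k|\cF_{k-1}] = 0$. The supermartingale assumption gives $\EE[X_k|\cF_{k-1}] - X_{k-1} \le 0$ a.s., so by telescoping
\begin{align*}
X_n - X_0 = \sum_{k=1}^n \xi_k + \sum_{k=1}^n \big(\EE[X_k|\cF_{k-1}] - X_{k-1}\big) \;\le\; S_n := \sum_{k=1}^n \xi_k.
\end{align*}
If $\delta > 1$ then $\kappa n > nd \ge S_n$ a.s., so $\PP(X_n - X_0 \ge \kappa n) = 0$ and the bound holds trivially; from here on I assume $\delta \in [0,1]$.

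Next, apply the exponential Markov (Chernoff) inequality: for every $t > 0$,
\begin{align*}
\PP(S_n \ge \kappa n) \le e^{-t\kappa n}\,\EE[e^{tS_n}],
\end{align*}
and factorise the expectation iteratively via the tower property, writing $\EE[e^{tS_n}] = \EE\big[e^{tS_{n-1}}\,\EE[e^{t\xi_n}|\cF_{n-1}]\big]$, to reduce matters to a pointwise bound on the conditional MGF of each $\xi_k$.

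The crucial step is to establish the sharp one-step estimate
\begin{align*}
\EE[e^{t\xi_k} | \cF_{k-1}] \;\le\; \frac{1}{1+\gamma}\,e^{-t\nu^2/d} + \frac{\gamma}{1+\gamma}\,e^{td} \qquad \text{a.s.}
\end{align*}
The right-hand side is the MGF of the two-point distribution on $\{-\nu^2/d,\,d\}$ with mass $\gamma/(1+\gamma)$ at $d$, which saturates both moment constraints $\EE[\,\cdot\,]=0$ and $\EE[(\,\cdot\,)^2]=\nu^2$. The bound follows by constructing a quadratic majorant $\phi(y) = a + by + cy^2$ of $e^{ty}$ on $(-\infty,\,d]$ with $c \ge 0$ and equality at $y \in \{-\nu^2/d,\,d\}$; taking conditional expectations and using $\EE[\xi_k|\cF_{k-1}] = 0$ together with $\EE[\xi_k^2|\cF_{k-1}] \le \nu^2$ yields the estimate.

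Inserting this into the Chernoff bound and optimising over $t > 0$ gives
\begin{align*}
\PP(S_n \ge \kappa n) \le \exp\bigg(-n \sup_{t>0} \Big\{t\kappa - \log\big(\tfrac{1}{1+\gamma}\,e^{-t\nu^2/d} + \tfrac{\gamma}{1+\gamma}\,e^{td}\big)\Big\}\bigg),
\end{align*}
and the supremum is the Legendre transform of the log-MGF of a Bernoulli$(\gamma/(1+\gamma))$ random variable; by Cram\'er's theorem for binomial tails this equals $D_{\text{bin}}\!\left(\tfrac{\delta+\gamma}{1+\gamma} \,\big\|\, \tfrac{\gamma}{1+\gamma}\right)$. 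The main obstacle is the third step: the easier Bennett-type estimate $\EE[e^{t\xi_k}|\cF_{k-1}] \le \exp\big(\gamma(e^{td}-1-td)\big)$, obtained from the quadratic majorant $e^{ty} \le 1 + ty + (y^2/d^2)(e^{td}-1-td)$ on $y \le d$ followed by the relaxation $1+x \le e^x$, is strictly weaker and yields a smaller exponent than the binary KL form. Preserving the two-point MGF in its unexponentiated form and performing the Bernoulli Chernoff optimisation exactly is what delivers the sharp divergence in the statement.
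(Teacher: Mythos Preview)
The paper does not prove this theorem: it is quoted as a known result from McDiarmid~\cite{MD89} (see also \cite{RS13}, Corollary~2.3.2), so there is no ``paper's own proof'' to compare against. Your argument is correct and is precisely the standard route taken in those references: reduce to the centred increments $\xi_k$, apply the exponential Chernoff/tower scheme, bound each conditional MGF by that of the extremal two-point law on $\{-\nu^2/d,\,d\}$, and identify the optimised exponent as the Bernoulli large-deviation rate, i.e.\ the binary relative entropy.

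One small point of precision in your key step: a quadratic $\phi(y)=a+by+cy^2$ merely \emph{passing through} $(-\nu^2/d,\,e^{-t\nu^2/d})$ and $(d,\,e^{td})$ is not determined and need not majorise $e^{ty}$ on $(-\infty,d]$. What you actually need (and what makes the construction work) is the unique quadratic that is \emph{tangent} to $e^{ty}$ at $y_0=-\nu^2/d$ and passes through $(d,e^{td})$; a short sign analysis of $\psi:=\phi-e^{t\cdot}$ (using that $\psi''$ changes sign at most once) then shows $\psi\ge 0$ on $(-\infty,d]$, and one checks $c>0$ so that the variance constraint $\EE[\xi_k^2\mid\cF_{k-1}]\le\nu^2$ can be used. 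With that refinement your write-up is complete.
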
	
	To see why \reff{improveazuma} is indeed an improvement over the Azuma-Hoeffding inequality (\ref{azumahoeffding}) in the limit of large $n$ (in the case in which $d_k=d$ for any $k$), use the following identity, which is obtained by a Taylor expansion of $\log(1+u)$: 
	\begin{align*}
		(1+u)\log(1+u)=u+\sum_{k\ge2} \frac{(-u)^k}{k(k-1)},~~~~ -1<u\le 1.
	\end{align*}
	Then it follows that
	\begin{align}
		nD_{{bin}}\left( \frac{\delta+\gamma}{1+\gamma}\left\|\frac{\gamma}{1+\gamma}\right.\right) &=\frac{n}{1+\gamma} \sum_{k\ge 2}\frac{1}{k(k-1)} {\delta}^k\left(1-\left(-\frac{1}{\gamma}\right)^{k-1}\right)
		\nonumber\\
		&= \frac{\delta^2n}{2\gamma}+\mathcal{O}\left(\delta^3 n\right).\label{asymp}
	\end{align}
	The first term leads to an improvement by a factor of $\frac{1}{\gamma}$ over the Azuma-Hoeffding bound (\ref{azumahoeffding}).\\\\
	In the special case of a martingale of the form given in \Cref{randomwalk}, the following concentration inequality was proved by Kearns and Saul \cite{KS98}. It is a refinement of the well-known Hoeffding inequality \cite{H63} and its proof is analogous to the proof of the latter. We employ it in our analysis of quantum hypothesis testing for the case of uncorrelated states (see \Cref{sec_Examples}). Note that for \Cref{randomwalk}, the Hoeffding inequality, and hence also the Kearn-Saul inequality, provide an improvement over the Azuma-Hoeffding inequality.
%	\begin{theorem}[Hoeffding inequality] Let $\{X_k,\cF_k\}$ be independent real-valued bounded random variables (not necessarily i.i.d.), such that for every $k\in \{1,\dots,n\}$, $X_k\in [a_k,b_k]$ holds a.s. for constants $a_k,b_k\in \RR$. Let $\mu_n:=\sum_{k=1}^n \EE[X_k]$. Then for every $\alpha\ge 0$
%			\begin{align}
%				\PP\left(\sum_{k=1}^n X_k -\mu_n \le -\alpha\sqrt{n}\right)\le \e^{-\frac{2\alpha^2 n}{\sum_{k=1}^n (b_k-a_k)^2}}.
%				\end{align}
%	\end{theorem}
	\begin{theorem}\label{KSineq}[Kearns-Saul inequality] Let $\{X_k\}_{k\in\NN\cup\{0\}}$ be independent real-valued bounded random variables, such that for every $k\in \{1,\dots,n\}$, $X_k\in [a_k,b_k]$ holds a.s. for constants $a_k,b_k\in \RR$. Let $\mu_n:=\sum_{k=1}^n \EE[X_k]$. Then for every $\alpha\ge 0$
		\begin{align*}
			\PP\left(\sum_{k=1}^n X_k -\mu_n \le -\alpha\sqrt{n}\right)\le \exp\left(-\frac{\alpha^2 n}{4\sum_{k=1}^n c_k}\right),
		\end{align*}
		where 
		\begin{align*}
			c_k:=\left\{ \begin{aligned}
				&\frac{(1-2p_k)(b_k-a_k)^2}{4\log\left( \frac{1-p_k}{p_k}\right)}~~\text{ if }p_k\ne \frac{1}{2}\\
				& \frac{(b_k-a_k)^2}{8}~~~~~~~~~~~~~\text{ if }p_k=\frac{1}{2},
				\end{aligned}\right.
				\end{align*}
				where $p_k$ is defined as
				\begin{align*}
					p_k:=\frac{\EE[X_k]-a_k}{b_k-a_k},~~~k\in \{1,\dots,n\}.
					\end{align*}
	\end{theorem}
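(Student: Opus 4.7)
The strategy is the classical Chernoff (exponential Markov) bound combined with a refined one-dimensional MGF estimate, which is essentially the content of the Kearns--Saul inequality itself. For $\lambda\ge 0$, rewrite the event of interest as $\{-\sum_{k}(X_k-\EE[X_k])\ge\alpha\sqrt n\}$ and apply the exponential Markov inequality together with the independence of the $X_k$'s to obtain
\begin{align*}
\PP\!\left(\sum_{k=1}^n X_k-\mu_n\le-\alpha\sqrt n\right)\;\le\; e^{-\lambda\alpha\sqrt n}\,\prod_{k=1}^n\EE\!\left[e^{-\lambda(X_k-\EE[X_k])}\right].
\end{align*}

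The heart of the proof is the one-variable bound $\EE[\exp(-\lambda(X_k-\EE[X_k]))]\le \exp(\lambda^2 c_k)$ for every $\lambda\in\RR$. After the affine rescaling $Y_k:=(X_k-a_k)/(b_k-a_k)\in[0,1]$, which has mean $p_k$, convexity of $t\mapsto e^{\mu t}$ on $[0,1]$ yields $\EE[e^{\mu Y_k}]\le (1-p_k)+p_k e^{\mu}$ for any $\mu\in\RR$. It therefore suffices to establish the scalar inequality
\begin{align*}
\log\bigl((1-p)+pe^{\mu}\bigr)-\mu p\;\le\;\frac{1-2p}{4\log\bigl((1-p)/p\bigr)}\,\mu^2\qquad(p\ne 1/2),
\end{align*}
with $1/8$ replacing the fraction when $p=1/2$; undoing the rescaling then produces the factor $(b_k-a_k)^2$ in $c_k$. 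This Kearns--Saul bound on the cumulant generating function of a centred Bernoulli is the main technical obstacle. The natural route is to introduce $\phi(\mu):=\log((1-p)+pe^{\mu})-\mu p-c(p)\mu^2$, verify $\phi(0)=\phi'(0)=0$, and show $\phi(\mu)\le 0$ on $\RR$; the constant $c(p)$ is precisely the threshold at which $\phi(\mu)\to 0$ as $\mu\to-\infty$ (for $p<1/2$, and symmetrically as $\mu\to+\infty$ for $p>1/2$), so it is the largest quadratic coefficient for which the interior of $\phi$ stays non-positive. A clean way to close this is to rewrite $\phi(\mu)=0$ at the two endpoints $\mu=0$ and $\mu=-\log((1-p)/p)$ (where the left-hand side is manifestly zero) and use the log-convexity of $\mu\mapsto\log((1-p)+pe^{\mu})-\mu p$ to squeeze $\phi$ between these roots, while the endpoint behaviour controls what happens outside. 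The $p=1/2$ version is the usual Hoeffding lemma and is recovered by the L'H\^opital limit $c(p)\to 1/8$.

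Combining the MGF estimate with the displayed Chernoff inequality above and taking the product over $k$ gives
\begin{align*}
\PP\!\left(\sum_{k=1}^nX_k-\mu_n\le-\alpha\sqrt n\right)\;\le\; \exp\!\left(-\lambda\alpha\sqrt n+\lambda^2\sum_{k=1}^n c_k\right),
\end{align*}
and optimising the right-hand side over $\lambda\ge 0$ at $\lambda^\star=\alpha\sqrt n/(2\sum_k c_k)$ produces the claimed bound $\exp\bigl(-\alpha^2 n/(4\sum_k c_k)\bigr)$. The Chernoff step and the optimisation are routine; I expect essentially all the work to lie in the Bernoulli cumulant estimate.
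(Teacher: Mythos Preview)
The paper does not actually prove this theorem: it merely cites Kearns and Saul \cite{KS98} and remarks that ``its proof is analogous to the proof of [Hoeffding's inequality].'' Your overall architecture---exponential Markov plus independence, then the single-variable MGF bound $\EE[e^{-\lambda(X_k-\EE[X_k])}]\le e^{\lambda^2 c_k}$, then optimisation over $\lambda$---is exactly what ``analogous to Hoeffding'' means, and the Chernoff and optimisation steps are correct as written.

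The gap is in your sketch of the scalar Kearns--Saul lemma. Your claim that the left-hand side $g(\mu):=\log\bigl((1-p)+pe^{\mu}\bigr)-\mu p$ is ``manifestly zero'' at $\mu=-\log\bigl((1-p)/p\bigr)$ is false: at that point $(1-p)+pe^{\mu}=\bigl((1-p)^2+p^2\bigr)/(1-p)\neq 1$ in general. Moreover $g$ is convex (its second derivative is $p(1-p)e^{\mu}/((1-p)+pe^{\mu})^2>0$), not log-convex, and a convex function with $g(0)=g'(0)=0$ cannot vanish again without being identically zero in between---so the two-endpoint squeeze you describe cannot work. The actual argument (see \cite{KS98}, or the streamlined version by Berend and Kontorovich) proceeds differently: one shows that $\mu\mapsto g(\mu)/\mu^2$ is monotone on each half-line and identifies its supremum over $\RR$ as $c(p)$, or equivalently one analyses the sign of $\phi'$ via the auxiliary variable $t=pe^{\mu}/((1-p)+pe^{\mu})$. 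This is indeed where all the work lies, as you anticipated, but the route you outlined does not close it.
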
	
	This indeed improves Hoeffding's inequality unless $p_k=\frac{1}{2}$ for all $k\in\{1,\dots,n\}$.

\section{Hypothesis testing via martingale methods}\label{sec_main}

\subsection{Finite blocklength analysis of the Type II error exponent}
	Let us fix a sequence of finite dimensional Hilbert spaces $\{\cH_n\}_{n\in \NN}$, and let $\{\rho_n\}_{n\in \NN}$ and $\{\sigma_n\}_{n\in \NN}$ denote two sequences of states, where for each $n\in\NN$, $\rho_n, \sigma_n \in \cD(\cH_n)$. For a test $0\le T_n\le \mathbb{I}_n $, the Type I and Type II errors for the corresponding binary quantum hypothesis testing problem are given by
	\begin{align*}
	\alpha(T_n)=\tr[(\mathbb{I}_n-T_n)\rho_n]\qquad \text{ and }\qquad \beta(T_n)=\tr[T_n\sigma_n].
\end{align*}	
	As mentioned in the introduction, in the context of asymmetric hypothesis testing, the two quantities of interest are the {\em{Stein error}} and the {\em{Hoeffding error}}, defined through \Cref{qSteinII} and \Cref{qHoeffII}, respectively. In this section we obtain bounds on these errors for finite blocklength, i.e.~for finite values of $n$, for uncorrelated states, that is when $\rho_n$ and $\sigma_n$ are each given by a tensor product of $n$ (not necessarily identical) states.
		\begin{remark}
			We restrict our consideration to the case of faithful states $\rho_n$, $\sigma_n$ only to make our exposition more transparent. Simple limiting arguments show that all our results remain valid in the case in which $\supp(\rho_n) \subseteq \supp(\sigma_n)$.
		\end{remark}In fact, our upper bounds on the Stein- and Hoeffding errors, as given in \Cref{mainresult}, are valid when the sequences $\{ \rho_n\}_{n\in\NN}$ and $\{ \sigma_n\}_{n\in\NN}$ satisfy \Cref{cond-sub-mart} given below. 

	\begin{condition}\label{cond-sub-mart}
	The states $\rho_n, \sigma_n \in \cD(\cH_n)$ of the sequences $\{ \rho_n\}_{n\in\NN}$ and $\{ \sigma_n\}_{n\in\NN}$ are such that the random variables $Y_0=0$ and $Y_n:= X_n+D(\rho_n\|\sigma_n)$, 
where $X_n$ is the random variable associated to the pair $(\log \Delta_{\sigma_n|\rho_n}, \Omega_{\rho_n})$ through \Cref{spec-meas}, form a super-martingale with respect to their natural filtration. Moreover, there exists a sequence $\{d_k\}_{k\in\NN}$ of non-negative numbers such that for any $k\ge 1$, $|Y_k-Y_{k-1}|\le d_k$ almost surely i.e.~with probability $1$.
	\end{condition}
	\begin{remark}
		One can readily verify that $\EE[X_n]=-D(\rho_n\|\sigma_n)$, so that $Y_n$ is a centered random variable, for each $n\in\NN$.
	\end{remark}	
	As shown below, uncorrelated states satisfy the above condition. Later in the paper, we show how a refined analysis allows us to recover similar results for certain classes of correlated states, i.e. those satisfying a so-called factorization property (see Sections \ref{corr} and \ref{finitecorrelated}).\\\\
	Our upper bounds on the finite blocklength Stein- and Hoeffding errors are stated in the following lemma: 
	\begin{lemma}[Upper bounds on finite blocklength optimal asymmetric error exponent]\label{mainresult}Let $\{\rho_n\}_{n\in\NN}$ and $\{\sigma_n\}_{n\in\NN}$ be two sequences of states that satisfy \Cref{cond-sub-mart}. Then for any $\eps>0$ there exists a sequence of tests $\{T^{\eps}_n\}_{n\in\NN}$ such that for any $n\in \NN$,
		\begin{align*}
	\alpha(T^\eps_n)\le \eps,\qquad
		\beta(T^\eps_n)\le \exp\left({-D(\rho_n\|\sigma_n)+\sqrt{2\log(1/\eps) \sum_{k=1}^n d_k^2}}\right).
		\end{align*}
		Moreover, for any $r>0$, there exists a sequence of tests $\{\tilde{T}_n^{r}\}_{n\in\NN}$ such that for each $n\in\NN$,
		\begin{align*}
						\alpha(\tilde{T}^{r}_n)\le \e^{-nr},
				\qquad \beta(\tilde{T}^r_n)\le \exp\left({-D(\rho_n\|\sigma_n)+\sqrt{2nr \sum_{k=1}^n d_k^2}}\right).
				\end{align*}
		Hence, for each $n\in\NN$,
		\begin{align}
		&	\beta_n(\eps)\le \exp\left(-D(\rho_n||\sigma_n)+\sqrt{2\sum_{k=1}^n d_k^2 \log(1/\eps)}\right)\label{qstein-thm4}\\
		&				\tilde{\beta}_n(r)\le \exp\left(-D(\rho_n||\sigma_n)+\sqrt{2nr\sum_{k=1}^n d_k^2 }\right).\label{hoeffding-thm4}
		\end{align}	
	\end{lemma}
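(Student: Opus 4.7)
The plan is to reduce the quantum asymmetric hypothesis testing problem to a classical tail estimate for $Y_n$, through a Neyman--Pearson-type test, and then invoke the Azuma--Hoeffding inequality (\Cref{AzumaHoeffding}) together with the super-martingale structure supplied by \Cref{cond-sub-mart}.

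\emph{Test construction.} For a threshold $\gamma_n$ to be fixed later, I would take the Neyman--Pearson projection $T_n:=\{\rho_n \ge \e^{\gamma_n}\sigma_n\}$, i.e.~the spectral projection of $\rho_n - \e^{\gamma_n}\sigma_n$ onto its non-negative eigenspace. On the range of $T_n$ one has $T_n(\rho_n - \e^{\gamma_n}\sigma_n) T_n\ge 0$, so a standard trace manipulation yields
\begin{align*}
\beta(T_n) = \tr(T_n\sigma_n)\le \e^{-\gamma_n}\tr(T_n\rho_n) \le \e^{-\gamma_n}.
\end{align*}

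\emph{Classical comparison of the Type~I error.} The Type~I error $\alpha(T_n) = \tr((\mathbb{I}_n - T_n)\rho_n)$ must be controlled by the law of the random variable $X_n$, or equivalently $Y_n = X_n + D(\rho_n\|\sigma_n)$. Via a quantum-to-classical comparison obtained from the operator Jensen inequality (\Cref{Jensen}) applied in the GNS representation \eqref{eq_GNS}, one should establish
\begin{align*}
\alpha(T_n) \;\le\; \PP(X_n \ge -\gamma_n) \;=\; \PP\bigl(Y_n \ge D(\rho_n\|\sigma_n) - \gamma_n\bigr),
\end{align*}
the equality using $Y_n = X_n + D(\rho_n\|\sigma_n)$ and $\EE[X_n] = -D(\rho_n\|\sigma_n)$ from \eqref{2.7}. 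Intuitively this comparison transports the ``bad event'' $\{\rho_n < \e^{\gamma_n}\sigma_n\}$ on $\cH_n$ to the classical tail of the spectral measure \eqref{spec-meas} of $\log\Delta_{\sigma_n|\rho_n}$ relative to $\Omega_{\rho_n}$.

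\emph{Azuma--Hoeffding and choice of $\gamma_n$.} By \Cref{cond-sub-mart}, $\{Y_k, \cF_k\}_{k\ge 0}$ is a centred super-martingale with $Y_0=0$ and $|Y_k - Y_{k-1}|\le d_k$ almost surely, so \Cref{AzumaHoeffding} gives
\begin{align*}
\PP(Y_n \ge t) \le \exp\Bigl(-\frac{t^2}{2\sum_{k=1}^n d_k^2}\Bigr),\qquad t\ge 0.
\end{align*}
Setting $\gamma_n := D(\rho_n\|\sigma_n) - \sqrt{2\bigl(\sum_{k=1}^n d_k^2\bigr)\log(1/\eps)}$ makes this classical tail equal to $\eps$, so defining $T_n^\eps:=T_n$ with this choice yields $\alpha(T_n^\eps)\le \eps$ together with $\beta(T_n^\eps) \le \e^{-\gamma_n}$, i.e.~\eqref{qstein-thm4}. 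For the Hoeffding bound, the analogous choice $\gamma_n := D(\rho_n\|\sigma_n) - \sqrt{2nr\sum_{k=1}^n d_k^2}$ makes the tail equal to $\e^{-nr}$ and produces \eqref{hoeffding-thm4}.

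\emph{Main obstacle.} The only non-routine ingredient is the quantum-to-classical comparison in the second step: $\alpha(T_n)$ naturally lives on $\cB(\cH_n)$, whereas $X_n$ is defined through the spectral measure of $\log\Delta_{\sigma_n|\rho_n}$, an operator acting on the GNS space $\cB(\cH_n)$ equipped with the Hilbert--Schmidt inner product via \eqref{eq_GNS}. Establishing the inequality $\alpha(T_n)\le \PP(X_n\ge -\gamma_n)$ will require a careful application of the operator Jensen inequality to $(\rho_n - \e^{\gamma_n}\sigma_n)_+$, in the same spirit as analogous arguments in \cite{DPR16}. Once this comparison is in hand, the combination of the Neyman--Pearson bound on $\beta$, the Azuma--Hoeffding tail estimate on $Y_n$, and the matching choice of $\gamma_n$ closes the proof.
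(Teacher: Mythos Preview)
Your overall strategy matches the paper's proof exactly: choose a test with $\beta(T_n)\le \e^{-\gamma_n}$, upper-bound $\alpha(T_n)$ by the classical tail $\PP(Y_n\ge D(\rho_n\|\sigma_n)-\gamma_n)$, apply Azuma--Hoeffding (\Cref{AzumaHoeffding}) under \Cref{cond-sub-mart}, and solve for $\gamma_n$. The paper does precisely this, with $\gamma_n=\log L_n$.

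The one point where you diverge is the test and the ``main obstacle''. The paper does not use the Neyman--Pearson projector $\{\rho_n\ge \e^{\gamma_n}\sigma_n\}$; it invokes \Cref{prop_1} (from \cite{DPR16}, essentially Li \cite{L14}) as a black box, which already furnishes a test $T_n$ satisfying both $\beta(T_n)\le L_n^{-1}$ and
\[
\alpha(T_n)\le \big\langle\Omega_{\rho_n},P_{[-\log L_n,\infty)}(\log\Delta_{\sigma_n|\rho_n})\Omega_{\rho_n}\big\rangle=\PP(X_n\ge -\log L_n).
\]
This is exactly your ``quantum-to-classical comparison'', but packaged as a ready-made proposition. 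Be careful: the test constructed in \cite{L14,DPR16} is \emph{not} the plain Neyman--Pearson projector on $\cH_n$; it involves an additional pinching/measurement step tied to the spectral decompositions of $\rho_n$ and $\sigma_n$. For the bare Neyman--Pearson test the inequality $\alpha(T_n)\le \PP(X_n\ge -\gamma_n)$ is not known to hold in general, and an operator-Jensen argument applied to $(\rho_n-\e^{\gamma_n}\sigma_n)_+$ does not obviously produce it. So your ``main obstacle'' is real, but it is already resolved by \Cref{prop_1}; you should cite that proposition rather than attempt to push the Neyman--Pearson test through.
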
	
	In order to prove \Cref{mainresult}, we use the Azuma-Hoeffding martingale concentration inequality (\Cref{AzumaHoeffding}) as well as the following result, which allows us to relate the error probabilities arising in asymmetric quantum hypothesis testing to laws of classical super-martingales. 
The latter result was stated as Proposition 1 in \cite{DPR16} but its proof is essentially due to Li \cite{L14}.

	\begin{proposition}\label{prop_1}\cite{DPR16}
		Let $\rho$, $\sigma$ be two states in ${\mathcal{D}}({\mathcal{H}})$. For any $L>0$ there exists a test $T$ such that
		\begin{equation} 
			\tr \,\rho(1-T) \leq  \big\langle{\Omega_\rho}, P_{[-\log L,\infty)} (\log \Delta_{\sigma|\rho})\, \Omega_\rho\big\rangle \qquad
			\mbox{and}\qquad \tr\,\sigma T \leq  L^{-1}, \label{eq_alphabeta}
		\end{equation}
		where $ P_{[-\log L,\infty)} (\log \Delta_{\sigma|\rho}) := \sum_{x \in [-\log L,\infty)} P_x (\log \Delta_{\sigma|\rho})$, with $ P_x (\log \Delta_{\sigma|\rho})$ being the spectral projection operator of $\log \Delta_{\sigma|\rho}$ of associated eigenvalue $x$.
	\end{proposition}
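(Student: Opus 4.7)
The plan is to exhibit an explicit test $T$ on $\cH$ built from the spectral data of the relative modular operator, and to verify the two inequalities of \eqref{eq_alphabeta} separately using the GNS-style identity \Cref{spec-meas}. Let $\Pi := P_{[-\log L,\infty)}(\log\Delta_{\sigma|\rho})$ denote the self-adjoint spectral projection on the Hilbert--Schmidt space $\cA$ appearing on the right-hand side, and decompose
\begin{equation*}
    \Omega_\rho = A + B, \qquad A := \Pi\,\Omega_\rho, \qquad B := (\id - \Pi)\,\Omega_\rho.
\end{equation*}
By self-adjointness of $\Pi$, these two components are orthogonal in the Hilbert--Schmidt inner product, so $\|A\|_2^2 + \|B\|_2^2 = \|\Omega_\rho\|_2^2 = 1$, and crucially $\|A\|_2^2 = \langle\Omega_\rho, \Pi\,\Omega_\rho\rangle$, which is exactly the quantity appearing in the type I bound. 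Now regard $B$ as an ordinary matrix in $\cB(\cH)$ and take its polar decomposition $B = U|B|$, with $|B| := (B^*B)^{1/2}$ and $U$ a partial isometry satisfying $U^*U = s_r(B)$ and $UU^* = s_l(B)$. The candidate test is $T := UU^*$, the range projection of $B$ on $\cH$.

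For the type I bound, rewrite $\tr\rho(I-T) = \|(I-T)\Omega_\rho\|_2^2$ using $\rho = \Omega_\rho^2$ and trace cyclicity. Since $U \cdot s_r(B) = U$, we have $(I-T)U = U - UU^*U = 0$, and therefore $(I-T)B = (I-T)U|B| = 0$. Consequently $(I-T)\Omega_\rho = (I-T)A$, and taking the Hilbert--Schmidt norm yields
\begin{equation*}
    \tr\rho(I-T) = \|(I-T)A\|_2^2 \le \|A\|_2^2 = \langle\Omega_\rho, \Pi\,\Omega_\rho\rangle,
\end{equation*}
which is the first inequality of \eqref{eq_alphabeta}.

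For the type II bound, the spectral theorem provides the operator inequality $(\id-\Pi)\,\Delta_{\sigma|\rho}\,(\id-\Pi) \le L^{-1}\,(\id-\Pi)$ on $\cA$. Evaluating the associated quadratic form at $\Omega_\rho$ and using $\Delta_{\sigma|\rho}(B) = \sigma B\rho^{-1}$ gives
\begin{equation*}
    \tr\,\sigma\,B\rho^{-1}B^* = \langle B, \Delta_{\sigma|\rho} B\rangle \le L^{-1}\|B\|_2^2 \le L^{-1}.
\end{equation*}
The final task is to transfer this to the desired bound $\tr\sigma T = \tr\sigma\,UU^* \le L^{-1}$. Writing $B\rho^{-1}B^* = U(|B|\rho^{-1}|B|)U^*$ reduces the problem to comparing $\tr(U^*\sigma U)$ with $\tr(U^*\sigma U)(|B|\rho^{-1}|B|)$ on the support $s_r(B)$. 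The hard part is precisely this last passage, since the tempting operator inequality $|B|\rho^{-1}|B| \ge s_r(B)$ is false in general (as simple $2\times 2$ examples already show). The resolution is to invoke the Operator Jensen Inequality (\Cref{Jensen}) applied to the contraction $v: \ran(s_r(B)) \hookrightarrow \cH$ induced by the polar decomposition together with the operator concave function $f(x)=x^{1/2}$ (or $f(x)=-\log x$, as convenient), thereby exchanging the position of $\sigma^{1/2}$ and $\rho^{-1/2}$ around $|B|$ in the correct direction. Equivalently, one may deform $T$ by a scalar $L$-dependent spectral functional of $|B|$, carefully chosen so that both inequalities survive simultaneously. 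Overcoming this operator-algebraic obstruction is the heart of the argument, and it is the step that ultimately forces the appearance of the relative modular operator rather than a naive $\{\rho \ge L\sigma\}$ Nagaoka--Hayashi test on $\cH$.
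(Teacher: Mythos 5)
Your setup — spectral projection $\Pi$, the orthogonal decomposition $\Omega_\rho=A+B$, the candidate test $T:=UU^*=s_l(B)$, and the deduction $\tr\rho(I-T)\le\|A\|_2^2=\langle\Omega_\rho,\Pi\Omega_\rho\rangle$ — is clean and the type-I half is correct. The intermediate estimate $\tr(\sigma B\rho^{-1}B^*)=\langle B,\Delta_{\sigma|\rho}B\rangle\le L^{-1}\|B\|_2^2\le L^{-1}$ is also correct. However, the type-II half is not proved. You correctly observe that the operator inequality $|B|\rho^{-1}|B|\ge s_r(B)$, which would close the argument, is false in general; but at that point you simply assert that invoking the Operator Jensen inequality or ``deforming $T$ by a scalar spectral functional of $|B|$'' resolves the obstruction. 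Neither of these is an argument. If you apply Theorem~\ref{Jensen} to the partial isometry $U$ with, say, $f=\log$, $f(x)=x^{1/2}$, or $f(x)=-1/x$, you obtain relations of the form $f(U^*\sigma U)\ge U^*f(\sigma)U$; none of these yields $\tr(U^*\sigma U)\le L^{-1}$, and you do not indicate how the positions of $\sigma^{1/2}$ and $\rho^{-1/2}$ are supposed to get ``exchanged around $|B|$ in the correct direction.'' Likewise, no $L$-dependent deformation of $T$ is exhibited, so there is nothing to check.

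In other words, the heart of the proposition — that one can construct a test with type-II error at most $L^{-1}$ \emph{and} type-I error controlled by the spectral weight $\langle\Omega_\rho,\Pi\Omega_\rho\rangle$ — is exactly the step your proposal leaves open. This is the only nontrivial part of the lemma: the type-I bound is essentially automatic for any $T\ge s_l(B)$, while the burden is entirely on making $\tr(\sigma T)$ small while keeping $T$ large enough. You would need either (a) a direct proof that $\tr(\sigma\, s_l(B))\le L^{-1}$ (which, if true, requires something sharper than the naive comparison with $B\rho^{-1}B^*$ that you have already ruled out), or (b) a different test altogether (for instance, a POVM element built directly from the joint spectral data $\{P_\lambda(\rho),P_\mu(\sigma):\mu/\lambda<L^{-1}\}$ rather than a range projection), together with an actual verification of both inequalities. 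The paper itself refers the proof to \cite{DPR16}; the construction there (originating with Li \cite{L14}) does supply a test for which both bounds are verified, and you should consult it rather than treating the gap as a stylistic afterthought. As written, the closing paragraph of your proposal describes the difficulty but does not overcome it.
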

	
For the proof of this proposition, see \cite{DPR16}. The proof actually provides a construction of the tests $\{T_n^{\eps}\}_{n\in\NN}$ and $\{\tilde{T}_n^r\}_{n\in\NN}$ appearing in \Cref{mainresult}.

	\subsection*{Proof of \Cref{mainresult}} \label{sec_proofmainthm}
For $n\in\NN$, fix $0< L_n\le\e^{D(\rho_n\|\sigma_n)}$. Then by \reff{eq_alphabeta}, there exists a test $T_n$ such that
	\begin{align*}
	\alpha(T_n)\le \langle \Omega_{\rho_n}, P_{[-\log L_n,\infty)}(\log \Delta_{\sigma_n|\rho_n})(\Omega_{\rho_n})\rangle\qquad\text{ and }\qquad\beta(T_n)\le L_n^{-1}.
	\end{align*}	
Now
	\begin{align}
		\langle \Omega_{\rho_n}, P_{[-\log L_n,\infty)}(\log \Delta_{\sigma_n|\rho_n})(\Omega_{\rho_n})\rangle &=\langle \Omega_{\rho_n}, {\indc}_{[-\log L_n,\infty)}(\log \Delta_{\sigma_n|\rho_n})(\Omega_{\rho_n})\rangle\nonumber\\ 
&= {\mathbb{E}}\left({\indc}_{[-\log L_n,\infty)}(X_n)\right)\nonumber\\
&=\PP(X_n\ge- \log L_n)=\PP(Y_n\ge -\log L_n+D(\rho_n\|\sigma_n)).\nonumber
	\end{align}
	where $X_n$ is the random variable associated to the pair $(\log\Delta_{\sigma_n|\sigma_n}, \Omega_{\rho_n})$, and $Y_n:=X_n+D(\rho_n\|\sigma_n)$. Assuming that \Cref{cond-sub-mart} is satisfied, an application of \Cref{AzumaHoeffding} to the super-martingale $\{Y_k,\cF_k\}_{k\in\NN\cup\{0\}}$ with $Y_0=0$, where $\{\cF_k\}_{k\in\NN}$ is the natural filtration associated with the random variables $X_k$, yields the following:
	\begin{align}
		\alpha(T_n)&\le \PP\Bigl( Y_n-Y_0\ge  D(\rho_n\|\sigma_n)-\log L_n \Bigr) \nonumber\\
		&\le  \exp\left({-\frac{(D(\rho_n\|\sigma_n)-\log L_n)^2}{2 \sum_{k=1}^n d_k^2}}\right).
\label{3.10}
		\end{align}
		Setting the quantity on the right hand side of the above inequality to be equal to $\eps$, and using the fact that $\log L_n\le {D(\rho_n\|\sigma_n)}$, we find that 
		\begin{align*}
			\log L_n = D(\rho_n\|\sigma_n) -\sqrt{2\log(1/\eps) \sum_{k=1}^n d_k^2}.
			\end{align*}
		This implies that
		\begin{align*}
			\beta(T_n)\le L_n^{-1}= \exp\left({-D(\rho_n\|\sigma_n)+\sqrt{2\log(1/\eps) \sum_{k=1}^n d_k^2}}\right),
			\end{align*}
			from which \reff{qstein-thm4} follows since $\beta_n(\eps)\le \beta(T_n)$. The inequality \reff{hoeffding-thm4} can be derived analogously by following the same steps 
as above but replacing $\eps$ by $\e^{-nr}$.

\qed

\subsection{A lower bound on the second order asymptotics of the Type II error exponent}		
As yet another application of a martingale concentration inequality in quantum hypothesis testing, we obtain a lower bound on the second order asymptotics of the type II error exponent, $-\log \beta_n(\eps)$, for the case in which the states $\rho_n, \sigma_n$ occurring in the sequence satisfy the more constrained \Cref{cond2}. The lower bound is given in \Cref{prop2}. In particular, \Cref{cond2} can be readily verified to be satisfied when $\rho_n$ and $\sigma_n$ are of the tensor product form.
\begin{condition}\label{cond2} 
		The states $\rho_n$ and $\sigma_n$ of states on each $\cH_n$ are such that the random variables $Y_0=0$ and $Y_n:= X_n+D(\rho_n\|\sigma_n)$, where $X_n$ is the random variable associated to the
pair $(\log \Delta_{\sigma_n|\rho_n}, \Omega_{\rho_n})$, form a super-martingale with respect to their natural filtration. Moreover, assume that for some constants $d$ and $\nu$ the following two requirements are satisfied almost surely:
		\begin{align*}
			&Y_k-\EE[Y_k|\cF_{k-1}]\le d\\
			& \EE[(Y_k-\EE[Y_k|\cF_{k-1}])^2|\cF_{k-1}]\le \nu^2.
		\end{align*}
\end{condition}

\begin{proposition}\label{prop2}
Suppose that the sequences of states $\{\rho_n\}_{n\in\NN}$ and $\{\sigma_n\}_{n\in\NN}$ satisfy \Cref{cond2}. Then for any sequence $\{L_n\}_{n\in\NN}$ of positive numbers such that for any $n\in\NN$, $D(\rho_n\|\sigma_n)\ge \log L_n$, there exists a sequence of tests $\{T_n\}_{n\in\NN}$ such that for any $n\in\NN$ the type I and type II errors satisfy the following inequalities:
	\begin{align}
		&\alpha(T_n)\le \exp\left( -nD_{{bin}}\left(\frac{\delta_n+\gamma}{1+\gamma}\left\|\frac{\gamma}{1+\gamma}\right)\right.\right),\nonumber\\
		&\beta(T_n)\le L_n^{-1},\label{beta}
		\end{align}
		where $\gamma={\nu^2}/{d^2}$ and for each $n$, $\delta_n=\left(D(\rho_n\|\sigma_n)-\log L_n\right)/{nd}$. This implies that for any $0<\eps<1$:
\begin{align}\label{soa2}
	- \log\beta_n(\eps)\ge D(\rho_n\|\sigma_n) - \sqrt{2n\log(\eps^{-1})}\nu +\mathcal{O}(1).
\end{align}
		\end{proposition}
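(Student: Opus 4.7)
The plan is to mirror the proof of \Cref{mainresult} but replace the Azuma--Hoeffding bound by the sharper concentration inequality of \Cref{tightazu}, whose hypotheses are precisely what \Cref{cond2} provides. The starting point is \Cref{prop_1}: for any $L_n>0$ we obtain a test $T_n$ with $\beta(T_n)\le L_n^{-1}$ (which immediately gives \reff{beta}) and with
\begin{align*}
\alpha(T_n)\;\le\;\langle \Omega_{\rho_n}, P_{[-\log L_n,\infty)}(\log\Delta_{\sigma_n|\rho_n})\,\Omega_{\rho_n}\rangle\;=\;\PP\bigl(X_n\ge -\log L_n\bigr)\;=\;\PP\bigl(Y_n-Y_0\ge D(\rho_n\|\sigma_n)-\log L_n\bigr),
\end{align*}
where in the last step I use that $Y_0=0$ and $Y_n = X_n + D(\rho_n\|\sigma_n)$ is the centered super-martingale described in \Cref{cond2}.

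Under \Cref{cond2} the super-martingale $\{Y_k,\cF_k\}$ has bounded increments (by $d$) and bounded conditional variance (by $\nu^2$), so \Cref{tightazu} applies with $\kappa n := D(\rho_n\|\sigma_n)-\log L_n$, i.e.\ $\delta_n = \kappa/d = (D(\rho_n\|\sigma_n)-\log L_n)/(nd)$, and $\gamma = \nu^2/d^2$. This yields directly the claimed bound
\begin{align*}
\alpha(T_n)\;\le\;\exp\!\left(-n D_{{bin}}\!\left(\tfrac{\delta_n+\gamma}{1+\gamma}\,\Big\|\,\tfrac{\gamma}{1+\gamma}\right)\right),
\end{align*}
finishing the first half of the statement. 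Note that the assumption $D(\rho_n\|\sigma_n)\ge\log L_n$ guarantees $\delta_n\ge 0$, and the nontrivial regime is $\delta_n\le 1$ (otherwise $\alpha(T_n)=0$ already).

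To obtain \reff{soa2}, I would fix $\eps\in(0,1)$ and choose $L_n$ so that $\alpha(T_n)=\eps$; by the Taylor expansion \reff{asymp} already recorded in the paper, this is equivalent to
\begin{align*}
\frac{\delta_n^2\, n}{2\gamma}+\mathcal O(\delta_n^3 n)\;=\;\log(1/\eps),
\end{align*}
so that $\delta_n=\sqrt{2\gamma\log(\eps^{-1})/n}+\mathcal O(n^{-1})$. Substituting $\delta_n = (D(\rho_n\|\sigma_n)-\log L_n)/(nd)$ and using $d\sqrt{\gamma}=\nu$ gives $\log L_n = D(\rho_n\|\sigma_n)-\nu\sqrt{2n\log(\eps^{-1})}+\mathcal O(1)$. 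Plugging into $\beta_n(\eps)\le\beta(T_n)\le L_n^{-1}$ and taking $-\log$ yields exactly the stated lower bound $-\log\beta_n(\eps)\ge D(\rho_n\|\sigma_n)-\nu\sqrt{2n\log(\eps^{-1})}+\mathcal O(1)$.

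The only delicate points are book-keeping. First, one must verify that $\gamma<1$ (needed for \Cref{tightazu}); this is implicit in \Cref{cond2} via the convention $0<\nu<d$ inherited from the hypothesis of \Cref{tightazu}, and can be assumed without loss since $d,\nu$ can be enlarged to satisfy it. Second, I would make sure the $\mathcal O(\delta_n^3 n)$ error in \reff{asymp} translates into an $\mathcal O(1)$ correction in $\log L_n$: since $\delta_n=\Theta(n^{-1/2})$, we have $\delta_n^3 n=\Theta(n^{-1/2})\to 0$, so the perturbation to $\delta_n$ is $\mathcal O(n^{-1})$ and the corresponding perturbation to $\log L_n = nd\delta_n$ is indeed $\mathcal O(1)$. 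These are the only steps requiring care; everything else is a direct instantiation of the classical martingale technology already collected in \Cref{preliminaries}.
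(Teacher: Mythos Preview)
Your proposal is correct and follows essentially the same route as the paper: invoke \Cref{prop_1} to get the test, apply \Cref{tightazu} in place of Azuma--Hoeffding to obtain the $D_{\mathrm{bin}}$ bound on $\alpha(T_n)$, and then use the expansion \reff{asymp} to extract \reff{soa2}. The only cosmetic difference is that the paper fixes $\log L_n = D(\rho_n\|\sigma_n)-\nu\sqrt{2n\log(\eps^{-1})}$ up front and shows $\alpha(T_n)\le\eps+\mathcal O(n^{-1/2})$ before absorbing the slack, whereas you solve the equation $nD_{\mathrm{bin}}(\cdot\|\cdot)=\log(1/\eps)$ for $\delta_n$ directly; both lead to the same $\mathcal O(1)$ correction.
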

	\begin{proof} 
	The first part of the proof of this proposition is similar to the proof of \Cref{mainresult} and follows from a simple use of \Cref{tightazu} as well as \Cref{prop_1}. Using \reff{asymp}, one derives the following asymptotic upper bound for $\alpha(T_n)$:
\begin{align*}
	\alpha(T_n)\le\exp\left( -\frac{(D(\rho_n\|\sigma_n)-\log L_n)^2}{n} \frac{1}{2\nu^2}+\mathcal{O}\left(\frac{(D(\rho_n\|\sigma_n)-\log L_n)^3}{n^{2}}\right)\right).
	\end{align*}
Fix $0<\eps<1$. Choosing $\log L_n= D(\rho_n\|\sigma_n)-\sqrt{2n\log \eps^{-1}}\nu$, the last inequality can be simplified:
\begin{align*}
	\alpha(T_n)\le\exp\left(\log\eps +\mathcal{O}\left(\frac{1}{\sqrt{n}}\right)\right)=\eps+ \mathcal{O}\left(\frac{1}{\sqrt{n}}\right).
	 \end{align*}
	  This implies, by a suitable use of Taylor expansion, that 
	  \begin{align*}
	  	- \log\beta_n(\eps)\ge D(\rho_n\|\sigma_n)-\sqrt{2n\log(\eps)^{-1}}\nu +\mathcal{O}(1).
	  	\end{align*}
	  	\qed
		\end{proof}

\subsection{Example: the case of uncorrelated quantum states} \label{sec_Examples}

In this section we consider the case in which $\{\rho_n\}_{n\in\NN}$ and $\{\sigma_n\}_{n\in\NN}$ are sequences of independent (i.e.~uncorrelated) states. We show that in this case \Cref{cond-sub-mart} holds, and hence \Cref{mainresult} can be applied. We also show that in this case, a tighter concentration inequality than the Azuma-Hoeffding inequality of \Cref{AzumaHoeffding} provides better upper bounds on
the Stein- and Hoeffding errors.\\\\Suppose that the sequences $\{\rho_n\}_{n\in\NN}$ and $\{\sigma_n\}_{n\in\NN}$ are such that for each $n$, $\rho_n$ and $\sigma_n$ are of the following tensor product form:
\begin{align*}
	\rho_n=\tilde{\rho}_1\otimes\cdots \otimes \tilde{\rho}_n\qquad\text{ vs }\qquad 	\sigma_n=\tilde{\sigma}_1\otimes\cdots\otimes \tilde{\sigma}_n,
	\end{align*}
	where for each $k\in \{1,2,\ldots, n\}$, $\tilde{\rho}_k,\tilde{\sigma}_k\in \cD(\tilde{\cH}_k)$, where  $\tilde{\cH}_k$ is a finite-dimensional Hilbert space. In this case, 
\begin{align*}
	\log\Delta_{\sigma_n|\rho_n}=\sum_{k=1}^n \id^{\otimes k-1}\otimes \log\Delta_{\tilde{\sigma}_k|\tilde{\rho}_k} \otimes \id^{\otimes n-k}.
	\end{align*}
	As all the terms in the sum in the above identity commute, one finds by functional calculus that for any $u\in\RR$,
	\begin{align*}
		\e^{iu\log\Delta_{\sigma_n|\rho_n}}=\prod_{k=1}^n \exp\left({iu ~\id^{\otimes k-1}\otimes \log\Delta_{\tilde{\sigma}_k|\tilde{\rho}_k} \otimes \id^{\otimes n-k}}\right).
		\end{align*}
		This in turn implies that the random variable $X_n$ has characteristic function
		\begin{align*}
			\EE\left[\e^{iu X_n}\right]=\langle  \Omega_{\rho_n}, \e^{iu \log \Delta_{\sigma_n|\rho_n}}(\Omega_{\rho_n})\rangle&= \prod_{k=1}^n \langle \Omega_{\tilde{\rho}_k} , \e^{iu \log\Delta_{\tilde{\rho}_k|\tilde{\sigma}_k}}(\Omega_{\tilde{\rho}_k})\rangle=\prod_{k=1}^n \EE\left[ \e^{iu \tilde{X}_k} \right],
	\end{align*}
	where for each $1\le k\le n$, $\tilde{X_k}$ is a random variable associated to the pair $(\log\Delta_{\tilde{\sigma}_k|\tilde{\rho}_k}, \Omega_{\tilde{\rho}_k})$. The random variable $X_n$ has the same distribution as the sum 
$X_n = \sum_{k=1}^n \tilde{X}_k$ of independent random variables $\tilde{X}_k$, which in turn implies that the random variable $X_n-\EE[X_n]=X_n+D(\rho_n\|\sigma_n)$ has the same distribution as
	\begin{align*}
		\sum_{k=1}^n \tilde{X}_k -\EE[\tilde{X}_k]=\sum_{k=1}^n \tilde{X}_k +D(\tilde{\rho}_k\|\tilde{\sigma}_k).
	\end{align*}	
	Hence, without loss of generality, the random variable $Y_n$ appearing in \Cref{cond-sub-mart} is equal to $\sum_{k=1}^n \tilde{X_k}+D(\tilde{\rho}_k\|\tilde{\sigma}_k)$. Now $\{Y_n,\cF_n\}_{n\in\NN\cup \{0\}}$ is a martingale of the form of \Cref{randomwalk}, where $\cF_n=\sigma(\tilde{X}_k,k\le n)$. Moreover the random variables $Y_k-Y_{k-1}:=\tilde{X}_k+D(\tilde{\rho}_k\|\tilde{\sigma}_k)$ are bounded by 
	\begin{align}\label{dk}
	d_k:=	\|\log\Delta_{\tilde{\sigma}_k|\tilde{\rho}_k} + D(\tilde{\rho}_k\|\tilde{\sigma}_k) \|_{\infty}.
		\end{align}
This analysis leads to the following corollary of \Cref{mainresult}:
\begin{theorem}[Upper bounds for uncorrelated states]\label{uncorcase}
States of the form	
\begin{align}\label{iidstates}
\rho_n=\tilde{\rho}_1\otimes\cdots \otimes\tilde{\rho}_n\qquad\text{ and }\qquad 	\sigma_n=\tilde{\sigma}_1\otimes\cdots\otimes \tilde{\sigma}_n
\end{align}
on a Hilbert space $\tilde{\cH}_1\otimes\cdots\otimes\tilde{\cH}_n$ satisfy \Cref{cond-sub-mart}. Therefore, they satisfy the bounds given in \reff{qstein-thm4} and \reff{hoeffding-thm4} on the error exponents of type II with coefficients $d_k$ given by \Cref{dk}. More precisely, for any $\eps>0$ there exists a sequence of tests $\{T^{\eps}_n\}_{n\in\NN}$ such that for any $n\in \NN$
	\begin{align*}
		\alpha(T^\eps_n)\le \eps,\qquad
		\beta(T^\eps_n)\le \exp\left(-D(\rho_n\|\sigma_n)+\sqrt{2\log(1/\eps) \sum_{k=1}^n d_k^2}\right).
	\end{align*}
	Similarly, for any $r>0$, there exists a sequence of tests $\{\tilde{T}_n^{r}\}_{n\in\NN}$ such that for each $n\in\NN$:
	\begin{align*}
		\alpha(\tilde{T}^{r}_n)\le \e^{-nr},
		\qquad \beta(\tilde{T}^r_n)\le \exp\left(-D(\rho_n\|\sigma_n)+\sqrt{2nr \sum_{k=1}^n d_k^2}\right),
	\end{align*}
	This implies that for each $n\in\NN$:
	\begin{align}
		&	\beta_n(\eps)\le \exp\left(-D(\rho_n||\sigma_n)+\sqrt{2\sum_{k=1}^n d_k^2 \log(1/\eps)}\right)\label{stein}\\
		&				\tilde{\beta}_n(r)\le \exp\left(-D(\rho_n||\sigma_n)+\sqrt{2nr\sum_{k=1}^n d_k^2 }\right).\label{hoeffding}
	\end{align}	
\end{theorem}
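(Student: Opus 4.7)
The plan is to reduce the claim to a direct application of \Cref{mainresult} by verifying that tensor-product states satisfy \Cref{cond-sub-mart}. The key structural observation, essentially laid out in the discussion preceding the theorem, is that the operators $\id^{\otimes k-1}\otimes \log\Delta_{\tilde{\sigma}_k|\tilde{\rho}_k}\otimes \id^{\otimes n-k}$ pairwise commute, so functional calculus applied to $\log\Delta_{\sigma_n|\rho_n}$ factorizes additively across the tensor product.

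First, I would invoke the characteristic-function factorization $\EE[\e^{iu X_n}] = \prod_{k=1}^n \EE[\e^{iu\tilde{X}_k}]$ derived above to conclude that $X_n$ equals in distribution the sum $\sum_{k=1}^n \tilde{X}_k$ of \emph{independent} random variables $\tilde{X}_k$ associated to the pairs $(\log\Delta_{\tilde{\sigma}_k|\tilde{\rho}_k}, \Omega_{\tilde{\rho}_k})$. By \Cref{2.7} each $\tilde{X}_k$ has mean $-D(\tilde{\rho}_k\|\tilde{\sigma}_k)$, while additivity of the quantum relative entropy on tensor products gives $D(\rho_n\|\sigma_n) = \sum_{k=1}^n D(\tilde{\rho}_k\|\tilde{\sigma}_k)$. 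Consequently $Y_n := X_n + D(\rho_n\|\sigma_n)$ is equal in distribution to the centered sum $\sum_{k=1}^n \bigl(\tilde{X}_k + D(\tilde{\rho}_k\|\tilde{\sigma}_k)\bigr)$.

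Next, by \Cref{randomwalk}, this centered sum of independent integrable random variables is a martingale, and \emph{a fortiori} a super-martingale, with respect to its natural filtration $\cF_n = \sigma(\tilde{X}_1,\ldots,\tilde{X}_n)$. For the almost-sure boundedness of the increments, I would use the fact that the law $\mu_{\tilde{\sigma}_k|\tilde{\rho}_k}$ of $\tilde{X}_k$ is supported on the spectrum of $\log\Delta_{\tilde{\sigma}_k|\tilde{\rho}_k}$, so that $|Y_k - Y_{k-1}| = |\tilde{X}_k + D(\tilde{\rho}_k\|\tilde{\sigma}_k)| \le d_k$ almost surely, with $d_k$ as defined in \reff{dk}. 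With \Cref{cond-sub-mart} thereby verified, the inequalities \reff{qstein-thm4} and \reff{hoeffding-thm4}, together with the accompanying sequences of tests $\{T_n^\eps\}_{n\in\NN}$ and $\{\tilde{T}_n^r\}_{n\in\NN}$, follow immediately from \Cref{mainresult} applied to the sequences $\{\rho_n\}_{n\in\NN}$ and $\{\sigma_n\}_{n\in\NN}$.

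The main (minor) technical step I anticipate is the rigorous passage from the operator-level spectral bound to the almost-sure bound on the classical random variable. This is handled by unpacking the spectral-measure identity \Cref{spec-meas}: in finite dimension the relevant spectra are discrete, so the support of the law of $\tilde{X}_k$ coincides with $\spec(\log\Delta_{\tilde{\sigma}_k|\tilde{\rho}_k})$, and the operator-norm quantity in \reff{dk} is exactly the maximal absolute value attained. Everything else is a bookkeeping translation of the already-derived factorization properties into the language of \Cref{cond-sub-mart}.
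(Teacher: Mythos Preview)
Your proposal is correct and follows essentially the same route as the paper: verify \Cref{cond-sub-mart} via the characteristic-function factorization already established in the discussion preceding the theorem, identify $Y_n$ with a sum of centered independent random variables so that \Cref{randomwalk} gives the martingale property, bound the increments by the operator norm \reff{dk}, and then invoke \Cref{mainresult}. The only (harmless) imprecision is that the support of $\mu_{\tilde{\sigma}_k|\tilde{\rho}_k}$ is merely \emph{contained in} $\spec(\log\Delta_{\tilde{\sigma}_k|\tilde{\rho}_k})$ rather than equal to it, but containment is all you need for the almost-sure bound $|Y_k-Y_{k-1}|\le d_k$.
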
	
In this special case of a sum of independent random variables, we can use the tighter concentration inequality, namely the Kearns-Saul inequality (\Cref{KSineq}) to get better bounds than the ones given in \Cref{uncorcase}. This yields:
\begin{theorem}[Improved upper bounds for uncorrelated states]\label{theorem6}
	Let $\{\rho_n\}_{n\in\NN}$ and $\{\sigma_n\}_{n\in\NN}$ two sequences of states of the form given in \Cref{iidstates}. Then  for any $\eps>0$ there exists a sequence of tests $\{T^{\eps}_n\}_{n\in\NN}$ such that, for any $n\in \NN$,
	\begin{align*}
		\alpha(T^\eps_n)\le \eps,\qquad
		\beta(T^\eps_n)\le \exp\left(-D(\rho_n\|\sigma_n)+\sqrt{4\log(1/\eps) \sum_{k=1}^n c_k}\right),
	\end{align*}
	where the constants $c_k$ are given by
		\begin{align}
			c_k:=\left\{ \begin{aligned}\label{ck}
				&\frac{(1-2p_k)(b_k-a_k)^2}{4\log\left( \frac{1-p_k}{p_k}\right)}~~\text{ if }p_k\ne \frac{1}{2}\\
				& \frac{(b_k-a_k)^2}{8}~~~~~~~~~~~~~\text{ if }p_k=\frac{1}{2},
			\end{aligned}\right.
		\end{align}
		with $a_k,b_k$ and $p_k$ defined as
		\begin{align*}
			a_k:= \log\left(\frac{\lambda_{\max}(\tilde{\sigma}_k)}{\lambda_{\min}(\tilde{\rho}_k)}\right),\quad
			b_k:= \log\left(\frac{\lambda_{\min}(\tilde{\sigma}_k)}{\lambda_{\max}(\tilde{\rho}_k)}\right),\quad
&			p_k:=\frac{-D(\tilde{\rho}_k\|\tilde{\sigma}_k)-a_k}{b_k-a_k}.
		\end{align*} 
Similarly, for any $r>0$, there exists a sequence of tests $\{\tilde{T}_n^{r}\}_{n\in\NN}$ such that for each $n\in\NN$:
	\begin{align*}
		\alpha(\tilde{T}^{r}_n)\le \e^{-nr},
		\qquad \beta(\tilde{T}^r_n)\le \e^{-D(\rho_n\|\sigma_n)+\sqrt{4nr \sum_{k=1}^n c_k}},
	\end{align*}
 This implies that for each $n\in\NN$:
	\begin{align}
		&	\beta_n(\eps)\le \exp\left({-D(\rho_n||\sigma_n)+\sqrt{4\log(\eps^{-1})\sum_{k=1}^n c_k }}\right)\label{stein2}\\
		&				\tilde{\beta}_n(r)\le \exp\left({-D(\rho_n||\sigma_n)+\sqrt{4nr\sum_{k=1}^n c_k }}\right).\label{hoeffding2}
	\end{align}

\end{theorem}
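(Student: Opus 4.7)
The plan is to mirror the proof of \Cref{mainresult} (and hence of \Cref{uncorcase}), replacing the Azuma-Hoeffding inequality by the sharper Kearns-Saul inequality (\Cref{KSineq}). The key input, already developed in the discussion preceding the statement, is that $X_n$ has the same law as $\sum_{k=1}^n \tilde{X}_k$, a sum of \emph{independent} bounded random variables with $\EE[\tilde{X}_k]=-D(\tilde{\rho}_k\|\tilde{\sigma}_k)$.

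The first step is to identify the almost-sure range of each $\tilde{X}_k$. By \Cref{eq_specprojDelta}, the spectrum of $\log\Delta_{\tilde{\sigma}_k|\tilde{\rho}_k}$ consists of the numbers $\log(\mu/\lambda)$ with $\mu\in\spec(\tilde{\sigma}_k)$ and $\lambda\in\spec(\tilde{\rho}_k)$, so $\tilde{X}_k\in[b_k,a_k]$ almost surely, with $a_k,b_k$ as in the statement (note $a_k\ge b_k$, so the quantity $p_k$ of the theorem indeed lies in $[0,1]$).

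The second step applies \Cref{prop_1}: for any $0<L_n\le \e^{D(\rho_n\|\sigma_n)}$ there is a test $T_n$ with $\beta(T_n)\le L_n^{-1}$ and
\[
\alpha(T_n)\le \PP(X_n\ge -\log L_n)=\PP\!\left(\sum_{k=1}^n(-\tilde{X}_k)-D(\rho_n\|\sigma_n)\le -(D(\rho_n\|\sigma_n)-\log L_n)\right).
\]
I would then invoke \Cref{KSineq} applied to the independent, bounded random variables $-\tilde{X}_k\in[-a_k,-b_k]$, with threshold parameter $\alpha_n\sqrt{n}:=D(\rho_n\|\sigma_n)-\log L_n\ge 0$. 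A direct computation shows that the quantities $p_k$ and $c_k$ produced by \Cref{KSineq} in this application coincide with those in the statement: indeed
\[
p_k^{\mathrm{KS}}=\frac{\EE[-\tilde{X}_k]-(-a_k)}{-b_k-(-a_k)}=\frac{D(\tilde{\rho}_k\|\tilde{\sigma}_k)+a_k}{a_k-b_k}=\frac{-D(\tilde{\rho}_k\|\tilde{\sigma}_k)-a_k}{b_k-a_k}=p_k,
\]
and $(-b_k-(-a_k))^2=(b_k-a_k)^2$ is symmetric in the order. Consequently
\[
\alpha(T_n)\le \exp\!\left(-\frac{(D(\rho_n\|\sigma_n)-\log L_n)^2}{4\sum_{k=1}^n c_k}\right).
\]

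The final step is to tune $L_n$ to the desired type~I error constraint: setting $\log L_n=D(\rho_n\|\sigma_n)-\sqrt{4\log(1/\eps)\sum_k c_k}$ yields $\alpha(T_n^{\eps})\le \eps$ and $\beta_n(\eps)\le \beta(T_n^{\eps})\le L_n^{-1}$, proving \Cref{stein2}; the Hoeffding bound \Cref{hoeffding2} is obtained identically after replacing $\eps$ by $\e^{-nr}$ (giving the test $\tilde{T}_n^r$). I do not foresee a genuine obstacle: the structure of the proof is identical to that of \Cref{mainresult}, the only real content being the replacement of the concentration step. The strict improvement over \Cref{uncorcase} is encoded in the inequality $c_k\le (b_k-a_k)^2/8$, which is tight only when $p_k=\tfrac{1}{2}$; hence the new upper bounds are strictly sharper whenever any $p_k\ne \tfrac{1}{2}$, as will be seen quantitatively in the comparison with \cite{AMV12} in \Cref{fig1}.
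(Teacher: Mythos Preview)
Your proposal is correct and follows precisely the approach the paper indicates: replace the Azuma--Hoeffding step in the proof of \Cref{mainresult} by the Kearns--Saul inequality (\Cref{KSineq}), using the tensor-product structure to write $X_n$ as a sum of independent bounded variables $\tilde{X}_k$. Your handling of the sign and interval conventions (applying \Cref{KSineq} to $-\tilde{X}_k\in[-a_k,-b_k]$ so that the lower-tail formulation matches the needed upper-tail bound, and checking that the resulting $p_k$ and $c_k$ coincide with those in the statement) is exactly the bookkeeping the paper leaves implicit.
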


\subsubsection{Reduction to the i.i.d. case}\label{sec:iid}
\Cref{uncorcase} gives upper bounds on the Stein- and Hoeffding errors for uncorrelated states. In the case of i.i.d.~states ($\rho^{\otimes n}$ vs.~$\sigma^{\otimes n}$), the bound corresponding to the Stein error \reff{stein} reduces to
\begin{align}\label{finite}
	\frac{1}{n}\log \beta_n(\eps)\le -D(\rho\|\sigma) +\frac{1}{\sqrt{n}} h(\eps),
\end{align}	
where
\begin{align}\label{h}
	h(\eps):=\sqrt{2 \log\eps^{-1}}	\|\log\Delta_{{\sigma}|{\rho}} + D({\rho}\|\sigma) \|_{\infty}
\end{align}	
Taking the limit on both sides, the result is in accordance with quantum Stein's lemma:
\begin{align*}
	\lim_{n\to \infty}-\frac{1}{n}\beta_n(\eps)=D(\rho\|\sigma).
	\end{align*}
As mentioned in the introduction, in the i.i.d. case, the authors of \cite{L14,TH13} moreover showed that
\begin{align}\label{second}
-	\log \beta_n(\eps)=nD(\rho\|\sigma)+\sqrt{n V(\rho\|\sigma)}\Phi^{-1}(\eps)+O(\log{n})
\end{align}	
	where $V(\rho\|\sigma)$ is the quantum information variance defined in \Cref{V}, and $\Phi$ is the cumulative distribution function of a standard Gaussian random variable. In fact, Li's proof is very similar to our proof of \Cref{mainresult}, as he was the first to prove an i.i.d. version of \Cref{prop_1} which was later on adapted to fit non i.i.d. settings in \cite{DPR16}. This result coupled to a Central limit theorem (or more precisely a refinement of it called the Berry Esseen theorem) were the two main ingredients of the proof of \reff{second}. This equation shows that the correct order of the deviation of $\frac{1}{n}\log\beta_n(\eps)$ from $-D(\rho\|\sigma)$ is indeed $\frac{1}{\sqrt{n}}$ (at least for $\eps\ne 1/2$). More precisely, \reff{second} implies that
	 \begin{align}\label{asympt}
	 	\limsup_{n\to\infty} \sqrt{n}\left(\frac{1}{n}\log\beta_n(\eps)+D(\rho\|\sigma)\right)= -\sqrt{V(\rho\|\sigma)}\Phi^{-1}(\eps)
	 \end{align}
	 This implies that for all $\eps\in [0,1]$ and all $1\ge\eps^\prime>\eps$ there exist infinitely many $n\in\NN$ for which
	 \begin{align*}
	 	\sqrt{n}\left( \frac{1}{n}\log\beta_n(\eps)+D(\rho\|\sigma)\right)\ge -\sqrt{V(\rho\|\sigma)}\Phi^{-1}(\eps^\prime)
	 \end{align*}	
	 Using \reff{finite} this imposes that for all $\eps^\prime>\eps$
	 \begin{align}\label{neg}
	 	\sqrt{2\log\eps^{-1}}\|\log\Delta_{\sigma|\rho} + D(\rho\|\sigma) \|_{\infty}\ge -\sqrt{V(\rho\|\sigma)}\Phi^{-1}(\eps^\prime).
	 	\end{align} 
	 As pointed out in \cite{AMV12}, the negative sign on the right hand side of \reff{neg} is justified by the fact that, for 
	 $\eps<1/2$, the right hand side is itself negative. 	 Moreover, to prove \reff{neg} it suffices to show that for any $0<\eps<1/2$,
	   \begin{align*}
	   	\frac{\|\log\Delta_{{\sigma}|{\rho}} + D({\rho}\|{\sigma}) \|_{\infty}}{\sqrt{V(\rho\|\sigma)}}\ge -\frac{\Phi^{-1}(\eps)}{\sqrt{2\log\eps^{-1}}}.
	   \end{align*} 
	 Let us focus on the right hand side of the above equation. By the Gaussian concentration inequality (see e.g. \cite{BLM13} Theorem 5.6),
	 \begin{align*}
	 \Phi(x)\le \e^{-x^2/2},
	 \end{align*}
	setting $x = \Phi^{-1}(\eps)$, we hence infer that for any $0<\eps<1/2$,
	\begin{align*}
		\eps \le \exp\left(-(\Phi^{-1}(\eps))^2/2\right),
		\end{align*}
which in turn implies that
			\begin{align}\label{philog}
				-\Phi^{-1}(\eps) \le \sqrt{2\log\eps^{-1}}.
				\end{align}
		Hence, the inequality (\ref{neg}) holds provided
		\begin{align*}
			\sqrt{V(\rho\|\sigma)}\le \|\log\Delta_{{\sigma}|{\rho}} + D({\rho}\|{\sigma}) \|_{\infty},
		\end{align*}	
		which can easily be verified as follows:
		\begin{align*}
			V(\rho\|\sigma)&=\langle\Omega_\rho,(\log\Delta_{\sigma|\rho}+D(\rho\|\sigma)\id)^2(\Omega_\rho)\rangle\\
			&\le \|\log\Delta_{\sigma|\rho}+D(\rho\|\sigma)\id\|^2_{\infty}
			\end{align*}
Our theorems should finally be compared with the results of \cite{AMV12} where similar bounds have been derived in the i.i.d. setting using different techniques. For example, as already mentioned in the introduction, for the Stein error, it was shown in \cite{AMV12} that (see Theorem 3.3 and Equation (35) of \cite{AMV12}):
	\begin{align}\label{bounds}
	-D(\rho\|\sigma)-\frac{f(\eps)}{\sqrt{n}}	\le\frac{1}{n}\log{\beta_{n}(\eps)}\le -D(\rho\|\sigma)+\frac{g(\eps)}{\sqrt{n}},
	\end{align}	
	where 
	\begin{align}\label{fg}
		f(\eps)= 4\sqrt{2}\log\eta \log(1-\eps)^{-1},\qquad g(\eps)=4\sqrt{2} \log \eta \log\eps^{-1}
		\end{align}
		and $\eta:=1+\e^{1/2D_{3/2}(\rho\|\sigma)}+\e^{-1/2D_{1/2}(\rho\|\sigma)}$.
		The upper bound in \reff{bounds} was found via semidefinite programming, the use of a bound on the optimal error probability in symmetric hypothesis testing in terms of the $\alpha$-R\'{e}nyi divergence (originally derived in \cite{Aetal07}), and an inequality relating the $\alpha$-R\'{e}nyi divergence of the states $\rho$ and $\sigma$ to their relative entropy. The lower bound was derived using the monotonicity of the $\alpha$-R\'{e}nyi divergence under completely positive trace-preserving (CPTP) maps, the CPTP map here being the measurement channel associated to any given test. \\\\
Our bound, given in \reff{finite}, is tighter than the corresponding upper bound obtained in \cite{AMV12} (given by \reff{bounds} and \reff{fg}) for  
 $\eps\le \eps_0$, where
	\begin{align*}
	 \eps_0:=\exp\left(-\frac{\|\log\Delta_{\sigma|\rho}+D(\rho\|\sigma) \|_\infty^2}{16(\log\eta)^2}\right),
	\end{align*}			
 as the dependence of $h$ on $\eps$ is given by the square root of $\log\eps^{-1}$, whereas $g$ behaves as $\log\eps^{-1}$. 
\smallskip

In fact, \Cref{theorem6} yields a bound which is tighter than \reff{finite} and is given by
	\begin{align*}
		\frac{1}{n}\log\beta_n(\eps)\le -D(\rho\|\sigma)+\frac{\tilde{h}(\eps)}{\sqrt{n}},
		\end{align*}
where 
\begin{align}\label{tilde}
	\tilde{h}(\eps):=\sqrt{4c\log(\eps^{-1})},
\end{align}		
and $c$ is defined as follows:
	\begin{align}\label{cconstant}
	 	c:=\left\{ \begin{aligned}
			&\frac{(1-2p)(b-a)^2}{4\log\left( \frac{1-p}{p}\right)}~~\text{ if }p\ne \frac{1}{2}\\
			& \frac{(b-a)^2}{8}~~~~~~~~~~~~~\text{ if }p=\frac{1}{2},
		\end{aligned}\right.
	\end{align}
	with 
	\begin{align*}
		a:= \log\left(\frac{\lambda_{\max}({\sigma})}{\lambda_{\min}({\rho})}\right),\quad
		b:= \log\left(\frac{\lambda_{\min}({\sigma})}{\lambda_{\max}({\rho})}\right),\quad
					p:=\frac{-D({\rho}\|{\sigma})-a}{b-a}.
	\end{align*} 

		The above bound is tighter than \reff{bounds} for all $\eps\le \tilde{\eps}_0$, where 
				\begin{align*}
				 \tilde{\eps}_0&:=\exp\left(-\frac{c^2}{8(\log\eta)^2}\right),
				\end{align*}			
			where $\eta$ is given below \reff{fg}. \\\\
			Finally, one can also compare the asymptotic bound of \reff{soa2} for the i.i.d.~case to \reff{second}. In this case, with the notations of \Cref{cond2},
	\begin{align*}
		\EE[(Y_k-\EE[Y_k|\cF_{k-1}])^2|\cF_{k-1}]&=\EE[(Y_k-Y_{k-1})^2|\cF_{k-1}]=\EE[(\tilde{X}_k+D(\rho\|\sigma))^2|\cF_{k-1}]\\
		&=\EE[(\tilde{X}+D(\rho\|\sigma))^2]=\EE[\tilde{X}^2]-D(\rho\|\sigma)^2=V(\rho\|\sigma).
		\end{align*}
where $\tilde{X}$ and the $\tilde{X}_k$'s are independent, identically distributed random variables of law $\mu_{\sigma|\rho}$, the last identity arising from \Cref{VV}. Hence, \reff{soa2} holds with $\nu=\sqrt{V(\rho\|\sigma)}$ and can be expressed as follows:
\begin{align}\label{secondnewnew}
\frac{1}{n}	\log\beta_n(\eps)\le-D(\rho\|\sigma)+\frac{\sqrt{2\log(\eps^{-1})V(\rho\|\sigma)}}{\sqrt{n}}+\mathcal{O}\left(\frac{1}{n}\right).
	\end{align}
From \reff{philog},
\begin{align}\label{s1s2}
	s_2(\eps):=\sqrt{2\log(\eps^{-1})V(\rho\|\sigma)}\ge -\Phi^{-1}(\eps)\sqrt{V(\rho\|\sigma)}\equiv s_1(\eps),
	\end{align}
	which implies that for $n$ large enough, the asymptotic bound (\ref{secondnewnew}) is looser than the one given by \reff{soa}. \Cref{fig1} shows an example for which our bounds are significantly 
closer to the second-order asymptotic behaviour (given by $s_1(\eps)$) than the original upper bound of \cite{AMV12} for 
$\eps<\tilde{\eps}_0,\eps_0$.

	\begin{figure}[H]
		\centering
		\includegraphics[width=0.80\textwidth]{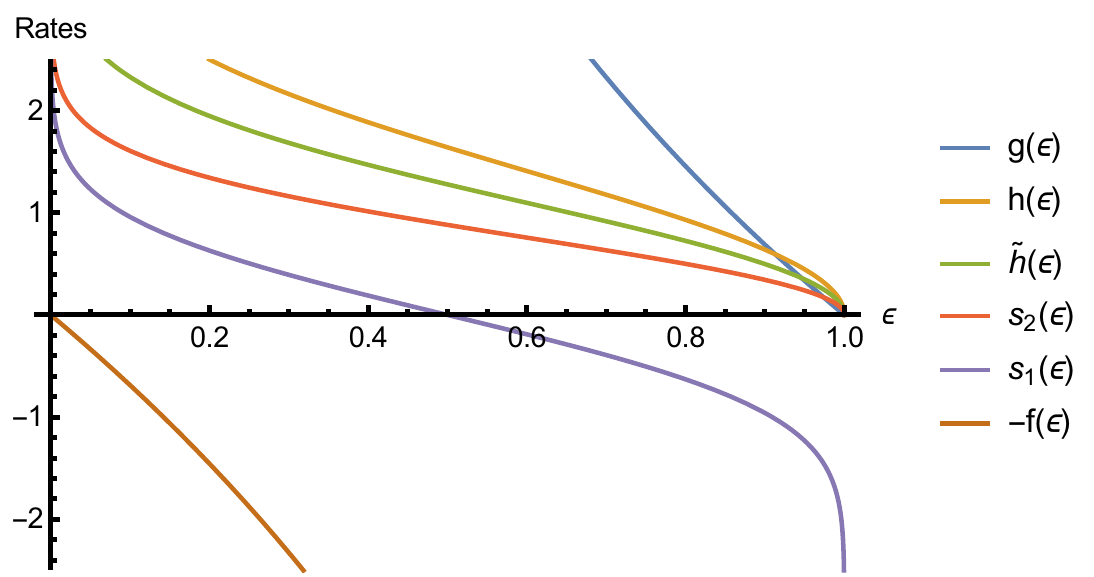}
		\caption{Bounds on the function $Q(n, \eps)$ (given by \Cref{qne}) for a pair of randomly generated qubit states $\rho$ and $\sigma$; here $\rho$ has Bloch vector $(-0.177483, 0.365807, 0.291007
			)$ whereas $\sigma$ has Bloch vector $(-0.452239, -0.141906, -0.159193
			)$. The functions $g$ and $f$ (defined in \reff{fg}) correspond to the finite blocklength bounds found in \cite{AMV12}; $h$ (defined in \reff{h}) corresponds to the bound given in \Cref{uncorcase}, and $\tilde{h}$ (defined in \reff{tilde}) corresponds to the tighter bound given in \Cref{theorem6}. The functions $s_1$ and $s_2$ (cf. \reff{s1s2}) correspond to the second-order asymptotic behaviours stated in \Cref{second} and in \reff{secondnewnew}, respectively.}\label{fig1}
	\end{figure}

	\section{Correlated states and factorization properties}\label{corr}
	
	In the last section, we derived bounds on the optimal type II errors in asymmetric hypothesis testing for uncorrelated states. In the reminder of this paper, we extend these results to a particular class of correlated states, which satisfy a so-called factorization property described below. As an application of this, we obtain bounds on the capacities of certain classes of classical-quantum channels with memory.\\\\Let $\cH$ be a finite dimensional Hilbert space, and define a family $\{\rho_n\}_{n\in\NN}$ of states such that for each $n$, $\rho_n\in\cD(\cH^{\otimes n})$. We say that the family $\{\rho_n\}_{n\in\NN}$ satisfies a \textit{(non-homogeneous) upper (lower) factorization property} if there exists an auxiliary family $\{\tilde{\rho}_k\}_{k\in\NN}$ of states on $\cD(\cH)$ and a constant $R>0$ such that for each $n\ge1$,
	\begin{align}
		&\rho_n\le R~ \rho_{n-1}\otimes \tilde{\rho}_n~~~~~~~~~~~\textit{upper factorization}\label{upfac}\\
		&\rho_n\ge R^{-1} \rho_{n-1}\otimes \tilde{\rho}_n~~~~~~~~~\textit{lower factorization}.\label{lfp}
	\end{align}
	The homogeneous case, where for each $n$, $\tilde{\rho}_n=\rho$, for some fixed state $\rho\in\cD(\cH)$ was first studied in \cite{HMO07,HMO08b}, where asymptotic results in the context of symmetric and asymmetric quantum hypothesis testing were derived for such families of states. Obviously, uncorrelated states satisfy lower and upper factorization, with $R=1$. Gibbs states of translation-invariant finite-range interactions were shown to satisfy both lower and upper homogeneous factorization properties for $R>1$ (see \cite{HMO07} Lemma 4.2). Similarly, finitely correlated states, as defined in \cite{FNW92}, were shown to satisfy a homogeneous upper factorization property as well, as a homogeneous lower factorization in some cases (see \cite{HMO07} Proposition 4.4 and Example 4.6). 
	The reason for introducing the non-homogeneous extension will become clear in \Cref{channelcoding}.\\\\
	An example of a family of states satisfying non-homogeneous upper factorization is provided by extending the definition in \cite{FNW92} of finitely correlated states as follows: suppose given a spin chain with one-site algebra $\mathcal{A}\subset \cB(\cH)$, and let	$\cB$ be a $C^*$-subalgebra of $\cB(\mathcal{K})$, for some finite dimensional Hilbert spaces $\mathcal{H}$ and $\mathcal{K}$, $\{\cE_n\}_{n\in\NN}$ a family of completely positive, unital (CPU) maps $\cE_n:\cA\otimes \cB\to \cB$, and $\rho$ a faithful state on $\cB$. Assume further that for any $n\in\NN$,
	\begin{align*}
		\tr_\cA\mathcal{E}_{n*}(\rho)=\rho,
		\end{align*} 
	where $\cE_{n*}:\cB\to\cA\otimes \cB$ stands for the pre-adjoint map of $\cE$. Construct then the family $\{\tau_n\}_{n\in\NN}$ of states on $\cA^{\otimes n}\otimes \cB$ as follows:
	\begin{align*}
		\tau_1:=\cE_{1*}(\rho),~~~~~~\tau_n:=(\id_\cA^{\otimes n-1}\otimes \cE_{n*})\circ~...~\circ(\id_\cA\otimes \cE_{2*})\circ\cE_{1*}(\rho),~~~~n=2,3,...
		\end{align*}
	To obtain a family of reduced states on the spin chain, we then trace out the auxiliary system $\cB$:
	\begin{align*}
		\rho_n:=\tr_\cB\tau_n.
	\end{align*}
In this case, we say that $(\cB,\{\cE_n\}_{n\in\NN},\rho)$ is a \textit{generating triple} for the family $\rho_n$.	In the special case in which $\mathcal{E}_n:=\cE$ for each $n$, where $\cE:\cA\otimes \cB\to \cB$ is a given CPU map, the states $\rho_n$ are the so-called finitely correlated states and provide a non-commutative generalization of the notion of (homogeneous) Markov chains. For the same reason, the above construction can be viewed as an extension of non-homogeneous Markov chains to the quantum setting. 
	\begin{proposition}\label{nhcsupp}
Non-homogeneous finitely correlated states satisfy the upper factorization property (\ref{upfac}), with $\tilde{\rho}_n:=\tr_\cB\cE_{n*}(\rho)$ and $R>1$.
	\end{proposition}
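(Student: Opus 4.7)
\medskip
\noindent\textbf{Proof plan for \Cref{nhcsupp}.} The plan is to proceed in three steps: first, establish that the $\mathcal{B}$-marginal of $\tau_{n-1}$ is always equal to $\rho$; second, reduce the claim to an operator inequality comparing $\tau_{n-1}$ to $\rho_{n-1}\otimes \rho$; third, establish this operator inequality with a constant independent of $n$.

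\emph{Step 1 (invariance of the auxiliary marginal).} I would first show by induction on $n$ that $\operatorname{Tr}_{\mathcal{A}^{\otimes n-1}}\tau_{n-1}=\rho$ for every $n\ge 1$. The base case $n=1$ is immediate since $\operatorname{Tr}_\mathcal{A}\tau_1=\operatorname{Tr}_\mathcal{A}\mathcal{E}_{1*}(\rho)=\rho$ by the hypothesis on the generating triple. For the inductive step, one uses the recursion $\tau_n=(\mathrm{id}_\mathcal{A}^{\otimes n-1}\otimes \mathcal{E}_{n*})(\tau_{n-1})$, commutes the partial traces over the first $n-1$ copies of $\mathcal{A}$ with $\mathcal{E}_{n*}$, and then applies the invariance condition $\operatorname{Tr}_\mathcal{A}\mathcal{E}_{n*}(\rho)=\rho$.

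\emph{Step 2 (reduction).} Suppose one can establish, for some $R\ge 1$ independent of $n$, the operator inequality
\begin{equation*}
\tau_{n-1}\ \le\ R\,\bigl(\rho_{n-1}\otimes\rho\bigr)\qquad\text{on}\qquad \mathcal{A}^{\otimes n-1}\otimes\mathcal{B}.
\end{equation*}
Since $\mathrm{id}_\mathcal{A}^{\otimes n-1}\otimes \mathcal{E}_{n*}$ is completely positive and therefore positive, applying it to both sides preserves the inequality, yielding
\begin{equation*}
\tau_n\ =\ (\mathrm{id}_\mathcal{A}^{\otimes n-1}\otimes \mathcal{E}_{n*})(\tau_{n-1})\ \le\ R\,\bigl(\rho_{n-1}\otimes \mathcal{E}_{n*}(\rho)\bigr).
\end{equation*}
Tracing out $\mathcal{B}$ on both sides and using $\tilde{\rho}_n=\operatorname{Tr}_\mathcal{B}\mathcal{E}_{n*}(\rho)$ then gives $\rho_n\le R\,\rho_{n-1}\otimes\tilde{\rho}_n$, which is exactly the upper factorization \eqref{upfac}.

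\emph{Step 3 (the key inequality and main obstacle).} The nontrivial part is to establish $\tau_{n-1}\le R\,\rho_{n-1}\otimes\rho$ with $R$ uniform in $n$. The condition on the $\mathcal{B}$-marginal allows one to write $\tau_{n-1}=(\mathbb{I}_{\mathcal{A}^{\otimes n-1}}\otimes\rho^{1/2})\,D_{n-1}\,(\mathbb{I}_{\mathcal{A}^{\otimes n-1}}\otimes \rho^{1/2})$ with $D_{n-1}\ge 0$ satisfying $\operatorname{Tr}_{\mathcal{A}^{\otimes n-1}}D_{n-1}=\mathbb{I}_\mathcal{B}$, so $\|D_{n-1}\|_\infty\le\dim\mathcal{B}$; this already gives $\tau_{n-1}\le (\dim\mathcal{B})\,\mathbb{I}_{\mathcal{A}^{\otimes n-1}}\otimes\rho$, but with $\mathbb{I}$ instead of $\rho_{n-1}$. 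Upgrading $\mathbb{I}$ to $\rho_{n-1}$ with a constant independent of $n$ is the heart of the proof and cannot be obtained from the marginal constraint alone: it must exploit the specific iterative structure $\tau_{n-1}=(\mathrm{id}^{\otimes n-2}\otimes\mathcal{E}_{n-1,*})\circ\dots\circ\mathcal{E}_{1*}(\rho)$. My strategy here would be an inductive bound at the level of conditional operators, propagating an inequality of the form $(\rho_{k}^{-1/2}\otimes\rho^{-1/2})\,\tau_{k}\,(\rho_{k}^{-1/2}\otimes\rho^{-1/2})\le R\,\mathbb{I}$ from step $k$ to step $k+1$ by absorbing the new Kraus operators of $\mathcal{E}_{k+1,*}$ and using the invariance from Step~1 to cancel the $\rho$-factors. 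The resulting uniform constant $R$ is a supremum over $n$ of an explicit quantity depending only on the single-step channels $\mathcal{E}_{n*}$ and on $\rho$ (essentially the max-relative-entropy of $\mathcal{E}_{n*}(\rho)$ with respect to its product of marginals), which is finite whenever the family $\{\mathcal{E}_n\}$ is such that the states $\{\tilde{\rho}_n\}$ admit a uniform lower bound on their spectrum; this is the implicit regularity assumption needed to obtain $R>1$ genuinely independent of $n$.
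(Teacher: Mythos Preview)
Your reduction in Step~2 is exactly right and matches the paper's structure. The gap is in Step~3: you correctly identify $\tau_{n-1}\le R\,\rho_{n-1}\otimes\rho$ as the key inequality, but then assert that it ``cannot be obtained from the marginal constraint alone'' and must exploit the iterative structure of the $\mathcal{E}_{k*}$. This is incorrect, and your proposed inductive scheme with its ``implicit regularity assumption'' on the family $\{\tilde\rho_n\}$ is neither needed nor clearly convergent.

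The missing ingredient is a one-line lemma (Lemma~4.3 of \cite{HMO07}, restated in the paper as \Cref{lemma1}): since $\rho$ is a faithful state on the finite-dimensional algebra $\mathcal{B}$, there exists $R>1$, depending only on $\rho$ and $\mathcal{B}$, such that the map $R\Phi-\id_\mathcal{B}$ is \emph{completely} positive, where $\Phi(b)=\tr(\rho b)\,\mathbb{I}_\mathcal{B}$. Passing to preduals, $R\Phi_*-\id$ is completely positive with $\Phi_*(a)=\rho\,\tr(a)$, so $\id_{\mathcal{A}^{\otimes n-1}}\otimes(R\Phi_*-\id)$ is a positive map. Applying it to $\tau_{n-1}\ge 0$ gives immediately
\[
\tau_{n-1}\ \le\ R\,(\id\otimes\Phi_*)(\tau_{n-1})\ =\ R\,\bigl(\tr_{\mathcal{B}}\tau_{n-1}\bigr)\otimes\rho\ =\ R\,\rho_{n-1}\otimes\rho,
\]
with $R$ manifestly independent of $n$ and of the channels $\mathcal{E}_n$. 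Your Step~2 then finishes the argument. Note that the relevant marginal here is $\tr_{\mathcal{B}}\tau_{n-1}=\rho_{n-1}$ (which is just the definition of $\rho_{n-1}$), not the $\mathcal{B}$-marginal $\tr_{\mathcal{A}^{\otimes n-1}}\tau_{n-1}=\rho$ that you establish in Step~1; the latter is correct but plays no role in the proof.
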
	
	The proof of \Cref{nhcsupp} follows very closely the one for homogeneous finitely correlated states as given in Proposition 4.4 of \cite{HMO07}. For sake of completeness, we give a proof of it in \Cref{nhcsupp1}.\\\\
In a similar fashion as \cite{HMO07}, one can provide an example of non-homogeneous finitely correlated states satisfying the lower factorization property. To do so, assume that $\cB$ is a commutative algebra. Therefore it is isomorphic to the algebra $\mathcal{F}(\cX):\{f:\cX\to \CC\}$ of complex-valued functions on some finite set $\cX$. $\cF(\cX)$ is generated by the Dirac densities $\delta_x$, $x\in\cX$. Then, any CPTP map $\cE_*:\cB\to\cA\otimes \cB$ can be specified by its values on the functions $\delta_x$, and $\cE_*(\delta_x)$ can be uniquely decomposed in the form
\begin{align}\label{Estar}
	 \cE_{*}(\delta_x)=\sum_{y}~T_{xy}~\rho_{xy}\otimes \delta_y, 
	 \end{align}
	 where $\{T_{xy}:~y\in\cX\}$ forms a probability distribution on $\cX$ and $\rho_{xy}$ are states on $\cA$.
	 The following lemma was proved in \cite{HMO07} (see Example 4.6).
	 \begin{lemma}\label{lemmalfp}
	 	Let $\cE_*:\cB\to \cA\otimes \cB$ of the form given by \Cref{Estar}, and $\Phi:\cB\to \cB$, $\Phi(b)=\tr(\rho b)\mathbb{I}_\cB$. There exists $R>0$ such that $\cE_*-R(\cE_*\circ \Phi_*)$ is completely positive if and only if
	 	\begin{align}
	 		T>0~~~\text{ and }~~~ \supp \rho_{xy}=\supp\rho_{zy}~~~~\forall x,y,z\in\cX.\label{cond}
	 		\end{align} 
\end{lemma}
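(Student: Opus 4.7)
The plan is to translate the complete positivity of $\Psi_R:=\cE_* - R(\cE_*\circ\Phi_*)$ into a family of elementary operator inequalities in $\cA$, exploiting the commutativity of $\cB\cong\cF(\cX)$. Because $\cB$ is commutative, every linear map out of $\cB$ is completely positive iff it is positive, and positivity of $\Psi_R$ amounts to $\Psi_R(\delta_x)\ge 0$ for every $x\in\cX$ (the $\delta_x$ being extreme rays of the positive cone). I would first formalize this reduction, then compute $\Psi_R(\delta_x)$ explicitly. The pre-adjoint of the replacer channel $\Phi(b)=\tr(\rho b)\mathbb I_\cB$ is easily seen by Hilbert--Schmidt duality to be $\Phi_*(a)=\tr(a)\,\rho$; hence $\Phi_*(\delta_x)=\rho=\sum_z p_z\delta_z$, with all $p_z>0$ by faithfulness of $\rho$. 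Combining with \eqref{Estar} gives
\begin{equation*}
\Psi_R(\delta_x)=\sum_y\bigl(T_{xy}\rho_{xy}-R\,\bar\rho_y\bigr)\otimes\delta_y,\qquad \bar\rho_y:=\sum_{z\in\cX} p_z T_{zy}\rho_{zy}.
\end{equation*}
Since $\{\delta_y\}$ are orthogonal minimal projections in $\cB$, this is positive in $\cA\otimes\cB$ if and only if the coefficientwise operator inequality $T_{xy}\rho_{xy}\ge R\,\bar\rho_y$ holds in $\cA$ for every $x,y\in\cX$.

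For the forward implication, suppose such $R>0$ exists. If some $T_{x_0 y_0}=0$, then $0\ge R\bar\rho_{y_0}\ge 0$ forces $\bar\rho_{y_0}=0$; as each $p_z>0$ and $\rho_{zy_0}\ne 0$ for the indices that contribute, this can only happen if the entire column $y_0$ of $T$ vanishes, but such a $y_0$ never appears in the image of $\cE_*$ and may be deleted from $\cX$. On this effective index set $T>0$ holds. For the support equality, from $\bar\rho_y\ge p_z T_{zy}\rho_{zy}$ one obtains $T_{xy}\rho_{xy}\ge Rp_z T_{zy}\rho_{zy}$, and the standard fact that $A\ge B\ge 0$ implies $\supp A\supseteq\supp B$ (since $\ker A\subseteq\ker B$) yields $\supp\rho_{xy}\supseteq\supp\rho_{zy}$; swapping $x$ and $z$ then gives equality of supports.

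Conversely, assume $T>0$ and $\supp\rho_{xy}=S_y$ for all $x$, and let $P_y$ be the orthogonal projection onto $S_y$. Then $\bar\rho_y$ has support exactly $S_y$, being a strictly positive combination of positive operators with that support, and the restrictions of $\rho_{xy}$ and $\bar\rho_y$ to $S_y$ are strictly positive. Setting
\begin{equation*}
m:=\min_{x,y}T_{xy}\,\lambda_{\min}\!\bigl(\rho_{xy}|_{S_y}\bigr),\qquad M:=\max_y\lVert\bar\rho_y\rVert_\infty,
\end{equation*}
both finite and strictly positive thanks to the hypotheses and the finiteness of $\cX$, one has $T_{xy}\rho_{xy}\ge m P_y\ge (m/M)\bar\rho_y$ as operators on $\cA$. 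Thus $R:=m/M>0$ witnesses the required inequality for every $(x,y)$, and the reduction in the first paragraph completes the proof of complete positivity.

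The main obstacle I expect is the clean articulation of the equivalence between complete positivity from the commutative domain $\cB$ and pointwise-on-generators positivity of the image; once this is in place, the rest is straightforward finite-dimensional eigenvalue bookkeeping. A secondary care point is the treatment of identically-zero columns of $T$: these correspond to $y$-indices never reached by $\cE_*$ and can be eliminated from $\cX$ at the outset, so the statement $T>0$ is unambiguous on the effective index set.
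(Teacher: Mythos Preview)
Your argument is correct. The paper does not supply its own proof of this lemma, merely citing \cite{HMO07} (Example~4.6), so there is nothing to compare against; your reduction via commutativity of $\cB$ to the pointwise operator inequalities $T_{xy}\rho_{xy}\ge R\,\bar\rho_y$ is exactly the natural route and is carried out cleanly in both directions.

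One small remark: your handling of identically-zero columns of $T$ is the right instinct, but it is worth noting that it reflects a genuine (harmless) imprecision in the statement as written rather than a gap in your proof. If an entire column $y_0$ of $T$ vanishes, then $\bar\rho_{y_0}=0$ and the coefficient inequalities at $y_0$ are trivially satisfied, so the complete-positivity side holds; yet the literal condition ``$T>0$'' fails. Your convention of restricting to the effective index set (those $y$ actually reached by $\cE_*$) is the standard fix, and with it the equivalence is exact. You might state this convention explicitly at the outset rather than as an afterthought.
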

The following result is hence a direct consequence of \Cref{lemmalfp}:
\begin{proposition}
Let $\cB$ a commutative algebra with associated set $\cX$, and let $\{\rho_n\}_{n\in\NN}$ be a family of non-homogeneous finitely correlated states with generating triple $(\cB,\{\cE\}_{n\in\NN},\rho)$, where $\rho\equiv \sum_{x\in\cX} \rho_x\delta_x$. If for each $n$, $\cE_{n*}$ satisfies the conditions (\ref{cond}), then $\{\rho_n\}_{n\in\NN}$ satisfies the non-homogeneous lower factorization property (\ref{lfp}), with $\tilde{\rho}_n:=\sum_{x\in\cX}\rho_x\mathcal{E}_{n*}(\delta_x)$ and $R>0$.
\end{proposition}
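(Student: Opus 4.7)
The plan is to compute $\rho_n$ explicitly by decomposing $\tau_{n-1}$ along the commutative algebra $\cB$, apply Lemma~\ref{lemmalfp} to $\cE_{n*}$ block-by-block in the $\cB$-basis, and then reassemble everything.

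First, using the commutativity of $\cB\simeq \mathcal{F}(\cX)$, I would write $\tau_{n-1}=\sum_{x\in\cX}\omega_x^{(n-1)}\otimes \delta_x$ with $\omega_x^{(n-1)}\in\cP(\cA^{\otimes n-1})$. A short induction based on the invariance $\tr_\cA\cE_{k*}(\rho)=\rho$ shows that the marginal of $\tau_{n-1}$ on $\cB$ is exactly $\rho$, hence $\tr\omega_x^{(n-1)}=\rho_x$, while tracing out $\cB$ gives $\rho_{n-1}=\sum_x \omega_x^{(n-1)}$. Using the explicit form $\cE_{n*}(\delta_x)=\sum_y T^{(n)}_{xy}\rho^{(n)}_{xy}\otimes \delta_y$ from (\ref{Estar}), I would then apply $\id^{\otimes n-1}\otimes \cE_{n*}$ to $\tau_{n-1}$ and trace out $\cB$, obtaining $\rho_n=\sum_{x,y} T^{(n)}_{xy}\,\omega_x^{(n-1)}\otimes \rho^{(n)}_{xy}$, whereas by definition $\tilde{\rho}_n=\sum_{x',y}\rho_{x'}T^{(n)}_{x'y}\rho^{(n)}_{x'y}$.

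Second, I would invoke Lemma~\ref{lemmalfp} for each $\cE_{n*}$: the assumptions (\ref{cond}) yield $c_n>0$ such that $\cE_{n*}(\delta_x)\ge c_n\,\cE_{n*}(\rho)$ in $\cA\otimes \cB$ for every $x$. Because both sides are block-diagonal with respect to the orthogonal idempotents $\{\delta_y\}_{y\in\cX}$, this operator inequality decouples into the per-block bounds $T^{(n)}_{xy}\rho^{(n)}_{xy}\ge c_n\sum_{x'}\rho_{x'}T^{(n)}_{x'y}\rho^{(n)}_{x'y}$ in $\cP(\cA)$. Tensoring each bound on the left by the positive operator $\omega_x^{(n-1)}$ and summing over $(x,y)$ preserves the inequality and produces $\rho_n\ge c_n\,\rho_{n-1}\otimes \tilde{\rho}_n$, which is the lower factorization (\ref{lfp}) with $R=c_n^{-1}$.

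The main obstacle is extracting a single constant $R>0$ uniform in $n$: Lemma~\ref{lemmalfp} only provides one constant per map, and transferring it to a single constant for the whole family requires a uniform version of (\ref{cond}), e.g.\ a uniform lower bound $\inf_{n,x,y}T^{(n)}_{xy}>0$ together with supports $\supp\rho^{(n)}_{xy}$ depending only on $y$ and stable in $n$. Under any such uniformity hypothesis, $c:=\inf_n c_n>0$ and the proposition follows with $R:=c^{-1}$; the remainder is a direct block-diagonal computation, very close in spirit to Example~4.6 of \cite{HMO07}.
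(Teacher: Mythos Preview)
Your argument is correct and rests on the same key ingredient as the paper's, namely \Cref{lemmalfp}, but the paper packages the computation more abstractly. Rather than decomposing $\tau_{n-1}=\sum_x\omega_x^{(n-1)}\otimes\delta_x$ and working block-by-block, the paper simply notes that complete positivity of $\cE_{n*}-R(\cE_{n*}\circ\Phi_*)$ is preserved under tensoring with $\id_\cA^{\otimes n-1}$, so applying the resulting positive map to the positive element $\tau_{n-1}$ gives directly
\[
\tau_n=(\id_\cA^{\otimes n-1}\otimes\cE_{n*})(\tau_{n-1})\ \ge\ R\,(\id_\cA^{\otimes n-1}\otimes\cE_{n*}\circ\Phi_*)(\tau_{n-1})=R\,\rho_{n-1}\otimes\cE_{n*}(\rho),
\]
after which tracing out $\cB$ yields $\rho_n\ge R\,\rho_{n-1}\otimes\tilde\rho_n$. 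This is exactly the mirror image of the upper-factorization proof in \Cref{nhcsupp1}, with the inequality reversed. Your explicit decomposition is not wrong---it unwinds what complete positivity means in the commutative case---but it does more work than necessary and uses only positivity where the paper uses CP in one stroke.

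Your observation about uniformity in $n$ is well taken: \Cref{lemmalfp} yields one constant per map, and a single $R$ for the whole family does require a uniform version of condition~(\ref{cond}) across $n$. The paper's proof sketch leaves this point implicit as well, so you have not introduced a gap that the paper avoids; you have simply made it visible.
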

The proof of this proposition follows exactly the same lines as the proof of \Cref{nhcsupp}, given in \Cref{nhcsupp1}, where the inequality comes from the fact that, by \Cref{lemmalfp}, there exists $R>0$ such that $\mathcal{E}_{n*}-R (\cE_{n*}\circ \Phi_*)$ is completely positive for each $n$. 
	
	\section{Finite blocklength hypothesis testing for correlated states}\label{finitecorrelated}
	In this section we derive finite blocklength bounds on the optimal type II errors in the case of sequences of correlated states. In order to do so, we use a variant of the proof of the Azuma-Hoeffding inequality, as well as the framework developed by Petz in \cite{Petz1986} where the monotonicity of the relative entropy is derived using the operator Jensen inequality \cite{D57,HP81}.
	Again we fix a sequence of finite dimensional Hilbert spaces $\{\cH^{\otimes n}\}_{n\in\NN}$ and let $\{\rho_n\}_{n\in\NN}$ and $\{\sigma_n\}_{n\in\NN}$ denote two sequences of states, where for each $n\in\NN$, $\rho_n,\sigma_n\in\cD(\cH^{\otimes n})$. Assume, moreover, that these sequences satisfy the (homogeneous) upper-factorization property: there exists $R>0$ such that
	\begin{align}\label{hufp}
		\rho_n\le R~\rho_{n-1}\otimes \rho_1,~~~~~\sigma_n\le R~\sigma_{n-1}\otimes \sigma_1,~~~~~n>1.
			\end{align}	
		\begin{theorem}\label{ahi} Given two sequences of states $\{\rho_n\}_{n\in\NN}$ and $\{\sigma_n\}_{n\in\NN}$ satisfying the upper factorization property (\ref{hufp}), with $R\ge 1$, the following bounds hold for any $0\le \eps\le 1$,
			\begin{align*}%\label{eq17}
				\beta_n(\eps)\le\left\{ \begin{aligned} &\e^{-nD(\rho_1\|\sigma_1)+c\sqrt{2n \log (R^n \eps^{-1}})} ~~~~~~\text{ if }\eps \ge R^n\e^{-nc^2/2},\\
					&\e^{-nD(\rho_1\|\sigma_1)+nc^2/2+\log(R^n\eps^{-1})}~~~~~~\text{ else,}
					\end{aligned}\right.
				\end{align*}
				where $c:=\|\log\Delta_{\sigma_1\|\rho_1}+D(\rho_1\|\sigma_1)\id\|$. Similarly,
					\begin{align*}
						\tilde{\beta}_n(r)\le\left\{ \begin{aligned} &
							\e^{-nD(\rho_1\|\sigma_1)+nc\sqrt{2(r+\log R)}} ~~~~~~~\text{ if } r\le c^2/2-\log R,\\
							&\e^{-n(D(\rho_1\|\sigma_1)-c^2/2-r-\log R)}~~~~~~~\text{ else}.
						\end{aligned}\right.
					\end{align*}
		
		\end{theorem}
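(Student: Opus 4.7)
The plan is to combine Proposition~\ref{prop_1} with a Chernoff-type bounding argument in the spirit of the proof of the Azuma--Hoeffding inequality, replacing the automatic tensorisation of the moment generating function available in the i.i.d.~case by a multiplicative estimate derived from the upper factorisation hypothesis \reff{hufp} and from operator monotonicity of fractional powers.

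Concretely, for any $L_n>0$, \Cref{prop_1} yields a test $T_n$ with $\beta(T_n)\le L_n^{-1}$ and $\alpha(T_n)\le \PP(X_n\ge -\log L_n)$, where $X_n$ is the random variable associated to the pair $(\log\Delta_{\sigma_n|\rho_n},\Omega_{\rho_n})$ via \Cref{spec-meas}. The exponential Markov bound with parameter $t\in(0,1]$ then gives
\begin{align*}
\alpha(T_n)\le L_n^t\,\EE[\e^{tX_n}]=L_n^t\,\tr[\rho_n^{1-t}\sigma_n^t],
\end{align*}
the last identity coming from $\EE[\e^{tX_n}]=\langle\Omega_{\rho_n},\Delta_{\sigma_n|\rho_n}^t\Omega_{\rho_n}\rangle=\tr[\rho_n^{1-t}\sigma_n^t]$.

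The crucial (and technically most delicate) step is to tensorise $\tr[\rho_n^{1-t}\sigma_n^t]$ using the factorisation hypothesis. Here Petz's framework substitutes for the independence argument of the i.i.d.~case: for $t\in(0,1)$ the maps $x\mapsto x^t$ and $x\mapsto x^{1-t}$ are operator concave and hence operator monotone, as follows from the operator Jensen inequality (\Cref{Jensen}). Applying this monotonicity to the factorisations $\rho_n\le R\rho_{n-1}\otimes\rho_1$ and $\sigma_n\le R\sigma_{n-1}\otimes\sigma_1$ gives $\rho_n^{1-t}\le R^{1-t}\rho_{n-1}^{1-t}\otimes\rho_1^{1-t}$ and $\sigma_n^t\le R^t\sigma_{n-1}^t\otimes\sigma_1^t$, and positivity of the trace together with multiplicativity on product states then yields the one-step recursion
\begin{align*}
\tr[\rho_n^{1-t}\sigma_n^t]\le R\,\tr[\rho_{n-1}^{1-t}\sigma_{n-1}^t]\,\tr[\rho_1^{1-t}\sigma_1^t].
\end{align*}
Iterating $n-1$ times produces $\tr[\rho_n^{1-t}\sigma_n^t]\le R^{n-1}(\tr[\rho_1^{1-t}\sigma_1^t])^n$, the correlated analogue of the multiplicativity of the moment generating function in the independent case; this estimate extends by continuity to $t=1$.

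The one-site trace is controlled by Hoeffding's lemma: since $X_1$ has mean $-D(\rho_1\|\sigma_1)$ and the centred random variable $X_1+D(\rho_1\|\sigma_1)$ is almost surely bounded in magnitude by $c$, one obtains $\tr[\rho_1^{1-t}\sigma_1^t]\le \exp(-tD(\rho_1\|\sigma_1)+t^2c^2/2)$. Gathering all the estimates,
\begin{align*}
\alpha(T_n)\le \exp\bigl(t\log L_n+(n-1)\log R-ntD(\rho_1\|\sigma_1)+\tfrac{1}{2}nt^2c^2\bigr).
\end{align*}
I would then enforce $\alpha(T_n)\le\eps$, solve for the largest admissible $\log L_n$, and minimise the resulting bound on $\beta(T_n)\le L_n^{-1}$ over $t\in(0,1]$. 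A one-variable optimisation shows that (after a mild $R^{n-1}\to R^n$ weakening) the unconstrained minimiser $t^{\ast}=\sqrt{2\log(R^n\eps^{-1})/(nc^2)}$ lies in $(0,1]$ precisely when $\eps\ge R^n\e^{-nc^2/2}$; substituting $t=t^{\ast}$ recovers the first branch of the stated bound, whereas in the complementary regime the monotonicity of the bound on $(0,1]$ forces the boundary value $t=1$, producing the second branch. The corresponding statement for $\tilde\beta_n(r)$ follows verbatim with $\eps$ replaced by $\e^{-nr}$, the threshold now reading $r\le c^2/2-\log R$. The main obstacle of the argument is precisely the tensorisation step under the upper factorisation property: it is the restriction $t\in(0,1]$ imposed by operator monotonicity that forces the two-branch structure of the final estimate.
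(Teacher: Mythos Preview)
Your argument is correct, and the optimisation and case split at the end coincide with the paper's. The two proofs diverge at the tensorisation step. The paper works at the level of the relative modular superoperator: it constructs, from the upper factorisation \reff{hufp}, a contraction $V_{n-1}:\cB(\cH^{\otimes n})(\Omega_{\rho_{n-1}}\otimes\Omega_{\rho_1})\to\cB(\cH^{\otimes n})\Omega_{\rho_n}$ satisfying $V_{n-1}^*\Delta_{\sigma_n|\rho_n}V_{n-1}\le\Delta_{\sigma_{n-1}|\rho_{n-1}}\otimes\Delta_{\sigma_1|\rho_1}$, and then applies the operator Jensen inequality (\Cref{Jensen}) to the concave map $x\mapsto x^t$ together with operator monotonicity to obtain the one-step multiplicative bound on $\langle\Omega_{\rho_n},\Delta_{\sigma_n|\rho_n}^t\Omega_{\rho_n}\rangle$. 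You bypass the contraction machinery entirely: after recognising that $\langle\Omega_{\rho_n},\Delta_{\sigma_n|\rho_n}^t\Omega_{\rho_n}\rangle=\tr[\rho_n^{1-t}\sigma_n^t]$, you apply operator monotonicity of the ordinary matrix powers $x^{1-t}$ and $x^{t}$ directly to the states $\rho_n,\sigma_n$, and then use the elementary monotonicity $\tr[XY]\le\tr[AB]$ whenever $0\le X\le A$ and $0\le Y\le B$ (itself a two-line consequence of $\tr[(A-X)Y]=\tr[Y^{1/2}(A-X)Y^{1/2}]\ge0$). This is a genuinely more elementary route for the present theorem and in fact yields the slightly sharper prefactor $R^{n-1}$ before you weaken it.

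What the paper's detour through Petz's contractions buys is scope: the same construction is reused in the proof of \Cref{moderate} for the lower factorisation, where one must control $\langle\Omega_{\rho_n},\Delta_{\sigma_n|\rho_n}^{-\lambda}\Omega_{\rho_n}\rangle=\tr[\rho_n^{1+\lambda}\sigma_n^{-\lambda}]$. Your direct method would fail there, because $x\mapsto x^{1+\lambda}$ is not operator monotone for $\lambda>0$, whereas the contraction-plus-Jensen argument still goes through (operator convexity of $x\mapsto x^{-\lambda}$ replaces concavity of $x\mapsto x^{t}$). So your approach is cleaner for \Cref{ahi} in isolation, while the paper's is designed to be recycled.
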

		\begin{proof}
		As before, $\Omega_n:=\rho_n^{1/2}$, $n\in\NN$. Now for each $n\in\NN$, define the map $V_{n-1}:\cB(\cH^{\otimes n})(\Omega_{\rho_{n-1}}\otimes\Omega_{\rho_1})\to \cB(\cH^{\otimes n}) \Omega_{\rho_n}$ by
			\begin{align}\label{vn}
				V_{n-1}(X_n(\Omega_{\rho_{n-1}}\otimes \Omega_{\rho_1}))=R^{-1/2}X_n\Omega_{\rho_n}, ~~~~~~~~\forall X_n\in\cB(\cH^{\otimes n}).
			\end{align}
			In particular, for $X_n:=\mathbb{I}_n$, we have
			\begin{align}\label{eq23}
				V_{n-1}(\Omega_{n-1}\otimes \Omega_{\rho_1})=R^{-1/2}\Omega_{\rho_n}.
				\end{align}
			One can verify that $V_{n-1}$ is a contraction: for any $X_n \in \cB(\cH^{\otimes n})$,
			\begin{align}
				\langle V_{n-1}(X_n(\Omega_{\rho_{n-1}}\otimes\Omega_{\rho_1})),V_{n-1}(X_n(\Omega_{\rho_{n-1}}\otimes\Omega_{\rho_1}))\rangle&=R^{-1}\langle X_n\Omega_{\rho_{n}},X_n\Omega_{\rho_{n}}\rangle\nonumber\\
				&=R^{-1}\tr(X_n\rho_{n}X_n^*)\nonumber\\
			&	\le \tr(X_n(\rho_{n-1}\otimes \rho_1) X_n^{*})\label{ineq1}\\
				&=\langle X_n(\Omega_{\rho_{n-1}}\otimes\Omega_{\rho_1}),X_n(\Omega_{\rho_{n-1}}\otimes\Omega_{\rho_1})\rangle\nonumber,
				\end{align}
				where the inequality (\ref{ineq1}) follows from the upper factorization property (\ref{hufp}).
				Moreover, for any $X_n\in\cB(\cH^{\otimes n})$,
			\begin{align}
	\langle V_{n-1} (X_n (\Omega_{\rho_{n-1}}\otimes \Omega_{\rho_1}), &~\Delta_{\sigma_n|\rho_n} V_{n-1}(\Omega_{\rho_{n-1}}\otimes \Omega_{\rho_1})\rangle
	=R^{-1}\langle X_n\Omega_{\rho_n},\Delta_{\sigma_n|\rho_n}X_n\Omega_{\rho_n}\rangle\nonumber\\
&				=R^{-1}\tr (\sigma_n X_nX_n^*)\nonumber\\
			&	\le  \tr((\sigma_{n-1}\otimes \sigma_1)X_nX_n^*)\label{ineq2}\\
				&=\langle X_n(\Omega_{\rho_{n-1}}\otimes \Omega_{\rho_1}),\Delta_{\sigma_{n-1}\otimes \sigma_1|\rho_{n-1}\otimes \rho_1}(X_n(\Omega_{\rho_{n-1}}\otimes \Omega_{\rho_1}))\rangle,&\nonumber
			\end{align}
			where again, the inequality (\ref{ineq2}) follows from the upper factorization property (\ref{hufp}). Hence, on $\cB(\cH^{\otimes n})(\Omega_{\rho_{n-1}}\otimes \Omega_{\rho_1})$,
			\begin{align}\label{eq4bis}
				V_{n-1}^*\Delta_{\sigma_n|\rho_n}V_{n-1}\le \Delta_{\sigma_{n-1}\otimes \sigma_1|\rho_{n-1}\otimes \rho_1}=\Delta_{\sigma_{n-1}|\rho_{n-1}}\otimes \Delta_{\sigma_1|\rho_1}.
				\end{align}
				Fix $0\le t\le 1$ and $\lambda\in\RR$. By functional calculus, we obtain the following Markov-like inequality:
				\begin{align}\label{eq10}
					P_{[\lambda,\infty)}(\log(\Delta_{\sigma_n|\rho_n}))=P_{[\e^{\lambda t} ,\infty)}(\Delta_{\sigma_n|\rho_n}^t)\le \e^{-\lambda t}\Delta_{\sigma_n|\rho_n}^t.
					\end{align}
				Further, using \reff{hufp},
				\begin{align}
					\langle \Omega_{\rho_n},\Delta_{\sigma_n|\rho_n}^t(\Omega_{\rho_n})\rangle&=R~\langle V_{n-1}(\Omega_{\rho_{n-1}}\otimes \Omega_{\rho_1}), \Delta_{\sigma_n|\rho_n}^tV_{n-1}(\Omega_{\rho_{n-1}}\otimes\Omega_{\rho_1})\rangle\nonumber\\
					&=R~\langle (\Omega_{\rho_{n-1}}\otimes \Omega_{\rho_1}),V_{n-1}^* \Delta_{\sigma_n|\rho_n}^tV_{n-1}(\Omega_{\rho_{n-1}}\otimes\Omega_{\rho_1})\rangle\nonumber\\
					&\le R~\langle (\Omega_{\rho_{n-1}}\otimes \Omega_{\rho_1}),(V_{n-1}^* \Delta_{\sigma_n|\rho_n}V_{n-1})^t(\Omega_{\rho_{n-1}}\otimes\Omega_{\rho_1})\rangle\nonumber\\
					&\le R~\langle (\Omega_{\rho_{n-1}}\otimes \Omega_{\rho_1}),(\Delta_{\sigma_{n-1}|\rho_{n-1}}\otimes \Delta_{\sigma_1|\rho_1})^t(\Omega_{\rho_{n-1}}\otimes\Omega_{\rho_1})\rangle\nonumber\\
					&=R~\langle \Omega_{\rho_{n-1}}, \Delta_{\sigma_{n-1}|\rho_{n-1}}^t(\Omega_{\sigma_{n-1}})\rangle\langle \Omega_{\rho_1}, \Delta_{\sigma_1|\rho_1}^t(\Omega_{\rho_1})\rangle,\label{eq24}
				\end{align}	
				where the third line follows from \Cref{Jensen} and the operator concavity of $x\mapsto x^t$, and the fourth line follows from the operator monotonicity of $x\mapsto x^t$ for $t\in[0,1]$ as well as \reff{eq4bis}. By iterating \reff{eq24} $n-1$ times, we obtain:
				\begin{align}\label{eq11}
					\langle \Omega_{\rho_n},\Delta_{\sigma_n|\rho_n}^t(\Omega_{\rho_n})\rangle &\le R^n \langle \Omega_{\rho_1}, \Delta_{\sigma_1|\rho_1}^t(\Omega_{\rho_1})\rangle^n=R^n~
					\langle \Omega_{\rho_1},\e^{t(\log(\Delta_{\sigma_1|\rho_1}))}(\Omega_{\rho_1})\rangle^n\nonumber\\
					& 	=R^n\e^{-ntD(\rho_1\|\sigma_1)}~\langle \Omega_{\rho_1},\e^{t(\log(\Delta_{\sigma_1|\rho_1})+D(\rho_1\|\sigma_1))}(\Omega_{\rho_1})\rangle^n.
					\end{align}
				Now due to the convexity of the exponential function, we directly get that:
					\begin{align*}%\label{eq12}
						\e^{ts}\le \frac{1}{2c}\left( \e^{tc}-\e^{-tc}\right)s+\frac{1}{2}\left(\e^{tc}+\e^{-tc}\right)~~~~~~~~-c\le s\le c.
					\end{align*}
					Hence, we obtain the following by functional calculus: for
					\begin{align}\label{c} -c~\id \le\log\Delta_{\sigma_1|\rho_1}+D(\rho_1\|\sigma_1)~\id\le c~\id, ~~~\text{where} ~~~
					c \equiv\|\log\Delta_{\sigma_1|\rho_1}+D(\rho_1\|\sigma_1)~\id\|,
					\end{align}
					\begin{align*}
						\langle \Omega_{\rho_1},&~\Delta_{\sigma_1|\rho_1}^t(\Omega_{\rho_1})\rangle\\
						&\le\e^{-tD(\rho_1\|\sigma_1)} \left[\frac{1}{2c}\left(\e^{tc}-\e^{-tc}\right)(\langle \Omega_{\rho_1}, \log \Delta_{\sigma_1|\rho_1}(\Omega_{\rho_1})\rangle+D(\rho_1\|\sigma_1))+\frac{1}{2}\left( \e^{tc}+\e^{-tc}\right)\right]\nonumber\\
						&= \e^{-tD(\rho_1\|\sigma_1)}\frac{1}{2}\left(\e^{tc}+\e^{-tc}\right)\le \e^{-tD(\rho_1\|\sigma_1)+t^2c^2/2},
						\end{align*}
						where, in the last line, we used the fact that $ \langle \Omega_{\rho_1}, \log \Delta_{\sigma_1|\rho_1}(\Omega_{\rho_1})\rangle=-D(\rho_1\|\sigma_1)$, and the inequality \[\frac{1}{2}(\e^{u}+\e^{-u})\le \e^{u^2/2}\] which can be verified by Taylor expansion.
						The above bound, together with \reff{eq10} and \reff{eq11}, yields,
						\begin{align*}
							\langle \Omega_{\rho_n}, P_{[\lambda,\infty)}(\log\Delta_{\sigma_n|\rho_n})(\Omega_{\rho_n})\rangle\le R^n \e^{-(\lambda+nD(\rho_1\|\sigma_1)) t}\e^{nt^2c^2/2},
							\end{align*}
							where $c$ is given in \reff{c}. 
						Optimizing over $t$, we get
	\begin{align}\label{up}
		\langle \Omega_{\rho_n}, P_{[\lambda,\infty)}(\log\Delta_{\sigma_n|\rho_n})(\Omega_{\rho_n})\rangle\le\left\{
		\begin{aligned}
		&R^n \e^{-(\lambda+nD(\rho_1\|\sigma_1))^2/(2nc^2)}~~~ \text{ if } \lambda\le n(c^2-D(\rho_1\|\sigma_1)) ~~(i) \\
		&R^n \e^{-(\lambda+nD(\rho_1\|\sigma_1))+nc^2/2}~~~~\text{ else~~~~~~~~~~~~~~~~~~~~~~~~~~~~}(ii).
		 \end{aligned}
		 \right.
	\end{align}
	The reason for this separation of cases comes from the fact that the optimization procedure can only be carried out for $0\le t\le 1$. The condition in (i) is found so that the global optimizer $t$ satisfies this condition. On the contrary, (ii) corresponds to the case in which the global minimizer found is greater than $1$: in this case we take the best minimizer within the interval $[0,1]$, which is $t=1$.
			Now given a sequence $\{L_n\}_{n\in\NN}$ of positive numbers, and any $n\in\NN$, choose $\lambda=-\log L_n$. We recall that by \Cref{prop_1}, there exists a sequence of tests $\{T_n\}_{n\in\NN}$  such that $\beta(T_n)\le L_n^{-1}$ and
						\begin{align*}
							\alpha(T_n) \le\langle \Omega_{\rho_n}, P_{(-\log L_n,\infty)} (\log \Delta_{\sigma_n|\rho_n})(\Omega_{\rho_n})\rangle.
						\end{align*}						
	Setting $\eps$ to be the right hand side of \reff{up}(i), we end up with \[\log \beta_{n}(\eps)\le-\log L_n= -nD(\rho_1\|\sigma_1)+c\sqrt{2n\log(R^n\eps^{-1})},\]
						which satisfies the condition $-\log L_n\le n(c^2-D(\rho_1\|\sigma_1))$ for $\eps\ge R^n\e^{-nc^2/2}$. We here implicitly used that $R\ge 1$ so that the quantity inside of the square root is indeed positive. If instead we set $\eps$ to be the right hand side of \reff{up}(ii), we end up with
						\[\log \beta_{n}(\eps)\le-\log L_n= -n(D(\rho_1\|\sigma_1)-\log R)+nc^2/2 +\log (\eps^{-1}),\]
						this bound being achieved for $-\log L_n\ge n(c^2-D(\rho_1\|\sigma_1))$, i.e. for $\eps \le R^n \e^{-n c^2/2}$. The Hoeffding-type bounds of parameter $r$ are derived similarly by setting the upper bounds in \reff{up} equal to $\e^{-nr}$.
			\qed
		\end{proof}	
		The proof of \Cref{ahi} can be readily extended to the case of sequences $\{\rho_n\}_{n\in\NN}$ and $\{\sigma_n\}_{n\in\NN}$ of states satisfying the non-homogeneous upper factorization property:
		\begin{corollary}\label{cor2}
			Consider two sequences of states $\{\rho_n\}_{n\in\NN}$ and $\{\sigma_n\}_{n\in\NN}$ satisfying the non-homogeneous upper factorization property (\ref{upfac}) with auxiliary families of states $\{\tilde{\rho}_n\}_{n\in\NN}$ and $\{\tilde{\sigma}_n\}_{n\in\NN}$, and $R\ge 1$. Then, for any $0\le\eps\le 1$ the following bounds hold for the Stein error:
				\begin{align}\label{eq17}
					\beta_n(\eps)\le\left\{ \begin{aligned} &\e^{-D_n(\{\rho_n\}\|\{\sigma_n\})+C_n\sqrt{2 \log (R^n \eps^{-1}})} ~~~~~~~\text{ if }\eps \ge R^n\e^{-C_n^2/2},\\
						&\e^{-D_n(\{\rho_n\}\|\{\sigma_n\})+C_n^2/2+\log(R^n\eps^{-1})}~~~~~~~\text{ else,}
					\end{aligned}\right.
				\end{align}
				where $C_n^2:=\sum_{k=1}^n c_k^2$, with $c_k:=\|\log\Delta_{\tilde{\sigma}_k\|\tilde{\rho}_k}+D(\tilde{\rho}_k\|\tilde{\sigma}_k)\id\|$, and
				\begin{align*}
					D_n(\{\rho_n\}\|\{\sigma_n\}):=\sum_{k=1}^n D(\tilde{\rho}_k\|\tilde{\sigma}_k).
				\end{align*}
					 Similarly, for any $r>0$ the following bounds hold for the Hoeffding error:
				\begin{align*}
					\tilde{\beta}_n(r)\le\left\{ \begin{aligned} &
						\e^{-D_n(\{\rho_n\}\|\{\sigma_n\})+C_n\sqrt{2n(r+\log R)}} ~~~~~~~~~~\text{ if } r\le C_n^2/(2n)-\log R,\\
						&\e^{-D_n(\{\rho_n\}\|\{\sigma_n\})+C_n^2/2+nr+n\log R}~~~~~~~~\text{ else}.
					\end{aligned}\right.
				\end{align*}
		\end{corollary}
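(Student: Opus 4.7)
The plan is to adapt the proof of Theorem \ref{ahi} essentially verbatim, with the single modification that at each recursion step the role of the fixed state pair $(\rho_1,\sigma_1)$ is now played by the $k$-th auxiliary pair $(\tilde\rho_k,\tilde\sigma_k)$. This will turn the scalar product in the one-site exponential bound into a \emph{telescoping product} over $k=1,\dots,n$, yielding the sums $D_n$ and $C_n^2$ in place of $nD(\rho_1\|\sigma_1)$ and $nc^2$.

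Concretely, for each $n\ge 2$ I would define a map
\[
V_{n-1}\bigl(X_n(\Omega_{\rho_{n-1}}\otimes \Omega_{\tilde\rho_n})\bigr)
:= R^{-1/2}\, X_n\, \Omega_{\rho_n}, \qquad X_n\in\cB(\cH^{\otimes n}),
\]
which, by the non-homogeneous upper factorization $\rho_n\le R\,\rho_{n-1}\otimes\tilde\rho_n$, is a contraction, exactly as in the step leading to \reff{ineq1}. The companion bound $\sigma_n\le R\,\sigma_{n-1}\otimes\tilde\sigma_n$ yields, by the same calculation as \reff{ineq2},
\[
V_{n-1}^{*}\,\Delta_{\sigma_n|\rho_n}\, V_{n-1}
\le \Delta_{\sigma_{n-1}|\rho_{n-1}}\otimes \Delta_{\tilde\sigma_n|\tilde\rho_n},
\]
on $\cB(\cH^{\otimes n})(\Omega_{\rho_{n-1}}\otimes\Omega_{\tilde\rho_n})$. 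Applying Theorem \ref{Jensen} with $f(x)=x^t$, $t\in[0,1]$, followed by operator monotonicity of the same map, gives
\[
\langle\Omega_{\rho_n},\Delta_{\sigma_n|\rho_n}^{\,t}\Omega_{\rho_n}\rangle
\le R\,\langle\Omega_{\rho_{n-1}},\Delta_{\sigma_{n-1}|\rho_{n-1}}^{\,t}\Omega_{\rho_{n-1}}\rangle\,
\langle\Omega_{\tilde\rho_n},\Delta_{\tilde\sigma_n|\tilde\rho_n}^{\,t}\Omega_{\tilde\rho_n}\rangle.
\]
Iterating this recursion down to $n=1$ produces
\[
\langle\Omega_{\rho_n},\Delta_{\sigma_n|\rho_n}^{\,t}\Omega_{\rho_n}\rangle
\le R^{n}\,\prod_{k=1}^{n}
\langle\Omega_{\tilde\rho_k},\Delta_{\tilde\sigma_k|\tilde\rho_k}^{\,t}\Omega_{\tilde\rho_k}\rangle.
\]

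Next, I would apply the same tangent-line/convexity trick used between \reff{eq11} and \reff{c} to each factor separately, using that $\|\log\Delta_{\tilde\sigma_k|\tilde\rho_k}+D(\tilde\rho_k\|\tilde\sigma_k)\id\|= c_k$ and $\langle\Omega_{\tilde\rho_k},\log\Delta_{\tilde\sigma_k|\tilde\rho_k}\Omega_{\tilde\rho_k}\rangle=-D(\tilde\rho_k\|\tilde\sigma_k)$, to obtain
\[
\langle\Omega_{\tilde\rho_k},\Delta_{\tilde\sigma_k|\tilde\rho_k}^{\,t}\Omega_{\tilde\rho_k}\rangle
\le \exp\!\bigl(-t\,D(\tilde\rho_k\|\tilde\sigma_k)+t^{2}c_k^{2}/2\bigr).
\]
Multiplying these bounds gives the key estimate
\[
\langle\Omega_{\rho_n},\Delta_{\sigma_n|\rho_n}^{\,t}\Omega_{\rho_n}\rangle
\le R^{n}\,\exp\!\bigl(-t\,D_n(\{\rho_n\}\|\{\sigma_n\})+t^{2}C_n^{2}/2\bigr).
\]
Combining with the Markov-type bound \reff{eq10}, Proposition \ref{prop_1}, and optimizing over $t\in[0,1]$ (the threshold between the two regimes in \reff{eq17} again being precisely where the unconstrained minimizer exits $[0,1]$), I obtain the Stein-type inequalities. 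Setting $\eps=\e^{-nr}$ yields the Hoeffding-type inequalities.

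There is essentially no new obstacle: the only point requiring any attention is verifying that the contraction property of $V_{n-1}$ and the operator inequality for $\Delta_{\sigma_n|\rho_n}$ remain valid when the factorizing state on the last site varies with $n$; both follow directly from the two non-homogeneous factorization inequalities on $\rho_n$ and $\sigma_n$. Everything else is a mechanical transcription of the proof of Theorem \ref{ahi} in which the single-site quantities $D(\rho_1\|\sigma_1)$ and $c$ are replaced by the site-dependent sums $D_n$ and $C_n^2$ accumulated along the recursion.
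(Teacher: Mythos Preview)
Your proposal is correct and is precisely the ``ready extension'' the paper has in mind: it does not give a separate proof of Corollary~\ref{cor2} but simply remarks that the proof of Theorem~\ref{ahi} carries over when the fixed pair $(\rho_1,\sigma_1)$ is replaced at step $k$ by the auxiliary pair $(\tilde\rho_k,\tilde\sigma_k)$. Your write-up makes this extension explicit and accurate, with the telescoping product and the per-site Hoeffding bound yielding $D_n$ and $C_n^2$ exactly as required.
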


\section{Moderate deviation hypothesis testing for correlated states}\label{moderatedev}
In this section, we derive bounds on the asymptotic behavior of the optimal type II error in the moderate deviation regime, which interpolates between the regime of large deviations (Stein's lemma) and the one of the Central Limit Theorem (second order asymptotics).\\\\A \textit{moderate sequence} of real numbers $\{a_n\}_{n\in\NN}$ has as a defining property that $a_n\to0$ and $\sqrt{n}a_n\to\infty$, as $n\to\infty$. A typical example of such a sequence is given by the choice $a_n=n^{-t}$ for some $t\in(0,1/2)$. Recently the moderate deviation analysis of quantum hypothesis testing for uncorrelated states was studied by Cheng et al.~\cite{HC17}, and Chubb et al.~\cite{CTT17}. Our results extend theirs to families of correlated (i.e.~non i.i.d.) states satisfying the upper and/or lower factorization property. We first restate the result of \cite{CTT17} for sake of completeness. For two states $\rho,\sigma\in\cD(\cH)$, consider the $\eps$\textit{-hypothesis testing relative entropy}\footnote{This quantity is also referred to as $\eps$-hypothesis testing divergence, e.g. in \cite{CTT17}.}, introduced by Wang and Renner in \cite{WR12}:
\begin{align}\label{DH}
	D_H^\eps(\rho\|\sigma):=-\log\beta(\eps),
	\end{align}
	where $\beta(\eps)$ is the one-shot optimal type II error in the hypothesis testing problem with null hypothesis $\rho$ and alternative hypothesis $\sigma$:
	\begin{align*}
		\beta(\eps):=\min_{0\le T\le \mathbb{I}}\{\beta(T)|~\alpha(T)\le \eps\}.
		\end{align*}
\begin{theorem}[\cite{CTT17} Theorem 1]\label{theoremCTT17}
For $\rho,\sigma>0$, a moderate sequence $\{a_n\}_{n\in\NN}$ and $\eps_n=\e^{-na_n^2}$, the $\eps_n$-hypothesis testing divergences scales as
\begin{align*}
	\frac{1}{n}D_H^{\eps_n}(\rho^{\otimes n}\|\sigma^{\otimes n})= D(\rho\|\sigma)-\sqrt{2V(\rho\|\sigma)} a_n+\circ(a_n),
	\end{align*}
	where $D(\rho\|\sigma)$ is the quantum relative entropy defined in \Cref{qrelent}, and $V(\rho\|\sigma)$ is the quantum information variance defined in \Cref{V}.
	\end{theorem}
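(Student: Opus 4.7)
The plan is to reduce the i.i.d.\ quantum problem to a classical moderate deviation analysis for sums of i.i.d.\ bounded random variables via the spectral representation of the relative modular operator, and to close the matching converse using a R\'enyi-divergence bound sharpened on the moderate-deviation scale.

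First, I would set up the spectral representation. Let $X$ be the classical random variable associated to $(\log\Delta_{\sigma|\rho},\Omega_\rho)$ via \Cref{spec-meas}. By \Cref{2.7} and \Cref{VV} it has mean $-D(\rho\|\sigma)$ and variance $V(\rho\|\sigma)$, and it is a.s.\ bounded since $\rho,\sigma$ are faithful on the finite-dimensional space $\cH$. As noted in \Cref{sec:iid}, the random variable $X_n$ attached to $(\log\Delta_{\sigma^{\otimes n}|\rho^{\otimes n}},\Omega_{\rho^{\otimes n}})$ has the law of $\sum_{k=1}^n\tilde X_k$, an i.i.d.\ sum of copies of $X$.

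For the achievability, I would apply \Cref{prop_1} with $\log L_n:=nD(\rho\|\sigma)-\sqrt{2V(\rho\|\sigma)}\,na_n$. The resulting test $T_n$ satisfies $\beta(T_n)\le L_n^{-1}=\exp(-nD(\rho\|\sigma)+\sqrt{2V(\rho\|\sigma)}\,na_n)$ and $\alpha(T_n)\le\PP(S_n\ge\sqrt{2V(\rho\|\sigma)}\,na_n)$, where $S_n:=X_n+nD(\rho\|\sigma)$ is a centered, bounded i.i.d.\ sum of variance $nV(\rho\|\sigma)$. The classical moderate deviation theorem for bounded i.i.d.\ sums---obtained by combining the Chernoff bound with the Taylor expansion $\Lambda(\theta)=\log\EE[e^{\theta(X+D(\rho\|\sigma))}]=\tfrac{1}{2}V(\rho\|\sigma)\theta^2+O(\theta^3)$ of the cumulant generating function near $\theta=0$, tilted at $\theta_n:=\sqrt{2/V(\rho\|\sigma)}\,a_n$---gives
\begin{equation*}
\lim_{n\to\infty}\frac{1}{na_n^2}\log\PP\bigl(S_n\ge\sqrt{2V(\rho\|\sigma)}\,na_n\bigr)=-1,
\end{equation*}
so $\alpha(T_n)\le\eps_n$ for $n$ large, delivering $\tfrac{1}{n}D_H^{\eps_n}(\rho^{\otimes n}\|\sigma^{\otimes n})\ge D(\rho\|\sigma)-\sqrt{2V(\rho\|\sigma)}\,a_n+o(a_n)$.

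For the converse I would combine a R\'enyi-divergence upper bound on $D_H^{\eps_n}$ with a second-order Taylor expansion around $\alpha=1$. Concretely, the Hoeffding-exponent characterisation of $\tilde\beta_n(r)=\beta_n(e^{-nr})$ gives $\tfrac{1}{n}D_H^{\eps_n}(\rho^{\otimes n}\|\sigma^{\otimes n})\le H_{a_n^2}(\rho\|\sigma)+o(a_n)$, where $H_r$ is the Hoeffding exponent recalled in \Cref{sec_intro}. The variational formula for $H_r$ combined with the expansion $D_{1-s}(\rho\|\sigma)=D(\rho\|\sigma)-\tfrac{s}{2}V(\rho\|\sigma)+O(s^2)$ yields $H_{r}(\rho\|\sigma)=D(\rho\|\sigma)-\sqrt{2V(\rho\|\sigma)r}+O(r)$ as $r\to0^+$, after optimising at $s=\sqrt{2r/V(\rho\|\sigma)}$. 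Evaluated at $r=a_n^2$ this gives the matching upper bound $\tfrac{1}{n}D_H^{\eps_n}\le D(\rho\|\sigma)-\sqrt{2V(\rho\|\sigma)}\,a_n+o(a_n)$, since $a_n^2=o(a_n)$.

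The main obstacle is the converse: the usual Hoeffding asymptotic $-\tfrac{1}{n}\log\tilde\beta_n(r)\to H_r$ is known with $o(1)$ error at fixed $r>0$ but this error is not a priori uniform as $r=a_n^2\to0$. One must therefore revisit the converse direction of Hayashi's quantum Hoeffding bound, keep track of the finite-$n$ corrections, and verify that they remain $o(a_n)$ on the moderate-deviation scale; this requires a second-order Taylor expansion of the relevant R\'enyi divergence together with careful control of the penalty term $\log(1/(1-\eps_n))\sim\eps_n$, which is $o(a_n)$ since $\eps_n=e^{-na_n^2}$ decays super-polynomially. Achievability, by contrast, reduces cleanly to the classical moderate deviation principle for bounded i.i.d.\ sums via \Cref{prop_1}, the only non-trivial ingredient being the sharp Chernoff-plus-Taylor estimate encoded in the display above.
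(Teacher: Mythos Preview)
Your achievability argument is essentially the paper's own sketch (see the remark following \Cref{moderate}): both invoke \Cref{prop_1} to reduce to the upper tail $\PP(S_n\ge\sqrt{2V(\rho\|\sigma)}\,na_n)$ of a centered bounded i.i.d.\ sum, and your Chernoff-plus-Taylor estimate is exactly the content of Bennett's inequality applied on the moderate scale.

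Your converse, however, takes a genuinely different route. The paper does not pass through the Hoeffding exponent or R\'enyi divergences at all: it uses the one-shot inequality of \Cref{boundsD}, which lower-bounds $\beta(T_n)$ directly in terms of $\langle\Omega_{\rho_n},P_{[-v_n,\infty)}(\log\Delta_{\sigma_n|\rho_n})\Omega_{\rho_n}\rangle$, i.e.\ in terms of the \emph{same} classical tail probability. A lower bound on that tail is then supplied by the Berry--Esseen theorem (or, equivalently, by a reverse-Markov inequality plus cumulant expansion, as in the proof of \reff{out}). Because \Cref{boundsD} is a genuine one-shot bound, no uniformity-in-$r$ question ever arises; the Berry--Esseen error $O(1/\sqrt{n})$ is $o(a_n)$ automatically.

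Your route via $H_{a_n^2}(\rho\|\sigma)$ is the one taken in \cite{CTT17,HC17}, and it does work, but the gap you flag is real if you invoke only the \emph{asymptotic} Hoeffding theorem. What is actually needed is a one-shot Nagaoka-type converse such as $-\log\beta(T)\le nD_\alpha(\rho\|\sigma)+\frac{\alpha}{1-\alpha}\log\frac{1}{1-\eps_n}$ for $0<\alpha<1$, after which the expansion $D_{1-s}(\rho\|\sigma)=D(\rho\|\sigma)-\tfrac{s}{2}V(\rho\|\sigma)+O(s^2)$ and the choice $s\sim\sqrt{2a_n^2/V(\rho\|\sigma)}$ close the argument with the penalty term $\frac{\alpha}{1-\alpha}\log\frac{1}{1-\eps_n}=O(a_n^{-1}\eps_n)=o(a_n)$. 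So your proposal is salvageable, but as written it leans on an asymptotic statement where a finite-$n$ one is required; the paper's \Cref{boundsD}-based approach sidesteps this entirely by keeping the whole argument inside the classical-probability reduction.
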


The proof of \Cref{theoremCTT17} relies on a reduction of the problem to the one of bounding the law of a sum of independent random variables (which is a typical example of a martingale) from above and below. Such a reduction does not directly carry over to the case of correlated states, and hence a finer analysis of the error probabilities in the moderate deviation regime is needed.\\\\
 We recall that a sequence $\{\rho_n\}_{n\in\NN}$ of states $\rho_n\in\cD(\cH^{\otimes n})$ satisfies the (homogeneous) upper (lower)-factorization property if there exists $R>0$ such that for each $n>1$:
\begin{align}
	&	\rho_n\le R~\rho_{n-1}\otimes \rho_1~~~~~~~\text{(upper factorization)}\label{ufpp},\\
&	\rho_n\ge R^{-1}\rho_{n-1}\otimes \rho_1~~~~~\text{(lower factorization)}.\label{hlfp}
	\end{align} 
\begin{theorem}\label{moderate}
	Let $\{\rho_n\}_{n\in\NN}$ and $\{\sigma_n\}_{n\in\NN}$ be two sequences of states $\rho_n,\sigma_n\in\cD(\cH^{\otimes n})$, and, for each $n$, define $\eps_n:=\e^{-na_n^2}$, where $a_n$ is a given moderate sequence. If $\{\rho_n\}_{n\in\NN}$ and $\{\sigma_n\}_{n\in\NN}$ satisfy the upper factorization property (\ref{ufpp}), with $$0\le \log R<{(4-\e^c)(\e^c-1)^2 V(\rho_1\|\sigma_1)/(6c^2)},$$ where $c:=\|\log\Delta_{\sigma_1\|\rho_1}+D(\rho_1\|\sigma_1)\id\|_\infty <\log 4$. Then, for all $n$ large enough:
			\begin{align}\label{inward}
				\frac{1}{n}D_H^{\eps_n}(\rho_n\|\sigma_n)\ge D(\rho_1\|\sigma_1)-\sqrt{2V(\rho_1\|\sigma_1)}\left(\frac{3}{4-e^c}(\log R+a_n^2)\right)^{1/2}
			\end{align}	
		If $\{\rho_n\}_{n\in\NN}$ and $\{\sigma_n\}_{n\in\NN}$ satisfy the lower factorization property (\ref{hlfp}) with $R>1$, then 
				\begin{align}\label{out}
				\frac{1}{n}	D_H^{\eps_n}(\rho_n\|\sigma_n)\le D(\rho_1\|\sigma_1)-\sqrt{2V(\rho_1\|\sigma_1)}a_n+\circ( a_n).
				\end{align}
				
	\end{theorem}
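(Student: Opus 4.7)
The plan is to prove the two claims separately using dual arguments that mirror the proofs of the finite-blocklength bounds in \Cref{ahi,cor2}: the inward direction (\ref{inward}) is an achievability statement driven by the upper factorization, while the outward direction (\ref{out}) is a converse driven by the lower factorization.

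For the inward direction, I would refine the proof of \Cref{ahi} by replacing its crude Hoeffding-style estimate $\tfrac12(e^{tc}+e^{-tc})\le e^{t^2c^2/2}$ with a Bennett/Bernstein-type bound that genuinely captures the information variance $V(\rho_1\|\sigma_1)$ rather than merely the operator norm $c$. Concretely, the centered bounded random variable $Y:=X_1+D(\rho_1\|\sigma_1)$ satisfies $|Y|\le c$ a.s.\ with $c=\|\log\Delta_{\sigma_1|\rho_1}+D(\rho_1\|\sigma_1)\,\mathrm{id}\|_\infty$ and $\mathrm{Var}(Y)=V(\rho_1\|\sigma_1)$. The elementary estimate $|Y|^k\le c^{k-2}Y^2$ for $k\ge 2$ combined with Taylor expansion of the exponential gives
\[
\EE[e^{tY}]\;\le\;1+\frac{V(\rho_1\|\sigma_1)(e^{tc}-1-tc)}{c^2}\;\le\;\exp\!\left(\frac{3V(\rho_1\|\sigma_1)}{2(4-e^c)}\,t^2\right)
\]
for all $t\in[0,1]$ provided $c<\log 4$; the second inequality reduces to the scalar check $(4-e^c)(e^{tc}-1-tc)\le 3(tc)^2/2$, which is straightforwardly verified by a Taylor-expansion argument. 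Substituting this refined MGF bound into the chain (\ref{eq11})--(\ref{up}) in the proof of \Cref{ahi} and running the same exponential-Markov optimization produces a bound on $\langle\Omega_{\rho_n},P_{[\lambda,\infty)}(\log\Delta_{\sigma_n|\rho_n})\Omega_{\rho_n}\rangle$ of the form $R^{n}\exp\bigl(-(4-e^c)(\lambda+nD(\rho_1\|\sigma_1))^2/(6nV(\rho_1\|\sigma_1))\bigr)$, valid whenever the global optimizer $t^\ast$ lies in $(0,1]$. The latter is precisely the regime isolated by the theorem's hypothesis on $\log R$. Equating this expression with $\eps_n=e^{-na_n^2}$, invoking \Cref{prop_1} to produce a test with type~II error $L_n^{-1}$, and dividing by $n$ yields (\ref{inward}).

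For the outward direction I would run a dual argument. Define a contraction $V'_{n-1}:\cB(\cH^{\otimes n})\Omega_{\rho_n}\to\cB(\cH^{\otimes n})(\Omega_{\rho_{n-1}}\otimes\Omega_{\rho_1})$ by $V'_{n-1}(X\Omega_{\rho_n})=R^{-1/2}X(\Omega_{\rho_{n-1}}\otimes\Omega_{\rho_1})$; the lower-factorization inequality $\rho_{n-1}\otimes\rho_1\le R\rho_n$ and its $\sigma$-analogue make $V'_{n-1}$ a contraction satisfying $(V'_{n-1})^\ast\Delta_{\sigma_{n-1}\otimes\sigma_1|\rho_{n-1}\otimes\rho_1}V'_{n-1}\le \Delta_{\sigma_n|\rho_n}$. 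Applying the Operator Jensen Inequality (\Cref{Jensen}) to the operator concave map $x\mapsto x^t$ for $t\in(0,1)$ and iterating $n-1$ times produces the reverse moment bound
\[
\langle\Omega_{\rho_n},\Delta_{\sigma_n|\rho_n}^t\Omega_{\rho_n}\rangle\;\ge\; R^{-(n-1)}\,\bigl[\langle\Omega_{\rho_1},\Delta_{\sigma_1|\rho_1}^t\Omega_{\rho_1}\rangle\bigr]^n,
\]
equivalently $D_t(\rho_n\|\sigma_n)\le nD_t(\rho_1\|\sigma_1)+\tfrac{(n-1)\log R}{1-t}$. Combining this with a one-shot Nagaoka/Petz converse relating $D_H^{\eps_n}(\rho_n\|\sigma_n)$ to $D_t(\rho_n\|\sigma_n)$ plus a vanishing term involving $\log\tfrac{1}{1-\eps_n}\le 2\eps_n$, and performing a moderate-scale saddle-point optimization in $t_n\uparrow 1$ using $D_{t_n}(\rho_1\|\sigma_1)=D(\rho_1\|\sigma_1)-\tfrac{1-t_n}{2}V(\rho_1\|\sigma_1)+O((1-t_n)^2)$, should reproduce the iid expansion of \Cref{theoremCTT17} and deliver (\ref{out}).

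The principal obstacle lies in the outward direction, in controlling the prefactor $R^{-(n-1)}$: a naive iteration contributes $(n-1)\log R$ to the log-divergence, which, once $1-t_n$ is chosen at the moderate scale $\sim a_n$, gives a correction of order $n\log R/a_n$, far larger than the desired $O(na_n)$. Overcoming this mismatch — by finding the correct scale for the dual optimizer $t_n$ and showing that the $\log R$ contribution enters only as an additive constant (absorbed into $o(a_n)$) rather than multiplicatively — will require exploiting the homogeneity of the factorization and passing through a sharper saddle-point analysis of the reverse MGF bound than the crude Petz-divergence converse directly supplies. This is the technically delicate step where the bulk of the work lies.
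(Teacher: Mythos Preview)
Your treatment of the inward bound (\ref{inward}) is essentially the paper's proof: both refine the iterated MGF estimate from \Cref{ahi} by a Bennett--Bernstein inequality that trades the operator norm $c$ for the variance $V(\rho_1\|\sigma_1)$, optimize in $t\in[0,1]$, and then set the resulting bound equal to $\eps_n$. The paper carries this out by optimizing first to the Bennett function $h(u)=(1+u)\log(1+u)-u$ and then invoking the elementary $h(u)\ge u^2/2-u^3/6$ together with the constraint $u\le e^c-1$, whereas you bound the MGF directly by a quadratic; the outcomes coincide.

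For the outward bound (\ref{out}) there is a genuine gap, and you have correctly located it. The R\'enyi-converse route you propose cannot be closed: lower-bounding the positive $t$-moment via $(V'_{n-1})^\ast\Delta V'_{n-1}\le\Delta_{\sigma_n|\rho_n}$ and operator Jensen produces the factor $R^{-(n-1)}$ on the \emph{wrong side}, so that $D_t(\rho_n\|\sigma_n)\le nD_t(\rho_1\|\sigma_1)+\frac{(n-1)\log R}{1-t}$ carries an additive $n\log R/(1-t_n)$ that, at the moderate scale $1-t_n\sim a_n$, is $\Theta(n/a_n)$ and swamps the target $o(na_n)$. No choice of $t_n$ rescues this, because the converse inequality $D_H^{\eps}\le D_t+\cdots$ is monotone in $D_t$ and the surplus term is genuinely of that order.

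The paper avoids the obstruction by abandoning the R\'enyi-converse altogether. It starts from the symmetric--hypothesis-testing lower bound (\Cref{boundsD}), which for any test with $\alpha(T_n)\le\eps_n$ gives
\[
\beta(T_n)\;\ge\;e^{-\theta_n}\Bigl(\tfrac{1}{1+e^{v_n-\theta_n}}\langle\Omega_{\rho_n},P_{[-v_n,\infty)}(\log\Delta_{\sigma_n|\rho_n})\Omega_{\rho_n}\rangle-\eps_n\Bigr),
\]
and then lower-bounds the spectral probability via a \emph{reverse} Markov inequality in terms of the \emph{negative} moment $\langle\Omega_{\rho_n},\Delta_{\sigma_n|\rho_n}^{-\lambda}\Omega_{\rho_n}\rangle$. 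The same contraction $W_{n-1}$ you wrote down, combined with operator monotonicity of $-x^{-\lambda}$ and operator convexity of $x^{-\lambda}$, yields
\[
\langle\Omega_{\rho_n},\Delta_{\sigma_n|\rho_n}^{-\lambda}\Omega_{\rho_n}\rangle\;\le\;R^{-n}\,\langle\Omega_{\rho_1},\Delta_{\sigma_1|\rho_1}^{-\lambda}\Omega_{\rho_1}\rangle^{\,n},
\]
so the factorization constant now enters as $R^{-n}\le 1$ and only \emph{improves} the estimate. With $\lambda=1/\sqrt{n}$ (the CLT scale, not the moderate one) and a second-order Taylor expansion of the one-site MGF, the spectral probability is bounded away from zero uniformly, and choosing $\theta_n=nD(\rho_1\|\sigma_1)-n\sqrt{2V(\rho_1\|\sigma_1)}\,a_n+\eta na_n/2$, $v_n=\theta_n-\eta na_n/2$ gives (\ref{out}). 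The point you are missing is therefore twofold: pass to the negative moment so that $R$ appears with the favorable sign, and use \Cref{boundsD} rather than a Petz/Nagaoka divergence converse so that no division by $1-t$ ever occurs.
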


\begin{remark} As mentioned previously, Gibbs states satisfy both the upper and lower factorization property with $R>1$. Hence for these states, both the upper and lower bounds of \Cref{moderate} hold (see \Cref{corr}). Finitely correlated state are known to satisfy the upper factorization property for $R>1$.
\end{remark}
	\begin{proof}
		The proof of \Cref{inward} closely follows the one of \Cref{ahi}. We start by recalling \reff{eq11}: let $n\in\NN$, then for any $0\le t\le 1$,
			\begin{align}\label{eq20}
				\langle \Omega_{\rho_n},\Delta_{\sigma_n|\rho_n}^t(\Omega_{\rho_n})\rangle &\le 	R^n\e^{-ntD(\rho_1\|\sigma_1)}~\langle \Omega_{\rho_1},\e^{t(\log \Delta_{\sigma_1|\rho_1}+D(\rho_1\|\sigma_1))}(\Omega_{\rho_1})\rangle^n.
			\end{align}
Define $X$ to be the classical centered random variable associated with $\log\Delta_{\sigma_1|\rho_1}+D(\rho_1\|\sigma_1)$ and $\Omega_{\rho_1}$ as given through \Cref{spec-meas}. Then \reff{eq20} can be rewritten as
\begin{align}\label{eq25}
		\langle \Omega_{\rho_n},\Delta_{\sigma_n|\rho_n}^t(\Omega_{\rho_n})\rangle \le R^n\e^{-ntD(\rho_1\|\sigma_1)}\e^{\Psi_n(t)},
		\end{align}
		where $\Psi_n(t):=n\log \mathbb{E}[\e^{tX}]$ is the cumulant generating function of the sum of n i.i.d.~random variables $X_i$ of law identical to the one of $X$. Now, by  \Cref{prop_1}, for any sequence $\{L_n\}_{n\in\NN}$ of positive numbers such that, for each $n$, $\log L_n\le nD(\rho_1\|\sigma_1)$, there exists a sequence $\{T_n\}_{n\in\NN}$ of tests such that for each $n$, $\beta(T_n)\le L_n^{-1}$ and 
\begin{align}
	\alpha(T_n)&\le \langle \Omega_{\rho_n}, P_{(-\log L_n, \infty)}(\log\Delta_{\sigma_n|\rho_n})(\Omega_{\rho_n})\rangle\nonumber\\
	&=\langle \Omega_{\rho_n}, P_{( L_n^{- t}, \infty)}(\Delta_{\sigma_n|\rho_n}^t)(\Omega_{\rho_n})\rangle\nonumber\\
	&\le L_n^t \langle \Omega_{\rho_n}, \Delta_{\sigma_n|\rho_n}^t(\Omega_{\rho_n})\rangle\nonumber\\
	&\le R^n\e^{-[t(nD(\rho_1\|\sigma_1)-\log L_n)-\Psi_n(t)]}.\label{eq26}
\end{align}
where the inequality in the third line comes from the Markov-type inequality \[P_{(-\log L_n, \infty)}(\log\Delta_{\sigma_n|\rho_n})\le L_n^t\Delta_{\sigma_n|\rho_n}^t,\] and the last inequality in \reff{eq26} comes from (\ref{eq25}). Following a similar Cram\'{e}r-Chernoff method as the one used in the proof of Bennett's inequality (see e.g. Theorem 2.9 of \cite{BLM13}), one can optimize the bound in \reff{eq26} over $t$ to obtain the following upper bound on $\alpha(T_n)$:
\begin{align}\label{eq21}
	\alpha(T_n)\le R^n \exp\left[{-nV(\rho_1\|\sigma_1)h\left(c\frac{nD(\rho_1\|\sigma_1)-\log L_n}{nV(\rho_1\|\sigma_1)}\right)}/c^2\right],
\end{align}	
with $h(u):=(1+u)\log (1+u)-u$, for $u>0$, and where $c:=\|\log\Delta_{\sigma_1\|\rho_1}+D(\rho_1\|\sigma_1)\id\|_\infty$. Note that the argument of the function $h$ in \reff{eq21} is indeed positive since we chose $\log L_n\le nD(\rho_1\|\sigma_1)$. The optimizer, $t_0$, is given by \[t_0:=\frac{1}{c}\log\left(1+c\frac{nD(\rho_1\|\sigma_1)-\log L_n}{nV(\rho_1\|\sigma_1)}\right).\] Imposing $t_0\le 1$, we find that $L_n$ has to satisfy
\begin{align}\label{eq2}
	0\le nD(\rho_1\|\sigma_1)-\log L_n\le (e^c-1)nV(\rho_1\|\sigma_1)/c.
	\end{align}
	Using now $h(u)\ge u^2/2-u^3/6$ for $u\ge0$ (see \cite{S12} Lemma 1), \reff{eq21} yields
	\begin{align*}
		\log \alpha(T_n)&\le n\log R-\frac{nV(\rho_1\|\sigma_1)}{c^2}\left[\frac{1}{2} \left(c\frac{nD(\rho_1\|\sigma_1)-\log L_n}{nV(\rho_1\|\sigma_1)}\right)^2-\frac{1}{6}\left(c\frac{nD(\rho_1\|\sigma_1)-\log L_n}{nV(\rho_1\|\sigma_1)}\right)^3\right]\\
		&\le n\log R-\frac{(nD(\rho_1\|\sigma_1)-\log L_n)^2}{2nV(\rho_1\|\sigma_1)}\left[(4-\e^c)/3\right]\\
		&\equiv\log \eps_n= -na_n^2,
\end{align*}	
As in the proof of \Cref{ahi}, the assumption that $R\ge 1$ is crucial in order for the last equation to have a solution. From the above we obtain the following expression for $\log L_n$ in terms of $a_n$:
\begin{align*}
\log L_n=	nD(\rho_1\|\sigma_1)-n\sqrt{2V(\rho_1\|\sigma_1)}\left(\frac{3}{4-e^c}(\log R+a_n^2)\right)^{1/2},
\end{align*}	
provided that \reff{eq2} is satisfied for large $n$. This condition imposes R to be smaller than $\exp({(4-\e^c)(\e^c-1)^2 V(\rho_1\|\sigma_1)/(6c^2)})$. From the above expression and the inequality $\beta(T_n)\le L_n^{-1}$ we get:
\begin{align*}
	\frac{1}{n}D_H^{\eps_n}(\rho_n\|\sigma_n)\ge D(\rho_1\|\sigma_1)-\sqrt{2V(\rho_1\|\sigma_1)}\left(\frac{3}{4-e^c}(\log R+a_n^2)\right)^{1/2}
\end{align*}	
	Next we prove \reff{out}. We first recall the following result from \cite{Jaksicetal} which we already used in \cite{DPR16} in the context of second order asymptotic analysis,
			\begin{lemma}\label{boundsD}
				For any $\theta_n,v_n\in\RR$:
				\begin{align}\label{ineq}
					\frac{\e^{-\theta_n}}{1+\e^{v_n-\theta_n}}\langle \Omega_{\rho_n},P_{[-v_n,\infty)}(\log\Delta_{\sigma_n|\rho_n})(\Omega_{\rho_n})\rangle\le \operatorname{e_{sym}^*}(\sigma_n,\e^{-\theta_n}\rho_n), 
				\end{align}
				where $\operatorname{e_{sym}^*}(A,B):=\inf_{0\le T\le \mathbb{I}} \{\tr(A(\mathbb{I}-T))+\tr(BT)\}$ is the minimum total probability of error in symmetric hypothesis testing, the definition here being extended to unnormalized operators $A,B>0$.
			\end{lemma}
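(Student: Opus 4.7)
The plan is to exhibit a positive operator $A_n \in \cP_+(\cH^{\otimes n})$ dominated in the operator order by both $\sigma_n$ and $\e^{-\theta_n}\rho_n$, whose trace is automatically a lower bound on $\operatorname{e_{sym}^*}(\sigma_n, \e^{-\theta_n}\rho_n)$, and then to show that $\tr(A_n)$ dominates the spectral quantity on the left-hand side. Indeed, for any test $0 \le T \le \II_n$, the bounds $0 \le A_n \le \sigma_n$ and $0 \le A_n \le \e^{-\theta_n}\rho_n$ yield $\tr(A_n(\II_n - T)) \le \tr(\sigma_n(\II_n - T))$ and $\tr(A_n T) \le \e^{-\theta_n}\tr(\rho_n T)$, so
\begin{align*}
\tr(A_n) = \tr(A_n(\II_n - T)) + \tr(A_n T) \le \tr(\sigma_n(\II_n - T)) + \e^{-\theta_n}\tr(\rho_n T),
\end{align*}
and taking the infimum over $T$ gives $\tr(A_n) \le \operatorname{e_{sym}^*}(\sigma_n, \e^{-\theta_n}\rho_n)$.

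The natural candidate is the noncommutative parallel sum $A_n := (\sigma_n^{-1} + \e^{\theta_n}\rho_n^{-1})^{-1}$, which is well-defined by faithfulness of $\rho_n, \sigma_n$ and satisfies both operator inequalities by standard Anderson--Trapp-type arguments. I would then identify $\tr(A_n)$ with the spectral integral $\EE[\Phi(X_n)]$, where $\Phi(x) := (\e^{-x} + \e^{\theta_n})^{-1}$ and $X_n$ is the random variable associated to the pair $(\log \Delta_{\sigma_n|\rho_n}, \Omega_{\rho_n})$ via \Cref{spec-meas}. The function $\Phi$, viewed through the substitution $y = \e^x \in (0,\infty)$ as $f(y) = y/(\e^{\theta_n} y + 1) = (1/\e^{\theta_n})\cdot y/(y + \e^{-\theta_n})$, is a textbook operator monotone function on $(0,\infty)$; in the commutative case the identity $\tr(A_n) = \EE[\Phi(X_n)]$ is immediate from functional calculus, and the general case is handled by combining the GNS identity $\tr(A_n) = \langle \Omega_{\rho_n}, \rho_n^{-1/2}A_n\rho_n^{-1/2}\Omega_{\rho_n}\rangle$ with the operator Jensen inequality (\Cref{Jensen}) applied to $\Phi$.

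Since $\Phi$ is strictly increasing, for any $x \ge -v_n$ one has $\Phi(x) \ge \Phi(-v_n) = \e^{-\theta_n}/(1 + \e^{v_n - \theta_n})$. Hence
\begin{align*}
\EE[\Phi(X_n)] \ge \Phi(-v_n)\cdot \PP(X_n \ge -v_n) = \frac{\e^{-\theta_n}}{1 + \e^{v_n - \theta_n}}\,\langle \Omega_{\rho_n}, P_{[-v_n,\infty)}(\log \Delta_{\sigma_n|\rho_n}) \Omega_{\rho_n}\rangle,
\end{align*}
which, combined with $\tr(A_n) \le \operatorname{e_{sym}^*}(\sigma_n, \e^{-\theta_n}\rho_n)$, delivers the claim.

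The main obstacle is the noncommutative identification $\tr(A_n) = \EE[\Phi(X_n)]$: when $[\rho_n, \sigma_n] \ne 0$ the spectrum of $\rho_n^{-1/2}\sigma_n\rho_n^{-1/2}$ acting on $\cH^{\otimes n}$ and that of $\Delta_{\sigma_n|\rho_n}$ acting on $\cB(\cH^{\otimes n})$ do not coincide as multisets. Overcoming this requires either a direct GNS computation showing that $\rho_n^{-1/2}A_n\rho_n^{-1/2}$ is represented by $\Phi(\log \Delta_{\sigma_n|\rho_n})$ under the embedding of $\cB(\cH^{\otimes n})$ into $\cB(\cB(\cH^{\otimes n}))$, or an inequality of the correct direction $\tr(A_n) \ge \EE[\Phi(X_n)]$ extracted from the operator monotonicity of $\Phi$ and the contractivity of the embedding $\pi$ introduced in \Cref{preliminaries}.
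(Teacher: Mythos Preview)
Your strategy has the right architecture—produce a quantity sandwiched between the spectral integral $\EE[\Phi(X_n)]$ and $\operatorname{e_{sym}^*}$—but the specific bridge you propose, the parallel sum $A_n=(\sigma_n^{-1}+\e^{\theta_n}\rho_n^{-1})^{-1}$ on $\cH_n$, sits on the wrong side. The inequality you need, $\tr(A_n)\ge \EE[\Phi(X_n)]$, is \emph{false} in the noncommuting case. For instance, take $\theta_n=0$, $\rho=\mathrm{diag}(3/4,1/4)$ in the computational basis and $\sigma=\mathrm{diag}(3/4,1/4)$ in the Hadamard basis. A direct computation gives
\[
\tr\bigl((\sigma^{-1}+\rho^{-1})^{-1}\bigr)=\tfrac{3}{7}\approx 0.4286,
\qquad
\langle\Omega_\rho,\Phi(\log\Delta_{\sigma|\rho})\Omega_\rho\rangle
=\sum_{i,j}\frac{\lambda_i\mu_j}{\lambda_i+\mu_j}\,\tr(P_iQ_j)=\tfrac{7}{16}=0.4375,
\]
so $\tr(A_n)<\EE[\Phi(X_n)]$ and your chain cannot close. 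Neither of your two proposed fixes works: $\pi\bigl(\rho_n^{-1/2}A_n\rho_n^{-1/2}\bigr)$ is left multiplication by $(\rho_n^{1/2}\sigma_n^{-1}\rho_n^{1/2}+\e^{\theta_n})^{-1}$, which is not equal to (nor dominates) the superoperator $(\Delta_{\sigma_n|\rho_n}^{-1}+\e^{\theta_n})^{-1}$ when $[\rho_n,\sigma_n]\neq 0$; and operator convexity of $y\mapsto(y+c)^{-1}$ pushes the inequality the wrong way.

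The paper does not give its own proof here (it cites \cite{Jaksicetal} and \cite{DPR16}), but the argument those references use is a parallel-sum computation at the \emph{GNS level} rather than on $\cH_n$. One observes that $\Phi(\log\Delta_{\sigma_n|\rho_n})=\bigl(\e^{\theta_n}\Delta_{\sigma_n|\rho_n}\bigr):\id$ as positive operators on $\cB(\cH_n)$, so the variational formula for the parallel sum yields, for any $T\in\cB(\cH_n)$,
\[
\langle\Omega_{\rho_n},\Phi(\log\Delta_{\sigma_n|\rho_n})\Omega_{\rho_n}\rangle
\;\le\;
\e^{\theta_n}\langle T\Omega_{\rho_n},\Delta_{\sigma_n|\rho_n}(T\Omega_{\rho_n})\rangle
+\|(\II-T)\Omega_{\rho_n}\|^2
=\e^{\theta_n}\tr(\sigma_n TT^*)+\tr(\rho_n(\II-T)^*(\II-T)).
\]
Choosing $T=\II-P$ with $P$ the optimal (projective) Neyman--Pearson test for $\operatorname{e_{sym}^*}(\sigma_n,\e^{-\theta_n}\rho_n)$ turns the right-hand side into $\e^{\theta_n}\tr(\sigma_n(\II-P))+\tr(\rho_n P)$, and multiplying by $\e^{-\theta_n}$ gives exactly the desired bound. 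The remaining step—bounding $\EE[\Phi(X_n)]$ below by $\Phi(-v_n)\,\PP(X_n\ge -v_n)$—is the one you already have.
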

Fix $\theta_n,v_n\in\RR$ to be specified later, then \reff{ineq} implies that, for any test $0\le T_n\le \mathbb{I}_n$ for which $\alpha(T_n)\le \eps_n$, we have
\begin{align}
	\beta(T_n)\ge \e^{-\theta_n}\left(\frac{\langle \Omega_{\rho_n}, P_{[-v_n,\infty)}(\log\Delta_{\sigma_n|\rho_n})(\Omega_{\rho_n})\rangle}{1+\e^{v_n-\theta_n}}-\eps_n\right)\label{eq13}
			\end{align}
			Now for $0\le \lambda\le 1$, 
			\begin{align}
				\langle\Omega_{\rho_n}, P_{[-v_n,\infty)}(\log\Delta_{\sigma_n|\rho_n})(\Omega_{\rho_n})\rangle&=				\langle\Omega_{\rho_n}, P_{[\e^{-\lambda v_n},\infty)}(\Delta_{\sigma_n|\rho_n}^{\lambda})(\Omega_{\rho_n})\rangle\nonumber\\
				&\ge 1-\e^{-\lambda v_n}\langle \Omega_{\rho_n}, \Delta_{\sigma_n|\rho_n}^{-\lambda}(\Omega_{\rho_n})\rangle,\label{markov}
			\end{align}
			where the last line above follows from the reverse Markov inequality (see \Cref{reversemarkov}).\\\\
			We now obtain a bound on $\langle \Omega_{\rho_n}, \Delta_{\sigma_n|\rho_n}^{-\lambda}(\Omega_{\rho_n})\rangle$: consider the map \[W_n:\cB(\cH^{\otimes n})\Omega_{\rho_n}\to \cB(\cH^{\otimes n})(\Omega_{\rho_{n-1}}\otimes \Omega_{\rho_1}),\] where
			\[W_{n-1}(X_n\Omega_{\rho_n})=R^{-1/2}X_n(\Omega_{\rho_{n-1}}\otimes\Omega_{\rho_1}).\]
			Similarly to the map $V_{n-1}$ defined in \Cref{vn}, one can show that the map $W_{n-1}$ is a contraction, by the lower factorization property: for any $X_n\in\cB(\cH^{\otimes n})$,
					\begin{align}
						\langle W_{n-1}(X_n\Omega_{\rho_{n}}),W_{n-1}(X_n \Omega_{\rho_{n}})\rangle&=R^{-1}\langle X_n(\Omega_{\rho_{n-1}}\otimes\Omega_{\rho_1}),X_n(\Omega_{\rho_{n-1}}\otimes \Omega_{\rho_1})\rangle\nonumber\\
						&=R^{-1}\tr(X_n(\rho_{n-1}\otimes \rho_1)X_n^*)\nonumber\\
					&	\le \tr(X_n\rho_{n} X_n^{*})\nonumber\\
					&	=\langle X_n\Omega_{\rho_{n}}
						,X_n\Omega_{\rho_{n}})\rangle.\nonumber
					\end{align}
					Moreover, 
								\begin{align*}
									\langle W_{n-1} (X_n \Omega_{\rho_{n}}),~& \Delta_{\sigma_{n-1}\otimes \sigma_1|\rho_{n-1}\otimes\rho_1} W_{n-1}(X_n\Omega_{\rho_{n}})\rangle\\
									&=R^{-1}\langle X_n(\Omega_{\rho_{n-1}}\otimes \Omega_{\rho_1}),\Delta_{\sigma_{n-1}\otimes\sigma_1|\rho_{n-1}\otimes\rho_1}X_n(\Omega_{\rho_{n-1}}\otimes \Omega_{\rho_1})\rangle\\
								&	=R^{-1}\tr ((\sigma_{n-1}\otimes\sigma_1) X_nX_n^*)\le  \tr(\sigma_{n}X_nX_n^*)\\
								&	=\langle X_n\Omega_{\rho_{n}},\Delta_{\sigma_{n} |\rho_{n}}(X_n\Omega_{\rho_{n}})\rangle,
								\end{align*}
								so that
								\begin{align}\label{eq4}
									W_{n-1}^*\Delta_{\sigma_{n-1}\otimes \sigma_1|\rho_{n-1}\otimes \rho_1}W_{n-1}\le \Delta_{\sigma_{n}|\rho_{n}}.
								\end{align}
								This implies that for $0\le\lambda\le 1$,
								\begin{align}
									\langle \Omega_{\rho_n},\Delta_{\sigma_n|\rho_n}^{-\lambda}(\Omega_{\rho_n})\rangle&\le \langle\Omega_{\rho_n}, (W_{n-1}^*(\Delta_{\sigma_{n-1}\otimes\sigma_1|\rho_{n-1}\otimes \rho_1})W_{n-1})^{-\lambda}(\Omega_{\rho_n})\rangle\nonumber\\
									&\le \langle \Omega_{\rho_n}, W_{n-1}^*(\Delta_{\sigma_{n-1}\otimes \sigma_1|\rho_{n-1}\otimes\rho_1})^{-\lambda}W_{n-1}\Omega_{\rho_n}\rangle\nonumber\\
&=								R^{-1} \langle \Omega_{\rho_{n-1}}, \Delta^{-\lambda}_{\sigma_{n-1}|\rho_{n-1}}(\Omega_{\rho_{n-1}})\rangle\langle\Omega_{\rho_1}, \Delta_{\sigma_1|\rho_1}^{-\lambda}(\Omega_{\rho_1})\rangle\nonumber\\
&\le R^{-n} \langle \Omega_{\rho_1},\Delta_{\sigma_1|\rho_1}^{-\lambda}(\Omega_{\rho_1})\rangle^n,\label{ineq3}
									\end{align}
									where the first inequality follows from operator monotonicity of $x\mapsto -x^{-\lambda}$ as well as \reff{eq4}, the second one from operator convexity of $x\mapsto x^{-\lambda}$, and the last one by iterating the process $n-1$ times. Now by using inequalities (\ref{markov}) and (\ref{ineq3}), inequality (\ref{eq13}) implies
									\begin{align}\label{eq28}
										\beta(T_n)\ge \e^{-\theta_n}\left(\frac{1-\e^{-\lambda v_n}R^{-n}\langle \Omega_{\rho_1}, \Delta^{-\lambda}_{\sigma_1|\rho_1}(\Omega_{\rho_1})\rangle^n}{1+\e^{v_n-\theta_n}}-\eps_n\right).
										\end{align}
										Defining $\xi\equiv \log\Delta_{\sigma_1|\rho_1}+D(\rho_1\|\sigma_1)\id$, 
										\begin{align*}
											\langle \Omega_{\rho_1}, \Delta_{\sigma_1|\rho_1}^{-\lambda}(\Omega_{\rho_1})\rangle^n=		\langle \Omega_{\rho_1^{\otimes n}}, \left(\e^{-\lambda \xi}\right)^{\otimes n}(\Omega_{\rho_1^{\otimes n}})\rangle\e^{n\lambda D(\rho_1\|\sigma_1)}
											\end{align*}
											which, by functional calculus, can be interpreted as the moment generating function of a sum of centered i.i.d.~random variables. Following the steps of the proof of the Berry-Esseen theorem \cite{D10}, one can find how quickly its associated cumulant generating function gets close to the one of a Gaussian random variable. More precisely, 
											\begin{align*}
											\langle \Omega_{\rho_1},\e^{-\lambda\xi}(\Omega_{\rho_1})\rangle&= \langle \Omega_{\rho_1}, (\id-\lambda\xi+\frac{(\lambda \xi)^2}{2}+\mathcal{O}\left({\lambda^3 \xi^3}\right))(\Omega_{\rho_1})\rangle\\
											&=1  +\frac{\lambda^2}{2}V(\rho_1\|\sigma_1)+\mathcal{O}(\lambda^3\langle \Omega_{\rho_1}, \xi^3(\Omega_{\rho_1})\rangle)\\
											&=\e^{\frac{\lambda^2}{2}V(\rho_1\|\sigma_1)+\mathcal{O}(\lambda^3\langle \Omega_{\rho_1},\xi^3(\Omega_{\rho_1})\rangle)},
													\end{align*}
					where we used the fact that $\langle \Omega_{\rho_1}, \xi(\Omega_{\rho_1})\rangle=0$ in the second line. Taking $\lambda=1/\sqrt{n}$, \reff{eq28} reduces to
											\begin{align*}
															\beta(T_n)\ge \e^{-\theta_n}\left(\frac{1-\e^{- v_n/\sqrt{n}}\e^{  \sqrt{n}D(\rho_1\|\sigma_1)}R^{-n}\e^{\frac{1}{2}V(\rho_1\|\sigma_1)+\mathcal{O}(\langle \Omega_{\rho_1},\xi^3(\Omega_{\rho_1})\rangle/\sqrt{n})}}{1+\e^{v_n-\theta_n}}-\eps_n\right).
															\end{align*}
											Choosing then, for any $\eta>0$,
											\begin{align*}
												\theta_n:=nD(\rho_1\|\sigma_1)-n\sqrt{2V(\rho_1\|\sigma_1)}a_n+\eta na_n/2,
											\end{align*}
											and $v_n:=\theta_n-\eta na_n/2 $, the term between parentheses is bounded, so that, for $n$ large enough,
					\begin{align*}
				\frac{1}{n}D_H^{\eps_n}(\rho_1\|\sigma_1)\le\frac{1}{n}\theta_n+\eta a_n/2=D(\rho_1\|\sigma_1)-\sqrt{2V(\rho_1\|\sigma_1)}a_n+\eta a_n.
				\end{align*}

		\qed
		\end{proof}
		\begin{remark}
					\reff{inward} is found similarly to the bounds derived in \Cref{ahi}, and the proof is inspired by the proof of Bernstein's concentration inequality (see \cite{JZ13,RS13}). As in the case of the Azuma-Hoeffding-type bound (\ref{eq17}), the difference with the derivation of Bernstein's inequality for tracial noncommutative probability spaces (see \cite{JZ13,JZ14,SM14}) comes from the fact that in the non-tracial case, the Golden-Thompson inequality (see e.g.~\cite{Bhatia}) does not hold any longer, and its application is replaced by the use of operator Jensen's inequality.
		\end{remark}	
		
		The above proof can be simply extended to take into account the non-homogeneous case. We state the result in this case in the following corollary:
		\begin{corollary}\label{moderatenh}
			Let $\{\rho_n\}_{n\in\NN}$ and $\{\sigma_n\}_{n\in\NN}$ two sequences of states on $\rho_n,\sigma_n\in\cD(\cH^{\otimes n})$. If $\{\rho_n\}_{n\in\NN}$ and $\{\sigma_n\}_{n\in\NN}$ satisfy the non-homogeneous upper factorization property (\ref{upfac}), with $$0\le \log R<{(4-\e^c)(\e^c-1)^2 \liminf_nV_n(\{\tilde{\rho}_n\}\|\{\tilde{\sigma}_n\})/(6nc^2)},$$ and associated auxiliary sequences $\{\tilde{\rho}_n\}_{n\in\NN}$ and $\{\tilde{\sigma}_n\}_{n\in\NN}$ such that
				\begin{align*}
					c:= \sup_{n\in\NN} \|\log{\Delta_{\tilde{\sigma}_n|\tilde{\rho}_n}-D(\tilde{\rho}_n\|\tilde{\sigma}_n)}\|_\infty<\log 4 ,
				\end{align*}
				then for all $n$ large enough:
			\begin{align*}
				\frac{1}{n}D_H^{\eps_n}(\rho_n\|\sigma_n)\ge \frac{1}{n} D_n(\{\tilde{\rho}_n\}\|\{\tilde{\sigma}_n\})-\sqrt{\frac{{2V_n(\{\tilde{\rho}_n\}\|\{\tilde{\sigma}_n\})}}{{n}}}\left(\frac{3}{4-e^c}(\log R+a_n^2)\right)^{1/2},
			\end{align*}	
			where 
			\begin{align*}
				&D_n(\{\tilde{\rho}_n\}\|\{\tilde{\sigma}_n\}):=\sum_{k=1}^n D(\tilde{\rho}_k\|\tilde{\sigma}_k), ~~~~~V_n(\{\tilde{\rho}_n\}\|\{\tilde{\sigma}_n\}):=\sum_{k=1}^n V(\tilde{\rho}_k\|\tilde{\sigma}_k).
				\end{align*}
			If $\{\rho_n\}_{n\in\NN}$ and $\{\sigma_n\}_{n\in\NN}$ satisfy the non-homogeneous lower factorization property (\ref{lfp}) with $R>1$, then, 
			\begin{align}\label{eq27}
				\frac{1}{n}	D_H^{\eps_n}(\rho_n\|\sigma_n)\le \frac{1}{n}D_n(\{\tilde{\rho}_n\}\|\{\tilde{\sigma}_n\})-\sqrt{\frac{{2V_n(\{\tilde{\rho}_n\}\|\{\tilde{\sigma}_n\})}}{{n}}}a_n+\circ(a_n).
			\end{align}
		\end{corollary}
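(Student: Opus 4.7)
The plan is to mirror the proof of \Cref{moderate} step by step, replacing every appearance of the single-site quantities $D(\rho_1\|\sigma_1)$, $V(\rho_1\|\sigma_1)$, and the characteristic operator $\log\Delta_{\sigma_1|\rho_1}+D(\rho_1\|\sigma_1)\id$ by their level-$k$ counterparts $D(\tilde\rho_k\|\tilde\sigma_k)$, $V(\tilde\rho_k\|\tilde\sigma_k)$, and $\xi_k:=\log\Delta_{\tilde\sigma_k|\tilde\rho_k}+D(\tilde\rho_k\|\tilde\sigma_k)\id$, and then performing the concentration/anticoncentration analysis on a sum of independent but no longer identically distributed random variables.

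For the inward bound, I would define the contraction $V_{n-1}:X_n(\Omega_{\rho_{n-1}}\otimes \Omega_{\tilde\rho_n})\mapsto R^{-1/2}X_n\Omega_{\rho_n}$ exactly as in \reff{vn}. The upper factorization inequality $\rho_n\le R\,\rho_{n-1}\otimes \tilde\rho_n$ and $\sigma_n\le R\,\sigma_{n-1}\otimes \tilde\sigma_n$ yield, as in \reff{eq4bis}, the operator bound $V_{n-1}^*\Delta_{\sigma_n|\rho_n}V_{n-1}\le \Delta_{\sigma_{n-1}|\rho_{n-1}}\otimes \Delta_{\tilde\sigma_n|\tilde\rho_n}$. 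Applying the operator Jensen inequality with $f(x)=x^t$, $t\in[0,1]$, together with operator monotonicity, and iterating $n-1$ times gives
\begin{align*}
\langle \Omega_{\rho_n},\Delta_{\sigma_n|\rho_n}^t(\Omega_{\rho_n})\rangle \le R^n \prod_{k=1}^n \langle \Omega_{\tilde\rho_k},\Delta_{\tilde\sigma_k|\tilde\rho_k}^t(\Omega_{\tilde\rho_k})\rangle = R^n e^{-t D_n(\{\tilde\rho_n\}\|\{\tilde\sigma_n\})}\prod_{k=1}^n \EE[e^{t X_k}],
\end{align*}
where $X_k$ is the classical centered random variable associated with $\xi_k$ and $\Omega_{\tilde\rho_k}$. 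Letting $\Psi_n(t):=\sum_{k=1}^n \log \EE[e^{t X_k}]$, the Cram\'er--Chernoff/Bennett analysis used to obtain \reff{eq21} goes through verbatim, giving, for any $L_n$ with $\log L_n\le D_n(\{\tilde\rho_n\}\|\{\tilde\sigma_n\})$, a test $T_n$ (from \Cref{prop_1}) with $\beta(T_n)\le L_n^{-1}$ and
\begin{align*}
\alpha(T_n)\le R^n\exp\!\left(-\frac{V_n(\{\tilde\rho_n\}\|\{\tilde\sigma_n\})}{c^2}\,h\!\left(c\,\frac{D_n(\{\tilde\rho_n\}\|\{\tilde\sigma_n\})-\log L_n}{V_n(\{\tilde\rho_n\}\|\{\tilde\sigma_n\})}\right)\right),
\end{align*}
with $h(u)=(1+u)\log(1+u)-u$. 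Using $h(u)\ge u^2/2-u^3/6$ and setting the right-hand side equal to $\eps_n=e^{-na_n^2}$ (which is legitimate for $n$ large since the constraint on $\log R$ forces the cubic correction to be absorbed asymptotically) gives exactly the asserted expression for $\log L_n$, hence the lower bound on $\frac{1}{n}D_H^{\eps_n}$.

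For the outward bound under lower factorization, I would mirror the second half of the proof of \Cref{moderate}: define $W_{n-1}:X_n\Omega_{\rho_n}\mapsto R^{-1/2}X_n(\Omega_{\rho_{n-1}}\otimes \Omega_{\tilde\rho_n})$, which is a contraction under $\rho_n\ge R^{-1}\rho_{n-1}\otimes\tilde\rho_n$, and verify the operator inequality $W_{n-1}^*\Delta_{\sigma_{n-1}\otimes\tilde\sigma_n|\rho_{n-1}\otimes\tilde\rho_n}W_{n-1}\le \Delta_{\sigma_n|\rho_n}$. Using operator monotonicity of $x\mapsto -x^{-\lambda}$ and operator convexity of $x\mapsto x^{-\lambda}$ for $\lambda\in[0,1]$, iteration produces $\langle\Omega_{\rho_n},\Delta_{\sigma_n|\rho_n}^{-\lambda}(\Omega_{\rho_n})\rangle\le R^{-n}\prod_{k=1}^n\langle \Omega_{\tilde\rho_k},\Delta_{\tilde\sigma_k|\tilde\rho_k}^{-\lambda}(\Omega_{\tilde\rho_k})\rangle$. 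Plugging this into \Cref{boundsD} together with the reverse Markov inequality, as in \reff{eq28}, gives a lower bound on $\beta(T_n)$ for every admissible $T_n$. Choosing $\lambda=1/\sqrt n$ and expanding $\langle \Omega_{\tilde\rho_k},e^{-\lambda\xi_k}(\Omega_{\tilde\rho_k})\rangle = 1+\tfrac{\lambda^2}{2}V(\tilde\rho_k\|\tilde\sigma_k)+\mathcal O(\lambda^3 \|\xi_k\|^3)$ and summing, we obtain a non-i.i.d. version of the cumulant expansion in which the asymptotic Gaussian variance is $\frac{1}{n}V_n(\{\tilde\rho_n\}\|\{\tilde\sigma_n\})$. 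Setting $\theta_n:=D_n(\{\tilde\rho_n\}\|\{\tilde\sigma_n\})-\sqrt{2n V_n(\{\tilde\rho_n\}\|\{\tilde\sigma_n\})}\,a_n+\tfrac{\eta}{2}na_n$ and $v_n:=\theta_n-\tfrac{\eta}{2}na_n$ for arbitrary $\eta>0$ produces \reff{eq27}.

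The main obstacle is handling the non-i.i.d.\ cumulant analysis for the outward direction: whereas in the homogeneous case a clean Berry--Esseen-type expansion applies to a sum of $n$ i.i.d.\ bounded random variables, here the summands $X_k$ live on different Hilbert spaces and have distinct laws, so one must carry the $\mathcal O(\lambda^3)$ term uniformly in $k$ using the hypothesis $\sup_k\|\xi_k\|_\infty\le c$ and check that the resulting error is genuinely $\circ(a_n)$ once divided by $n$. Once uniform boundedness is in hand, the proof of \reff{eq27} reduces to the same Taylor-expansion argument as in the homogeneous case.
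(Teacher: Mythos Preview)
Your proposal is correct and matches the paper's approach exactly: the paper does not give a separate proof of \Cref{moderatenh} but simply states that ``the above proof can be simply extended to take into account the non-homogeneous case,'' and your sketch spells out precisely that extension---replacing each single-site quantity by its level-$k$ counterpart in the contraction/iteration argument of \Cref{moderate} and redoing the Bennett and reverse-Markov/Taylor steps for sums of independent non-identically distributed bounded variables. Your identification of the uniform bound $\sup_k\|\xi_k\|_\infty\le c$ as the ingredient that controls the cubic error term is also exactly what is implicitly needed.
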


		\begin{remark}
			Theorem 1 of \cite{CTT17} (stated as \Cref{theoremCTT17} above) can be proved using the framework of relative modular operators by following similar steps as in the proof of \Cref{moderate}. The proof would only differ from the one above in that in the case of sequences of uncorrelated states, quantities of the form $(\log \Delta_{\sigma_n|\rho_n}+D(\rho_n\|\sigma_n)~\id)$ together with $\Omega_{\rho_n}$ can be directly associated to sums of independent, centered random variables. Therefore \Cref{prop_1} and \Cref{boundsD} suffice to reduce our problem to the one of finding asymptotic upper and lower bounds on tail probabilities of classical martingales. The lower bound on the $\eps_n$-hypothesis testing relative entropy therefore arises from a direct application of Bennett's inequality, whereas the lower bound arises from a direct application of the Berry-Esseen theorem. 
			\end{remark}

\section{Application to classical-quantum channels with memory}\label{channelcoding}

In \cite{CTT17}, the moderate deviation analysis of binary quantum hypothesis testing of pairs of sequences of uncorrelated states served as a tool to find asymptotic rates for transmission of information over a memoryless classical-quantum (c-q) channel\footnote{A channel is said to be memoryless if there is no correlation in the noise acting on successive inputs to the channel.}, subject to a sequence of tolerated error probabilities $\{\eps_n\}_{n\in\NN}$ vanishing sub-exponentially, with $\eps_n:=\e^{-na_n^2}$, for any moderate sequence $\{a_n\}_{n\in\NN}$ of real numbers. Here, we extend their results to a class of c-q channels with memory, described below. Let $\cH$ be a finite-dimensional Hilbert space and $\mathcal{X}$ be a (possibly uncountable) set of letters, called an \textit{alphabet}. In what follows we assume that $\cX$ is finite. By a \textit{classical-quantum} channel, we mean a map \[\mathcal{W}: \mathcal{X}\to \mathcal{D}(\cH),\]
and we denote its image by $\im(\mathcal{W})$. 
\\\\
Suppose that Alice (the sender) wants to communicate with Bob (the receiver) using the channel $\mathcal{W}$. To do this, they agree on a finite number of possible messages, labelled by the index set $\mathcal{M}:=\{1,...,M\}$. To send the message labelled by $k\in\mathcal{M}$, Alice encodes her message into a codeword $\phi(k)\equiv x_k\in\mathcal{X}$. The c-q channel $\mathcal{W}$ maps the codeword $x_k$ to a quantum state $\mathcal{W}(x_k)\in\cD(\cH)$, which Bob receives. To decode Alice's message, Bob performs a measurement, given by a POVM $T=\{T_1,...,T_M\}$ on $\cH$, where, for $i=1,2,...,M$, if the outcome corresponding to $T_i$ is obtained, he infers that the $i^{\text{th}}$ message was sent. If the message $k$ is sent, the probability of obtaining the outcome $l$ is given by
\begin{align*}
	\mathbb{P}(l|k)=\tr (\mathcal{W}(x_k)T_l),
	\end{align*} 
	and the average success probability of the encoding-decoding process is hence given by
	\begin{align*}
		\mathbb{P}(\operatorname{success}|\mathcal{W},T)=\frac{1}{M}\sum_{k=1}^M \tr(\mathcal{W}(x_k)T_k).
		\end{align*}
		For any fixed $\eps\in (0,1)$, the \textit{one-shot $\eps$-error capacity} of $\mathcal{W}$ is defined as follows:
		\begin{align*}
			C(\mathcal{W},\eps):=\log M^*(\mathcal{W},\eps),
			\end{align*}
			where,
		\begin{align}\label{Mstar}
			M^*(\mathcal{W},\eps):=\max~\{M\in \NN|~\exists \text{ POVM } T\equiv\{T_1,...,T_M\}:~ \mathbb{P}(\text{sucess}|\mathcal{W},T)\ge 1-\eps\}.
			\end{align}
		Capacities of c-q channels were originally evaluated in the asymptotic limit in which the channel is assumed to be available for arbitrary many uses. In the case of $n$ successive uses of a memoryless channel, the set of messages $\{1,...,M_n\}$ is encoded into $n$ letters, each belonging to a common alphabet $\mathcal{X}$. The encoding map is written as follows:
		\begin{align*}
			\phi_n:k\in\{1,...,M_n\}\mapsto \phi_n(k)=(x_{k,1},~...,~x_{k,n})\in\mathcal{X}^n.
			\end{align*}
		Then each letter $x\in\mathcal{X}$ is mapped to a state $\mathcal{W}(x)$. Then $n$ successive uses of the channel $\mathcal{W}$ map the sequence $(x_{k,1},~...,~x_{k,n})$ to 
		\begin{align*}
			\mathcal{W}^{\otimes n}(x_{k,1},~...,~x_{k,n})\equiv  \mathcal{W}(x_{k,1})\otimes~...~\otimes \mathcal{W}(x_{k,n}).
		\end{align*}
A natural extension of this framework is obtained by dropping the assumption of independence of successive uses of a single channel $\mathcal{W}$, thus allowing the channel to have memory.\\\\ In this section we consider a particular class of channels with memory defined as follows: let $\mathcal{W}_n:\mathcal{X}^{n}\to \mathcal{D}(\cH^{\otimes n})$ be a c-q channel where, for each $(x_1,~...,~x_n)\in \mathcal{X}^{ n}$, $\mathcal{W}_n(x_1,...,x_n)$ satisfies the non-homogeneous upper-factorization property
\begin{align}\label{ufpc}
	\mathcal{W}_n(x_1,...,x_n)\le R~\mathcal{W}_{n-1}(x_1,...,x_{n-1})\otimes \mathcal{W}_1(x_n).
\end{align}	
for some positive number $R$. We call this property of the maps $\mathcal{W}_n$ \textit{channel upper factorization property}.
%One can for example think of a message first encoded into some word of length $n$ whose letters belong to the alphabet $\mathcal{X}$, and then implemented into non-homogeneous finitely-correlated state on a spin chain of size $n$. 
Similarly, the assumption of independence of uses of a channel can be relaxed to incorporate another class of c-q channels with memory, whose outputs satisfy the non-homogeneous lower-factorization property:
\begin{align}\label{lfpc}
		R^{-1}\mathcal{W}_{n-1}(x_1,...,x_{n-1})\otimes \mathcal{W}_1(x_n)\le\mathcal{W}_n(x_1,...,x_n),
	\end{align}	
	and we call this property of the maps $\mathcal{W}_n$ \textit{channel lower factorization property}. \\\\
	We obtain bounds on the capacity of the above mentioned channels (i) for the case of finite blocklength (i.e.~finite $n$), as well as (ii) in the moderate deviation regime.	Our result (ii) extends the analysis of \cite{CTT17}, where asymptotic rates were found in the case of memoryless
c-q channels, to these new classes of channels with memory. As in \cite{CTT17}, the proofs of our results rely on bounds on the one-shot capacity of c-q channels obtained by Wang and Renner \cite{WR12} (stated as
\Cref{WR12} below) as well as the ones derived in Proposition 5 of \cite{TT15}. Here we make use of the notations of \cite{MO14}:
For every c-q channel $\mathcal{W}:\mathcal{X}\to \mathcal{D}(\cH)$, the following map
\begin{align*}
	\mathbb{W}:\mathcal{X}\to\cD(\cH_{\cX}\otimes\cH), ~~~~~~~\mathbb{W}(x):=|x\rangle\langle x|\otimes \mathcal{W}(x)
	\end{align*}
	is called the \textit{lifted channel} of $\mathcal{W}$, where $\cH_\cX$ is an auxiliary Hilbert space, and $\{|x\rangle:~x\in\cX\}$ is an orthonormal basis in it. The map $\mathcal{W}$ admits a natural linear extension to the set of probability mass functions $p_\cX$ on $\cX$ given by:
	\begin{align*}
		\mathcal{W}(p_\cX):=\sum_{x\in\cX}p_\cX(x) \mathcal{W}(x).
	\end{align*}	
	This extension can also be used at the level of the lifted channels as follows:
	\begin{align*}
		\mathbb{W}(p_\cX):=\sum_{x\in\cX}p_\cX(x)|x\rangle\langle x|\otimes \mathcal{W}(x).
	\end{align*}	

\begin{theorem}[\cite{WR12} Theorem 1]\label{WR12}
The $\eps$-error one-shot capacity of a c-q channel $\mathcal{W}:\mathcal{X}\to \cD(\cH)$ satisfies:
\begin{align*}
	C(\mathcal{W},\eps)\ge \sup_{p_\cX}D^{\eps'}_H(\rho_\cX\|\sigma_\cX) -\log\frac{4\eps}{(\eps-\eps')},
\end{align*}	
for every $\eps'\in(0,\eps)$, where for any finitely supported probability mass function $p_\cX$ on $\cX$, 
\begin{align*}
	\rho_\cX:=\mathbb{W}(p_\cX),~~~~~~~~~\sigma_\cX:=\sum_{x\in\cX}p_\cX(x)|x\rangle\langle x|\otimes \mathcal{W}(p_\cX).
	\end{align*}
\end{theorem}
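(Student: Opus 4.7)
The plan is to prove this lower bound by combining random coding with the Hayashi--Nagaoka operator inequality, using an optimal test for the hypothesis-testing problem $\rho_\cX$ versus $\sigma_\cX$ to build the decoder.

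First, I would fix any distribution $p_\cX$ on $\cX$ and let $T$ be an optimal test realising $D_H^{\eps'}(\rho_\cX\|\sigma_\cX)$, so that $\tr(T\rho_\cX)\ge 1-\eps'$ and $\tr(T\sigma_\cX)\le \beta:=2^{-D_H^{\eps'}(\rho_\cX\|\sigma_\cX)}$. Since $\rho_\cX$ and $\sigma_\cX$ are both block-diagonal in the basis $\{|x\rangle\}_{x\in\cX}$ of $\cH_\cX$, pinching $T$ along this basis leaves both traces invariant (the pinched operator still lies in $[0,\mathbb{I}]$), so one may assume without loss of generality that $T=\sum_{x\in\cX}|x\rangle\langle x|\otimes T_x$ with $0\le T_x\le \mathbb{I}_\cH$. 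The two constraints then read
\begin{align*}
\sum_{x}p_\cX(x)\tr\!\bigl(T_x\,\mathcal{W}(x)\bigr)\ge 1-\eps',\qquad \sum_{x}p_\cX(x)\tr\!\bigl(T_x\,\mathcal{W}(p_\cX)\bigr)\le \beta,
\end{align*}
providing a family of local tests with controlled average type-I and type-II errors.

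Next I would sample codewords $X_1,\ldots,X_M$ i.i.d.\ from $p_\cX$ and take the decoder POVM
\begin{align*}
\Lambda_k:=\Bigl(\sum_{j=1}^M T_{X_j}\Bigr)^{-1/2}T_{X_k}\Bigl(\sum_{j=1}^M T_{X_j}\Bigr)^{-1/2},\qquad 1\le k\le M,
\end{align*}
with the Moore--Penrose pseudo-inverse where necessary, and completed to a full POVM by the residual $\mathbb{I}-\sum_k\Lambda_k$. The parametrised Hayashi--Nagaoka inequality
\begin{align*}
\mathbb{I}-(S+R)^{-1/2}S(S+R)^{-1/2}\le (1+c)(\mathbb{I}-S)+(2+c+c^{-1})R,\qquad c>0,\ 0\le S\le \mathbb{I},\ R\ge 0,
\end{align*}
applied with $S=T_{X_k}$ and $R=\sum_{j\ne k}T_{X_j}$, would bound the conditional error for message $k$ by $\tr(\mathcal{W}(X_k)[(1+c)(\mathbb{I}-T_{X_k})+(2+c+c^{-1})\sum_{j\ne k}T_{X_j}])$. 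Taking expectation over the random code and using independence of $X_j$ from $X_k$ for $j\ne k$ to reduce the cross terms to $\mathbb{E}[\tr(\mathcal{W}(X_k)T_{X_j})]\le \beta$, the expected average error would satisfy
\begin{align*}
\mathbb{E}[\bar p_e]\le (1+c)\eps'+(2+c+c^{-1})(M-1)\beta.
\end{align*}

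Finally, because this expectation is attained by at least one realisation of the code, there exists a deterministic code of size $M$ with $\bar p_e\le \eps$ whenever the right-hand side above is $\le \eps$. Optimising over $c>0$ (and, if needed, a mild expurgation replacing average by maximum error) and rearranging would produce $M\ge \beta^{-1}(\eps-\eps')/(4\eps)$, which on taking logarithms and the supremum over $p_\cX$ gives the stated bound. The hardest part will be the quantitative bookkeeping that turns the Hayashi--Nagaoka estimate into exactly the constant $4\eps/(\eps-\eps')$ rather than a qualitatively equivalent but slightly worse one; the more routine subtleties are checking that the pinching step preserves both traces (using $[\rho_\cX,|x\rangle\langle x|\otimes\mathbb{I}_\cH]=[\sigma_\cX,|x\rangle\langle x|\otimes\mathbb{I}_\cH]=0$) and handling the pseudo-inverse when $\sum_j T_{X_j}$ is rank-deficient.
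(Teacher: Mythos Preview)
The paper does not prove this statement at all: Theorem~\ref{WR12} is quoted verbatim from \cite{WR12} and used as a black box in the proofs of Propositions~\ref{prop} and~\ref{eq29}. There is therefore no ``paper's own proof'' to compare your proposal against.

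That said, your sketch is essentially the original Wang--Renner argument: reduce the optimal hypothesis-testing POVM to a block-diagonal one by pinching (legitimate because both $\rho_\cX$ and $\sigma_\cX$ commute with the classical register), generate a random codebook i.i.d.\ from $p_\cX$, build the pretty-good/square-root decoder from the per-letter tests $T_x$, and control the expected error via the Hayashi--Nagaoka operator inequality. The cross term indeed averages to $\tr(T_{X_j}\mathcal{W}(p_\cX))$ by independence, which is exactly the type-II error bounded by $\beta$. Your identification of the one genuinely fiddly point---tuning the parameter $c$ in the Hayashi--Nagaoka inequality to land on the precise constant $4\eps/(\eps-\eps')$ rather than a cruder one---is accurate; everything else (pinching invariance, pseudo-inverse on the possibly rank-deficient sum) is routine.
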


In our proofs, we also use the following results that one can for example find in \cite{TT15}. Given a c-q channel $\mathcal{W}$, the Holevo capacity of $\mathcal{W}$
is given by
		\begin{align*}
			\chi^*(\mathcal{W}):= \min_{\sigma\in\cD(\cH)}~\sup_{x\in\cX}~D(\mathcal{W}(x)\|\sigma).
		\end{align*}
		The minimization over $\sigma\in\cD(\cH)$ is achieved for a unique state called the \textit{divergence centre} and denoted by $\sigma^*(\mathcal{W})$. The Holevo capacity has the following alternative representation
			\begin{align*}
				\chi^*(\mathcal{W})\equiv \sup_{p_\cX}~\sum_{x\in\cX} p_\cX(x)D\left(\mathcal{W}(x)\|\mathcal{W}(p_\cX)\right),
			\end{align*}
			and we denote the set of probability mass functions for which the above supremum is achieved by $\Pi(\mathcal{W})$. Now there exists a probability mass function $p_\mathcal{X}\in\Pi(\mathcal{W})$ such that \cite{TT15}
			\begin{align}\label{lll} D(\rho_{p_\mathcal{X}}\|\sigma_{p_{\mathcal{X}}})=\chi^*(\mathcal{W}),~~~~~~~~V(\rho_{p_{\mathcal{X}}}\|\sigma_{p_{\mathcal{X}}})=V_{\text{min}}(\mathcal{W}),
				\end{align} for states
	\begin{align}\label{rhosigma}
		&\rho_{p_\mathcal{X}}=\mathbb{W}(p_{\mathcal{X}}),~~~~~~~~~\sigma_{p_\mathcal{X}}=\sum_{x\in\mathcal{X}}p_{\mathcal{X}}(x)|x\rangle\langle x|\otimes \mathcal{W}(p_\mathcal{X}),
\end{align}
where
\begin{align*}
	V_{\text{min}}(\mathcal{W}):=\inf_{q_\mathcal{X}\in \Pi(\mathcal{W})}\sum_{x\in\mathcal{X}}q_\mathcal{X}(x) V(\mathcal{W}(x)\|\sigma^*(\mathcal{W})),
	\end{align*}
where $V(\cdot\|\cdot)$ is defined in \Cref{V}. We are interested in the finite blocklength behavior of the one-shot $\eps$-error capacity of the sequence of channels $\mathcal{W}_n$ that is the dependence of  $C(\mathcal{W}_n,\eps)$ on $n$. More precisely, we obtain lower bounds on $C(\mathcal{W}_n,\eps)$ in the following cases: \\
(i) $\mathcal{W}_n:=\mathcal{W}^{\otimes n}$ (memoryless case), and \\
(ii) $\mathcal{W}_n$ satisfies the channel upper factorization property (\ref{ufpc}). \\\\
The following result is a consequence of Theorem \ref{uncorcase}, Corollary \ref{cor2} and Theorem \ref{WR12} (for (i) one could alternatively use \Cref{theorem6} to get stronger bounds, but we omit this analysis for simplicity):
\begin{proposition}\label{prop}
	(i) Let $\mathcal{W}:\cX\to \cD(\cH)$ be a memoryless c-q channel. Then, for any $\eps\in(0,1)$, any $p_\cX\in\Pi(\mathcal{W})$, and any $\eps'\in(0,\eps)$:
	\begin{align*}
		C(\mathcal{W}^{\otimes n},\eps)\ge n\chi^*(\mathcal{W})-\sqrt{2n\log\eps'^{-1}}c_{p_\cX}-\log\frac{4\eps}{\eps-\eps'},
	\end{align*}	
	where $c_{p_\cX}:=\|\log\Delta_{\sigma_{p_\cX}|\rho_{p_\cX}}+D(\rho_{p_\cX}\|\sigma_{p_\cX})\id \|_{\infty}$, and $\rho_{p_\cX}$ and $\sigma_{p_\cX}$ are defined as in \Cref{rhosigma}.\\\\
	(ii) Let $\{\mathcal{W}_n\}_{n\in\NN}$ be a family of c-q channels $\mathcal{W}_n:\cX^{\otimes n}\to \cD(\cH^{\otimes n})$ satisfying the channel upper factorization property \reff{ufpc} with parameter $R\ge 1$. Then for any $\eps\in(0,1)$, any $p_\cX\in\Pi(\mathcal{W}_1)$ and any $\eps'\in(0,\eps)$:
	\begin{align*}
		C(\mathcal{W}_n,\eps)\ge \left\{\begin{aligned}
&n\chi^*(\mathcal{W}_1)-c_{p_\cX}\sqrt{2n\log(R^n\eps'^{-1})}-\log\frac{4\eps}{\eps-\eps'}~~~~\text{ for } R^n\e^{-nc_{p_\cX}^2/2}\le \eps'<\eps,\\
&n\chi^*(\mathcal{W}_1)-\frac{nc_{p_\cX}^2}{2}-\log R^n\eps'^{-1}-\log\frac{4\eps}{\eps-\eps'}~~~~~~~\text{ else.}
\end{aligned}
\right.
\end{align*}
\end{proposition}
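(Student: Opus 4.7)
The plan is to combine the Wang--Renner one-shot bound of \Cref{WR12} with the finite blocklength upper bounds on $\beta_n(\eps')$ established in \Cref{uncorcase} (for case (i)) and \Cref{cor2} (for case (ii)). In both cases I will restrict the supremum over input distributions appearing in \Cref{WR12} to a product test distribution $p_\cX^{\otimes n}$, where $p_\cX \in \Pi(\mathcal{W})$ (resp.~$\Pi(\mathcal{W}_1)$). A short calculation from the explicit expressions \reff{rhosigma} for $\rho_{p_\cX}$ and $\sigma_{p_\cX}$ yields
\[
D(\rho_{p_\cX}\|\sigma_{p_\cX}) \;=\; \sum_{x\in\cX} p_\cX(x)\, D\bigl(\mathcal{W}(x)\,\|\,\mathcal{W}(p_\cX)\bigr) \;=\; \chi^*(\mathcal{W}),
\]
since $p_\cX\in\Pi(\mathcal{W})$; this will supply the leading-order term in the final bounds.

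For case (i), taking $p := p_\cX^{\otimes n}$ and noting that $\mathcal{W}^{\otimes n}(p_\cX^{\otimes n}) = \mathcal{W}(p_\cX)^{\otimes n}$, the lifted states factorize as $\rho_p = \rho_{p_\cX}^{\otimes n}$ and $\sigma_p = \sigma_{p_\cX}^{\otimes n}$ (after the obvious regrouping of classical and quantum registers). \Cref{uncorcase} then yields
\[
\beta_n(\eps') \le \exp\!\left(-n\chi^*(\mathcal{W}) + c_{p_\cX}\sqrt{2n \log\eps'^{-1}}\right),
\]
hence $D_H^{\eps'}(\rho_p\|\sigma_p) \ge n\chi^*(\mathcal{W}) - c_{p_\cX}\sqrt{2n \log\eps'^{-1}}$. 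Substituting this into \Cref{WR12} produces the inequality claimed in (i).

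For case (ii), I keep the same product input $p := p_\cX^{\otimes n}$, but now I need to verify that the two sequences of lifted states $\{\rho_{p_\cX^{\otimes n}}\}_n$ and $\{\sigma_{p_\cX^{\otimes n}}\}_n$, built from the memory channels $\mathcal{W}_n$, inherit the non-homogeneous upper factorization property \reff{upfac} with the same constant $R$ and with auxiliary states $\tilde{\rho}_k \equiv \rho_{p_\cX}$, $\tilde{\sigma}_k \equiv \sigma_{p_\cX}$. For $\rho_{p_\cX^{\otimes n}} = \mathbb{W}_n(p_\cX^{\otimes n})$ this is obtained by applying the channel upper factorization \reff{ufpc} pointwise in $(x_1,\dots,x_n)$ and then averaging against $p_\cX^{\otimes n}$; for $\sigma_{p_\cX^{\otimes n}}$, the classical part $(\sum_x p_\cX(x)|x\rangle\langle x|)^{\otimes n}$ is already a product, so it suffices to check the bound $\mathcal{W}_n(p_\cX^{\otimes n}) \le R\, \mathcal{W}_{n-1}(p_\cX^{\otimes(n-1)}) \otimes \mathcal{W}(p_\cX)$, which again follows by integrating \reff{ufpc} against $p_\cX^{\otimes n}$. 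Once this is in hand, \Cref{cor2} applies with $D_n(\{\rho_n\}\|\{\sigma_n\}) = n\chi^*(\mathcal{W}_1)$ and $C_n^2 = n c_{p_\cX}^2$, producing the stated two-case upper bound on $\beta_n(\eps')$; feeding this into \Cref{WR12} gives the two-case lower bound on $C(\mathcal{W}_n,\eps)$ claimed in (ii).

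The only step requiring genuine attention is the verification of the non-homogeneous upper factorization property for the lifted states in case (ii); once the tensor-factor bookkeeping is done, it is a direct consequence of the channel factorization hypothesis \reff{ufpc} and the product form of $p_\cX^{\otimes n}$. The remainder of the proof is a straightforward assembly of \Cref{WR12}, \Cref{uncorcase}, \Cref{cor2}, and the identity $D(\rho_{p_\cX}\|\sigma_{p_\cX}) = \chi^*(\mathcal{W}_1)$ coming from $p_\cX \in \Pi(\mathcal{W}_1)$.
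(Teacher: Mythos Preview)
Your proposal is correct and follows essentially the same route as the paper's proof: apply the Wang--Renner bound (\Cref{WR12}), restrict to the product input $p_\cX^{\otimes n}$ with $p_\cX\in\Pi(\mathcal{W}_1)$, verify that the lifted states $\rho_{p_\cX^{\otimes n}}$ and $\sigma_{p_\cX^{\otimes n}}$ are i.i.d.\ (case (i)) or inherit the upper factorization property with constant auxiliary states $\rho_{p_\cX},\sigma_{p_\cX}$ (case (ii)), and then invoke the finite blocklength bounds from \Cref{uncorcase} and \Cref{cor2}/\Cref{ahi} together with $D(\rho_{p_\cX}\|\sigma_{p_\cX})=\chi^*(\mathcal{W}_1)$. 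The only cosmetic difference is that the paper cites \Cref{ahi} directly (the homogeneous statement), whereas you invoke \Cref{cor2} with $\tilde\rho_k\equiv\rho_{p_\cX}$, which specializes to the same bound.
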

\begin{proof}
We first prove (i): by a direct application of \Cref{WR12}, for any $\eps'\in(0,\eps)$,
\begin{align}
	C(\mathcal{W}^{\otimes n},\eps)&\ge \sup_{p_{\cX^{ n}}}D_H^{\eps'}(\rho_{p_{\cX^n}}\|\sigma_{p_{\cX^n}})-\log\frac{4\eps}{\eps-\eps'}\nonumber\\
&\ge \sup_{q_\cX}	D_H^{\eps'}(\rho_{q_\cX^{\otimes n}}\|\sigma_{q_\cX^{\otimes n}})-\log\frac{4\eps}{\eps-\eps'}\nonumber\\
&\ge D_H^{\eps'}(\rho_{p_\cX^{\otimes n}}\|\sigma_{p_\cX^{\otimes n}})-\log\frac{4\eps}{\eps-\eps'},\label{eq35}
\end{align}
for any probability mass function ${p_\cX}\in\Pi(\mathcal{W})$, where $D_H^\eps$ was defined in \Cref{DH}. %where in the second inequality we restricted to product distributions on $\cX^{n}$, and the third inequality comes by choosing the specific distribution $p_\cX$ of \Cref{lll}. Here, $\beta_n(\eps)$ denotes the optimal type II error in the quantum state discrimination problem with null hypothesis
%\begin{align*}
%\rho_{\cY^n}=\mathbb{W}(\mathbb{P}_{\cY}^{\otimes n})&=\sum_{(y_1,...,y_n)\in\cY^n}\mathbb{P}_{\cY}(y_1)~...~\mathbb{P}_{\cY}(y_n)|y_1~...~y_n\rangle\langle y_1~...~y_n|\otimes \mathcal{W}(y_1)\otimes~...~\otimes \mathcal{W}(y_n)\\
%&=\left(\sum_{y\in\cY}\mathbb{P}_\cX(y)|y\rangle\langle y|\otimes \mathcal{W}(y)\right)^{\otimes n}\equiv\rho_{\cY}^{\otimes n},
%\end{align*}	
%and alternative hypothesis 
%\begin{align*}
%	\sigma_{\cY^n}&=\sum_{(y_1,...,y_n)\in\cY^n}\mathbb{P}_{\cY}(y_1)~...~\mathbb{P}_{\cY}(y_n)|y_1~...~y_n\rangle\langle y_1~...~y_n|\otimes \mathcal{W}^{\otimes n}(\mathbb{P}_\cY^{\otimes n})\\
%	&=\left(\sum_{y\in\cY}\mathbb{P}_\cY(y)|y\rangle\langle y|\otimes \mathcal{W}(\mathbb{P}_\cY)\right)^{\otimes n}\equiv\sigma_{\cY}^{\otimes n}.
%	\end{align*}
Therefore, the problem reduces to the one of finding an upper bound to the optimal error of type II for the two i.i.d. sequences of states $\rho_{p_\cX^{\otimes n}}\equiv \rho_{p_\cX}^{\otimes n}$ and $\sigma_{p_{\cX}^{\otimes n}}\equiv \sigma_{p_\cX}^{\otimes n}$. From \reff{finite}, we directly get
\begin{align*}
	D_H^{\eps'}(\rho_{p_\cX^{\otimes n}}\|\sigma_{p_\cX^{\otimes n}})&\ge nD(\rho_{p_\cX}\|\sigma_{p_\cX})-\sqrt{2n\log\eps'^{-1}}c_{p_\cX},
\end{align*}	
and (i) follows from the fact that $p_\cX\in\Pi(\mathcal{W})$, so that $D(\rho_{p_\cX}\|\sigma_{p_\cX})\equiv \chi^*(\mathcal{W})$. \\\\
Case (ii) can be proved in a very similar way by noticing that in the case when $\mathcal{W}_n$ satisfies the channel upper factorization property \reff{ufpc}, the following states satisfy the upper factorization property: 
\begin{align*}
&\rho_{p_\cX^{\otimes n}}:=\mathbb{W}_n(p_{\cX}^{\otimes n})=\sum_{(x_1,...,x_n)\in\cX^n}p_\cX(x_1)~...~p_\cX(x_n)|x_1~...~x_n\rangle\langle x_1~...~x_n|~\mathcal{W}_n(x_1,...,x_n)\\
&~~~~~~~~~~~~~~~~~~~~~~~~~~~~~~~~~~~~~~~~~~~~~~~~~~~~~~~~~~~~~~~~~~~~~~~~~~~~~~~~~~~~~~~~~~\le R~ \rho_{p_\cX^{\otimes n-1}}\otimes \rho_{p_\cX},
\end{align*}
and 
\begin{align*}
			&\sigma_{p_\cX^{\otimes n}}:=\sum_{(x_1,...,x_n)\in\cX^n}p_{\cX}(x_1)~...~p_{\cX}(x_n)|x_1~...~x_n\rangle\langle x_1~...~x_n|\otimes \mathcal{W}_n(p_\cX^{\otimes n})\le R~ \sigma_{p_\cX^{\otimes n-1}}\otimes \sigma_{p_\cX},
	\end{align*}
	where we took $p_\cX$ to be the distribution such that $\rho_{p_\cX}$ and $\sigma_{p_\cX}$ satisfy \Cref{lll} for $\mathcal{W}\equiv \mathcal{W}_1$. We then obtain the statement of the theorem using \reff{eq35}, \Cref{ahi}, and \reff{lll}.		
	\qed
	\end{proof}
	\medskip
The moderate deviation analysis of the sequences of channels with memory defined above is given by the following proposition.

\begin{proposition}\label{eq29}
	(i) Let $\{\mathcal{W}_n\}_{n\in\NN}$ be a family of c-q channels $\mathcal{W}_n:\cX^{\otimes n}\to \cD(\cH^{\otimes n})$ satisfying the channel upper factorization property \reff{ufpc} with parameter $R$ such that $$1\le R< \exp[{(4-\e^{c_{p_\cX}})(1-\e^{c_{p_\cX}})^2 V_{\min}(\mathcal{W}_1)/(6c_{p_\cX}^2)}],$$ where $c_{p_\cX}:=\|\log\Delta_{\sigma_{p_\cX}|\rho_{p_\cX}}+D(\rho_{p_\cX}\|\sigma_{p_\cX})\id \|_{\infty}<\log 4$, for some probability mass function $p_\cX$ satisfying \Cref{lll}, and $\rho_{p_\cX}$ and $\sigma_{p_\cX}$ are defined as in \Cref{rhosigma}. Moreover, let $\{a_n\}_{n\in\NN}$ be a moderate sequence and $\eps_n=\e^{-na_n^2}$. Then,
	\begin{align}\label{lower}
		C(\mathcal{W}_n,\eps_n)\ge n\chi^*(\mathcal{W}_1)-n\sqrt{2V_{{\operatorname{min}}}(\mathcal{W}_1)}\left(\frac{3}{4-\e^{c_{p_\cX}}}(\log R+a_n^2)\right)^{1/2}+\circ(na_n).
		\end{align}
		(ii) Let $\{\mathcal{W}_n\}_{n\in\NN}$ be a family of c-q channels $\mathcal{W}_n:\cX^{\otimes n}\to \cD(\cH^{\otimes n})$ satisfying the channel lower factorization property \reff{lfpc} with parameter $R>1$. Then:
		\begin{align}\label{upper}
			C(\mathcal{W}_n,\eps_n)\le n\chi^*(\mathcal{W}_1)-\sqrt{2V_{\operatorname{min}}(\mathcal{W}_1)}na_n+\circ(na_n)
		\end{align}
\end{proposition}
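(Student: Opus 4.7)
The plan is to combine one-shot achievability and converse bounds for the $\eps$-error capacity of c-q channels---both of which express $C(\mathcal{W}, \eps)$ in terms of the hypothesis-testing divergence $D_H^{\eps}$---with the moderate deviation estimates of Theorem~\ref{moderate}, mirroring the finite-blocklength strategy of Proposition~\ref{prop} but now in the moderate deviation regime.

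For part (i), I would first fix a probability mass function $p_\cX \in \Pi(\mathcal{W}_1)$ achieving the identities \reff{lll}, so that $D(\rho_{p_\cX} \| \sigma_{p_\cX}) = \chi^*(\mathcal{W}_1)$ and $V(\rho_{p_\cX} \| \sigma_{p_\cX}) = V_{\min}(\mathcal{W}_1)$, with $\rho_{p_\cX}$ and $\sigma_{p_\cX}$ as in \reff{rhosigma}. Exactly as in the proof of Proposition~\ref{prop}(ii), the channel upper factorization \reff{ufpc} with constant $R$ transfers to the homogeneous upper factorization \reff{ufpp} of the product-input sequences $\{\rho_{p_\cX^{\otimes n}}\}_n$ and $\{\sigma_{p_\cX^{\otimes n}}\}_n$ with the same $R$, and the hypothesis on $R$ and on $c_{p_\cX}$ in the statement precisely matches the assumption of the inward bound \reff{inward} of Theorem~\ref{moderate} applied to this pair. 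Theorem~\ref{WR12} with $\eps = \eps_n$ and $\eps' = \eps_n/2$ yields
\begin{align*}
C(\mathcal{W}_n, \eps_n) \ge D_H^{\eps_n/2}(\rho_{p_\cX^{\otimes n}} \| \sigma_{p_\cX^{\otimes n}}) - \log 8,
\end{align*}
and since $\eps_n/2 = \exp(-n \tilde{a}_n^2)$ for a moderate sequence $\tilde{a}_n$ with $\tilde{a}_n = a_n + \circ(a_n)$, feeding this state pair into \reff{inward} and using \reff{lll} to identify the single-letter quantities produces \reff{lower}.

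For part (ii), the starting point would be the meta-converse bound of Proposition~5 in \cite{TT15}, which upper bounds $C(\mathcal{W}_n, \eps_n)$ by a supremum of $D_H^{\eps_n + \delta}$-type divergences over input distributions on $\cX^n$, plus an additive $\log(1/\delta)$ slack. Choosing $\delta = 1/n$ makes $\log(1/\delta) = O(\log n) = \circ(na_n)$ and shifts $\eps_n + \delta$ into $\exp(-n \tilde{a}_n^2)$ for some moderate sequence $\tilde{a}_n = a_n + \circ(a_n)$. For any product input $p_\cX^{\otimes n}$, the channel lower factorization \reff{lfpc} propagates to homogeneous lower factorization of the pair $(\rho_{p_\cX^{\otimes n}}, \sigma_{p_\cX^{\otimes n}})$ with constant $R > 1$; applying the outward bound \reff{out} of Theorem~\ref{moderate} to the pair corresponding to the distinguished $p_\cX$ satisfying \reff{lll} and identifying the leading term as $n\chi^*(\mathcal{W}_1)$ and the second-order term as $\sqrt{2V_{\min}(\mathcal{W}_1)}\,na_n$ then yields the right-hand side of \reff{upper}.

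The main obstacle lies in part (ii): the converse supremum runs over \emph{all}, possibly correlated, input distributions $p_{\cX^n}$ on $\cX^n$, while lower factorization of $\mathcal{W}_n$ only yields a useful state-level lower factorization when $p_{\cX^n}$ is itself a product distribution. To close this gap I would either invoke a single-letterisation argument tailored to channels satisfying factorization (showing, in the spirit of \cite{HMO08b,MO15}, that the optimal sequence of input distributions is essentially product up to corrections of order $\circ(na_n)$), or derive a version of \reff{out} uniform in the non-homogeneous auxiliary sequences $\{\tilde{\rho}_n\}, \{\tilde{\sigma}_n\}$ appearing in \reff{lfp} and then maximise inside. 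Either route reduces the $n$-use converse to the single-letter quantities $\chi^*(\mathcal{W}_1)$ and $V_{\min}(\mathcal{W}_1)$; the delicate point is keeping the single-letterisation error at $\circ(na_n)$ so that the $O(na_n)$ second-order term in \reff{upper} is preserved.
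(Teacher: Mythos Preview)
Your treatment of part (i) matches the paper's: one-shot achievability via Theorem~\ref{WR12}, restriction to the product input $p_\cX^{\otimes n}$ with $p_\cX\in\Pi(\mathcal{W}_1)$, propagation of channel upper factorisation to state upper factorisation, and then the inward bound \reff{inward} of Theorem~\ref{moderate}.

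For part (ii), the paper also starts from Proposition~5 of \cite{TT15}, in the form $C(\mathcal{W}_n,\eps_n)\le\chi_H^{2\eps_n}(\overline{\im(\mathcal{W}_n)})+O(1)$, and you correctly identify that the obstacle is the supremum over all $\rho_n\in\overline{\im(\mathcal{W}_n)}$. However, the paper resolves this obstacle by a route different from either of the two you suggest. It exploits the infimum over the comparison state $\sigma_n$ in $\chi_H$: rather than single-letterising the optimal input, it constructs one explicit state
\[
\sigma_n(\gamma)=\tfrac12[\sigma^*(\overline{\im(\mathcal{W}_1)})]^{\otimes n}+\tfrac{1}{2|\mathcal{S}^{\gamma/4}|}\sum_{\tau\in\mathcal{S}^{\gamma/4}}\tau^{\otimes n},
\]
mixing the $n$-fold product of the divergence centre with the $n$-fold products of all elements of a finite net $\mathcal{S}^{\gamma/4}\subset\cD(\cH)$ (Lemma~\ref{lemmabis}). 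The splitting inequality $D_H^\eps(\rho\|\mu\sigma+(1-\mu)\sigma')\le D_H^\eps(\rho\|\sigma)-\log\mu$ then lets one bound $D_H^{\eps_n}(\rho_n\|\sigma_n(\gamma))$ against whichever tensor power is most convenient for the given $\rho_n$. Each $\rho_n\in\overline{\im(\mathcal{W}_n)}$ carries, via channel lower factorisation, a \emph{non-homogeneous} auxiliary sequence $\{\tilde\rho_k\}_{k=1}^n\subset\overline{\im(\mathcal{W}_1)}$; the paper bipartitions $\NN$ into $H(\gamma)$ and $L(\gamma)$ according to whether $\frac{1}{n}\sum_k D(\tilde\rho_k\|\bar\rho_n)$ is within $\gamma$ of $\chi^*(\mathcal{W}_1)$, and applies the non-homogeneous outward bound \reff{eq27} of Corollary~\ref{moderatenh} (not the homogeneous Theorem~\ref{moderate}) separately in each regime, using the net element in $L(\gamma)$ and the divergence centre in $H(\gamma)$. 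A continuity argument for the constrained minimum variance (Lemma~22 of \cite{TT15}) handles the $H(\gamma)$ case.

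Your second proposed route (``uniform non-homogeneous bound then maximise inside'') is in spirit close to this, but the maximisation step only becomes tractable once $\rho_n$ has been decoupled from the comparison state via the net-and-mixture construction of $\sigma_n(\gamma)$; that construction, together with the $H(\gamma)/L(\gamma)$ bipartition, is the concrete ingredient your proposal is missing. Your first route (single-letterising the optimal input distribution directly) has no evident implementation at $\circ(na_n)$ precision from the references you cite.
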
	
The proof of \Cref{eq29} closely follows that of Propositions 12 and 18 of \cite{CTT17}. The extra ingredient here is the fact that, if $\mathcal{W}_n$ satisfies a channel factorization property, the states in $\overline{\im(\mathcal{W}_n)}$ satisfy a non-homogeneous factorization property, so that one can use \Cref{moderatenh}, together with the one-shot bounds on the classical-quantum capacity to get the result. We included a proof in \Cref{thmm} for sake of completeness.

\section{Summary and discussion}
In this paper we proved upper bounds on type II Stein- and Hoeffding errors, in the context of finite blocklength binary quantum hypothesis testing,
using the framework of martingale concentration inequalities. These inequalities constitute a powerful mathematical tool which has found important applications in various branches of mathematics. We prove that our bounds are tighter than those obtained by Audenaert, Mosonyi and Verstraete in \cite{AMV12}, for a wide range of threshold values of the type I error, which is 
of practical relevance. We then derived finite blocklength bounds, as well as moderate deviation results, for pairs of sequences of correlated states satisfying a certain factorization property.  We applied our results to find bounds on the capacity of an associated class of classical-quantum channels with memory, in the finite blocklength, and moderate deviation regimes, This extends the recent results of Chubb, Tan and Tomamichel \cite{CTT17}, and Cheng and Hsieh \cite{HC17} to the non-i.i.d.~setting.\\\\
We believe that such extensions can be of practical relevance, for the following reasons: In the usual framework of quantum hypothesis testing, the systems that are being tested are demolished by the measurement process. However, in a practical experiment, the experimentalist might want to get some information about the physical system being analyzed without disturbing its state so that it can be used for some other information theoretic task. Although finitely correlated states were originally introduced in \cite{FNW92} in the context of quantum states on spin chains, they also seem to provide the right setup for what we call \textit{quantum non-demolition hypothesis testing} (QNDHT): In this framework, the goal is to determine the state of a quantum system, given the knowledge that it is one of two specific states $\rho,\sigma\in\cB(\mathcal{K})$, for a given finite-dimensional Hilbert space $\mathcal{K}$, without demolishing the system. A seemingly natural way to proceed is to prepare two sequences of finitely correlated states $\bar{\rho}_n,\bar{\sigma}_n\in\cD(\cH^{\otimes n})$, where $\cH$ is the Hilbert space of another system that can be interpreted as a probe, as follows:
\begin{align}
&	\bar{\rho}_n:=\tr_{\mathcal{K}} (\id_{\cA}^{n-1}\otimes \cE_*)\circ\dots \circ (\id_\cA\otimes \cE_*)\circ \cE_* (\rho),\\
	&\bar{\sigma}_n:=\tr_{\mathcal{K}} (\id_{\cA}^{n-1}\otimes \cE_*)\circ\dots \circ (\id_\cA\otimes \cE_*)\circ \cE_* (\sigma),
	\end{align}
where $\cE_*$ is the adjoint of a CPU map $\cE:\cB(\cH)\otimes \cB(\mathcal{K})\to \cB(\mathcal{K})$, to be specified later, encoding the interaction of the original system with the probes, such that
\begin{align*}
	\tr_\cH(\cE_*(\rho))=\rho~~\text{ and }~~	\tr_\cH(\cE_*(\sigma))=\sigma.
	\end{align*}
	This last condition precisely means that the original system  (with Hilbert space $\mathcal{K}$) should remain intact no matter what local operation (measurement) is done on the probes (whose Hilbert space is $\cH^{\otimes n}$). Note, however, that from \Cref{moderate}, we infer that the optimal Stein exponent in the quantum hypothesis problem with null hypotheses $\{\bar{\rho}_n\}_{n\in\NN}$ and alternative hypotheses $\{\bar{\sigma}_n\}_{n\in\NN}$ is given by 
\begin{align*}
	D(\bar{\rho}_1\|\bar{\sigma}_1)\le D(\rho\|\sigma),
	\end{align*}
where the last inequality follows from the data processing inequality. Intuitively, this means that the optimal error of type II made by measuring the probes is asymptotically larger than the error that one would make by performing a direct measurement on $n$ copies of the original system. A similar explanation holds for any fixed $n$, as the $\eps$-hypothesis testing relative entropy $D_H^\eps$ also satisfies a data processing inequality. An example of a map $\cE_*$ implementing the conditions described above can be described as follows: suppose without loss of generality that the system with Hilbert space $\mathcal{K}$ is in the state $\rho$. Then, at each step, make it interact with a probe $\cH$, which is initially in the state $\omega\in\cD(\cH)$:
\begin{align*}
	\rho\mapsto U (\rho\otimes \omega)U^*,
	\end{align*}
where $U$ is a unitary operator on $\mathcal{K}\otimes \cH$. In order to optimize the Stein exponent, we then consider the following optimization problem:\\\\ maximize
\begin{align*}
	D(\bar{\rho}_1\|\bar{\sigma}_1),
	\end{align*}
	 subject to \[\bar{\rho}_1=\tr_\mathcal{K}(U(\rho\otimes \omega)U^*),  ~~~~\bar{\sigma}_1=\tr_\mathcal{K}(U(\sigma\otimes \omega)U^*),\]
	  the optimization being carried over all states $\omega$ on $\cH$ and unitaries $U$ over $\cH\otimes \mathcal{K}$ satisfying
\begin{align*}
	\rho=\tr_\cH (U(\rho\otimes \omega)U^*),~~~~	\sigma=\tr_\cH( U(\sigma\otimes \omega)U^*).
	\end{align*}
	Secondly, c-q channels satisfying the channel factorization properties could potentially lead to new ways of efficiently implementing quantum communication channels. Indeed, Kastoryano and Brandao \cite{KB16} recently showed that, under some technical assumptions, Gibbs states of spin systems can be efficiently prepared by means of a dissipative process. Moreover, as discussed in \Cref{corr}, Gibbs states of translation-invariant finite-range interactions on quantum spin chains were shown to satisfy both lower and upper homogeneous factorization properties for $R>1$ \cite{HMO07}. We conjecture that by lifting the assumption of translation-invariance, one should obtain Gibbs states which satisfy both upper and lower non-homogeneous factorization properties. If this is indeed the case, the result of \cite{KB16} would provide an efficient way of implementing a c-q channel satisfying both lower and upper channel factorization properties whose capacity would be comparable (at least to leading order) to the one of memoryless c-q- channels (cf. \Cref{channelcoding}). The advantage of such a physical implementation comes from the robustness of such a dissipative preparation, in comparison with the difficulty of ensuring states to remain uncorrelated over a long period of time.

\paragraph{Acknowledgements}
The authors would like to thank Eric Hanson, Milan Mosonyi and Yan Pautrat for useful discussions. C.R.~is also grateful to Federico Pasqualotto
for helpful exchanges.

\bibliography{biblio}

\appendix

\section{A reverse Markov inequality}\label{reversemarkov}
\begin{lemma}[Reverse Markov inequality]
	Let $X$ be a strictly positive random variable, such that $1/X$ is integrable. Then, for any $x>0$,
	\begin{align*}
		\mathbb{P}(X>x)\ge \mathbb{E}\left[\frac{X-x}{X}\right].
		\end{align*}
	\begin{proof}
		For any decreasing bounded positive function $u$ such that $u^{-1}$ is also bounded,
		\begin{align*}
			\mathbb{P}(X>x)&=1-\PP(X\le x)\\
			&=1-\PP(u(X)\ge u(x))\\
			&\ge 1-\frac{\mathbb{E}[u(X)]}{u(x)},
			\end{align*}
			where we used Markov's inequality in the last line. Taking $u(x)=(1+tx)^{-1}$ for any given $t>0$,
			\begin{align*}
				\PP(X>x)&\ge 1-\EE[(1+tX)^{-1}](1+tx)=\mathbb{E}\left[\frac{t(X-x)}{1+tX}\right]
				=t\mathbb{E}\left[\frac{X-x}{1+tX}\right]=\EE\left[\frac{X-x}{1/t+X}\right].
				\end{align*}
				The result follows by monotone convergence theorem, taking the limit $t\to\infty$.
				\qed
	\end{proof}
\end{lemma}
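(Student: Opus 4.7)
The plan is to exploit the fact that the random variable $(X-x)/X = 1 - x/X$ is bounded above by $1$ on the event $\{X > x\}$ and is non-positive on the event $\{X \leq x\}$. I would first decompose the expectation on the right-hand side as
\begin{align*}
\mathbb{E}\left[\frac{X-x}{X}\right] = \mathbb{E}\left[\frac{X-x}{X}\,\mathbf{1}_{\{X > x\}}\right] + \mathbb{E}\left[\frac{X-x}{X}\,\mathbf{1}_{\{X \leq x\}}\right].
\end{align*}
Since $X$ is strictly positive, the ratio $(X-x)/X$ is well-defined almost surely. On $\{X > x\}$ we have $0 < (X-x)/X < 1$, so the first term is at most $\mathbb{P}(X > x)$. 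On $\{X \leq x\}$ we have $(X-x)/X \leq 0$, so the second term is non-positive, and the inequality follows.

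Equivalently, and more in the spirit of the ``reverse Markov'' title, one can rewrite the claim as $\mathbb{P}(X \leq x) \leq x\,\mathbb{E}[1/X]$, which is an immediate application of Markov's inequality to the non-negative integrable random variable $1/X$:
\begin{align*}
\mathbb{P}(X \leq x) = \mathbb{P}\!\left(\frac{1}{X} \geq \frac{1}{x}\right) \leq x\,\mathbb{E}\!\left[\frac{1}{X}\right],
\end{align*}
using the hypothesis that $1/X$ is integrable to ensure the right-hand side is finite; the identity $\mathbb{E}[(X-x)/X] = 1 - x\,\mathbb{E}[1/X]$ then rearranges this into the stated bound.

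I do not anticipate any serious obstacle here. The only subtlety is that we are not assuming $X$ is bounded away from $0$, so $1/X$ may be unbounded; this is precisely why the hypothesis of integrability of $1/X$ is imposed, and it is exactly what Markov's inequality needs. If one prefers to avoid invoking integrability directly, one can instead apply Markov's inequality to the bounded function $u(X) = (1+tX)^{-1}$ for $t > 0$, obtaining $\mathbb{P}(X > x) \geq \mathbb{E}[(X-x)/(1/t + X)]$, and then conclude by monotone convergence as $t \to \infty$; this is essentially the route taken in the excerpt.
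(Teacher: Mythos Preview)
Your proof is correct, and both of your arguments are in fact more direct than the paper's. The paper applies Markov's inequality to the bounded surrogate $u_t(X)=(1+tX)^{-1}$, obtains $\mathbb{P}(X>x)\ge \mathbb{E}\bigl[(X-x)/(1/t+X)\bigr]$, and then passes to the limit $t\to\infty$ via monotone convergence. Your second argument shows this detour is unnecessary: since $1/X$ is assumed integrable, Markov's inequality applies directly to $1/X$, giving $\mathbb{P}(X\le x)=\mathbb{P}(1/X\ge 1/x)\le x\,\mathbb{E}[1/X]$ in one step, which is exactly the paper's bound at ``$t=\infty$''. Your first argument (bounding $(X-x)/X$ by $\mathbf{1}_{\{X>x\}}$ pointwise) is even more elementary and avoids Markov altogether. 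The only thing the paper's truncation buys is that each $u_t$ is bounded, so no integrability hypothesis is needed until the very last step; but since that hypothesis is part of the statement anyway, your shortcuts are fully justified.
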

\section{Proof of \Cref{nhcsupp}}\label{nhcsupp1}

The following lemma, originally proved in \cite{HMO07}, plays the key role in the proof of the factorization property of
non-homogeneous finitely correlated states.
\begin{lemma}[see Lemma 4.3 of \cite{HMO07}]\label{lemma1}
 Let $\cB\subset \cB(\mathcal{K})$ be a finite-dimensional $C^*$-subalgebra of $\cB(\mathcal{K})$ for some finite-dimensional Hilbert space $\mathcal{K}$. Let $\rho$ be a faithful state on $\cB$ and $\Phi:\cB\to\cB$ be the completely positive unital map $b\mapsto \tr(\rho b)\mathbb{I}_\cB$. Then there exists a constant $R>1$ such that $R\Phi-\id_\cB$ is completely positive. 
 \end{lemma}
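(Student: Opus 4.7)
The plan is to reduce complete positivity of $R\Phi - \id_\cB$ to a block-by-block criterion via the structure theorem for finite-dimensional $C^*$-algebras, and then close each block by a direct Choi-matrix calculation.

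I would begin by writing $\cB \cong \bigoplus_k M_{d_k}$, noting that the embedding into $\cB(\mathcal{K})$ carries multiplicities $m_k \in \NN$: decompose $\mathcal{K} = \bigoplus_k \CC^{d_k}\otimes \CC^{m_k}$, with $b = \bigoplus_k b_k \in \cB$ acting as $\bigoplus_k b_k \otimes I_{m_k}$. Accordingly, $\tr b = \sum_k m_k \tr b_k$, and $\rho = \bigoplus_k \rho_k$ with $\rho_k > 0$ by faithfulness; unitality of $\Phi$ reads $\sum_k m_k \tr(\rho_k) = 1$. For $X = \bigoplus_k X_k \in \cB \otimes M_n$ a direct computation gives
\begin{align*}
(\Phi \otimes \id_n)(X) \;=\; \mathbb{I}_\cB \otimes \sum_k m_k\,(\tr_{d_k}\otimes\id_n)\bigl((\rho_k\otimes I_n) X_k\bigr).
\end{align*}

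The first key step is the reduction to a single block: complete positivity of $R\Phi - \id_\cB$ is equivalent to that of each map $\Psi_k\colon M_{d_k}\to M_{d_k}$ defined by $\Psi_k(b) := R m_k\tr(\rho_k b)\,I_{d_k} - b$. Necessity follows by choosing $X_l = 0$ for $l\neq k$; sufficiency is automatic, because the off-block contributions $R I_{d_k} \otimes m_l(\tr_{d_l}\otimes\id_n)((\rho_l \otimes I_n) X_l)$ for $l\neq k$ are nonnegative (tensor products of CP maps are CP) and so can only reinforce positivity in the $k$-th block of the image.

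For each $k$ I then compute the Choi matrix of $\Psi_k$: with $\ket{\Omega_k}=\sum_i \ket{i}\ket{i}\in\CC^{d_k}\otimes\CC^{d_k}$, it equals $R m_k(\rho_k^T\otimes I) - \ket{\Omega_k}\bra{\Omega_k}$. Conjugating by the invertible operator $(\rho_k^T)^{-1/2}\otimes I$ (well-defined since $\rho_k > 0$) reduces positivity of this rank-one perturbation of a positive operator to the scalar inequality
\begin{align*}
R m_k \;\geq\; \bra{\Omega_k}(\rho_k^T)^{-1}\otimes I\ket{\Omega_k} \;=\; \tr(\rho_k^{-1}).
\end{align*}
Hence the choice $R := \max_k \tr(\rho_k^{-1})/m_k$ makes $R\Phi - \id_\cB$ completely positive.

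It remains to check $R > 1$. Cauchy--Schwarz yields $d_k^2 = \tr(I_{d_k})^2 \leq \tr(\rho_k)\tr(\rho_k^{-1})$, so that $\tr(\rho_k^{-1})/m_k \geq d_k^2/(m_k\tr(\rho_k))$. Faithfulness forces $\rho_l > 0$ in every block, so unitality gives $m_k \tr(\rho_k) < 1$ strictly unless $\cB$ consists of a single block of dimension $d_k = 1$ (the case $\cB \cong \CC$, for which $\Phi = \id_\cB$ and the conclusion is vacuous). In every other case one obtains $R > d_k^2 \geq 1$. The main delicacy in the argument is the bookkeeping of the multiplicities $m_k$ that link the abstract block decomposition of $\cB$ to the trace functional inherited from $\cB(\mathcal{K})$; once this is in place, the remainder is the standard Choi-matrix calculation sketched above.
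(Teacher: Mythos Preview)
The paper does not supply its own proof of this lemma; it simply quotes it as Lemma~4.3 of \cite{HMO07} and uses it in the proof of \Cref{nhcsupp}. So there is no in-paper argument to compare against. Your approach---block decomposition of $\cB$, reduction to single-factor maps $\Psi_k$, and a Choi-matrix computation yielding the sharp threshold $R=\max_k \tr(\rho_k^{-1})/m_k$---is correct and is essentially the standard route.

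One small imprecision in your final paragraph: the claim ``$m_k\tr(\rho_k)<1$ strictly unless $\cB\cong\CC$'' is not quite right. If $\cB$ has a \emph{single} block of dimension $d_1\ge 2$, then $m_1\tr(\rho_1)=1$ exactly, so your chain only gives $R\ge d_1^2$, not $R>d_1^2$. The conclusion $R>1$ is still fine in that case because $d_1^2\ge 4$, but the strict inequality comes from $d_1\ge 2$, not from $m_1\tr(\rho_1)<1$. Also, in the degenerate case $\cB\cong\CC$ the conclusion is not ``vacuous'': one has $\Phi=\id_\cB$, so $R\Phi-\id_\cB=(R-1)\id_\cB$ is completely positive for any $R>1$, and the lemma holds trivially (your formula gives $R=1$, but any larger value works). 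These are cosmetic fixes; the substance of your argument is sound.
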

 
 \begin{proof}[\Cref{nhcsupp}]
 	Using the notations of \Cref{corr}, let $\{\rho_n\}_{n\in\NN}$ be a family of non-homogeneous finitely correlated states with generating triple $(\cB,\{\cE_n\}_{n\in\NN},\rho)$. By \Cref{lemma1}, there exists $R>1$ such that $R~ \id_\cA^{\otimes n}\otimes (\cE_{n*}\circ\Phi_*)-\id_\cA^{\otimes n}\otimes \mathcal{E}_{n*}$ is positive for any $n\in\NN$, where $\Phi_*:a\mapsto \rho~\tr(a)$. Hence
 	\begin{align*}
 		\rho_n:=\tr_\cB \tau_n&\le R \tr_\cB[\id^{\otimes n-1}_\cA\otimes(\cE_{n*}\circ\Phi_*)(\tau_{n-1})]\\
 		&= R \tr_\cB[(\id_\cA^{\otimes n-1}\otimes \cE_{n*})(\tr_\cB(\tau_{n-1})\otimes \rho)]\\
 		&=R ~\rho_{n-1}\otimes \tr_\cB(\cE_{n*}(\rho)) \\
 		&=R  ~\rho_{n-1}\otimes \tilde{\rho}_n,
 		\end{align*}
 	where $\tilde{\rho}_n$ is defined as $\tr_\cB \cE_{n*}(\rho)$.
 	\qed
 	\end{proof}

 	\section{Proof of \Cref{eq29}}\label{thmm}
 
 For classical-quantum channels satisfying a channel factorization property, the expressions found in \Cref{moderatenh} can be used to get asymptotic behaviors for the capacity in the moderate deviation regime. The proof of \reff{upper} is more technical and follows the idea of \cite{CTT17} (see also \cite{TT15}). In order to prove it, we introduce the following geometric quantities: following \cite{TT15}, for some subset of states $\mathcal{S}\subseteq \cD(\cH)$, the \textit{divergence radius} $\chi(\mathcal{S})$ and \textit{divergence centre} $\sigma^*(\mathcal{S})$ are defined as
 \begin{align*}
 	\chi(\mathcal{S}):=\inf_{\sigma\in\cD(\cH)}\sup_{\rho\in\mathcal{S}}D(\rho\|\sigma),~~~~~~~\sigma^*(\mathcal{S}):=\argmin_{\sigma\in\cD(\cH)}\sup_{\rho\in\mathcal{S}}D(\rho\|\sigma).
 \end{align*}
 Similarly, the $\eps$-\textit{hypothesis testing divergence radius} $\chi_H^\eps(\mathcal{S})$ is defined as
 \begin{align*}
 	\chi^\eps_H(\mathcal{S}):=\inf_{\sigma \in\cD(\cH)}\sup_{\rho\in\mathcal{S}} D_H^\eps(\rho\|\sigma),
 \end{align*}
 The $\eps$-hypothesis testing divergence radius provides an upper bound to the one-shot capacity:
 \begin{theorem}[see Proposition 5 of \cite{TT15}]
 	The $\eps$-error one shot capacity, for $0<\eps<1/2$, is upper bounded as follows:
 	\begin{align}\label{u}
 		C(\mathcal{W},\eps)\le \chi_H^{2\eps}(\overline{\im(\mathcal{W})})+\log\frac{2}{(1-2\eps)(1-\eps)}.
 	\end{align}
 \end{theorem}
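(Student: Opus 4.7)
The plan is to prove the upper bound via a meta-converse argument that reduces the channel coding problem to binary hypothesis testing, combined with a Markov-type pruning step that converts the \emph{average} error condition on a code into a \emph{worst-case} error condition over a large sub-code. The overarching idea is that the decoding POVM, restricted to this ``good'' sub-code, furnishes a family of hypothesis tests, each distinguishing some $\mathcal{W}(x_k)$ from a common reference state $\sigma$ with small type I error; the POVM normalization $\sum_k T_k \le \mathbb{I}$ then forces an upper bound on the number of codewords $M$.

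First, I would fix an optimal code attaining $M = M^*(\mathcal{W},\eps)$ in \Cref{Mstar}, with codewords $\{x_k\}_{k=1}^M$ and decoding POVM $\{T_k\}_{k=1}^M$ satisfying $\frac{1}{M}\sum_k \tr(\mathcal{W}(x_k) T_k) \ge 1-\eps$. Writing $e_k := 1 - \tr(\mathcal{W}(x_k) T_k)$, Markov's inequality on the uniform measure over $\{1,\ldots,M\}$ yields
\begin{align*}
\bigl|\{k : e_k \ge 2\eps\}\bigr|\cdot 2\eps \,\le\, \sum_{k=1}^M e_k \,\le\, M\eps,
\end{align*}
so the ``good'' sub-code $S := \{k : e_k < 2\eps\}$ satisfies $|S| \ge M/2$. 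Using compactness of $\cD(\cH)$ and continuity of $\sigma \mapsto D_H^{2\eps}(\rho\|\sigma)$, I would fix a state $\sigma^*$ (or a $\delta$-approximant, for arbitrary $\delta > 0$) attaining the infimum in $\chi_H^{2\eps}(\overline{\im(\mathcal{W})}) = \inf_{\sigma} \sup_{\rho \in \overline{\im(\mathcal{W})}} D_H^{2\eps}(\rho\|\sigma)$. For every $k \in S$, since $\tr((\mathbb{I}-T_k)\mathcal{W}(x_k)) = e_k < 2\eps$, the operator $T_k$ is feasible in the optimization defining $\beta_{2\eps}(\mathcal{W}(x_k),\sigma^*)$, and by the definition of $D_H^{2\eps}$ in \reff{DH},
\begin{align*}
\tr(T_k\,\sigma^*) \,\ge\, \e^{-D_H^{2\eps}(\mathcal{W}(x_k)\|\sigma^*)} \,\ge\, \e^{-\chi_H^{2\eps}(\overline{\im(\mathcal{W})})},
\end{align*}
using $\mathcal{W}(x_k) \in \im(\mathcal{W}) \subseteq \overline{\im(\mathcal{W})}$ in the second step.

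Summing this bound over $k \in S$ and invoking $\sum_k T_k \le \mathbb{I}$ and $\tr(\sigma^*) = 1$,
\begin{align*}
1 \,\ge\, \tr\Bigl(\sigma^* \sum_{k=1}^M T_k\Bigr) \,\ge\, \sum_{k\in S}\tr(T_k\,\sigma^*) \,\ge\, \frac{M}{2}\,\e^{-\chi_H^{2\eps}(\overline{\im(\mathcal{W})})},
\end{align*}
so that $\log M \le \log 2 + \chi_H^{2\eps}(\overline{\im(\mathcal{W})})$. Since $(1-2\eps)(1-\eps)\le 1$ for $\eps \in (0,1/2)$, we have $\log 2 \le \log\frac{2}{(1-2\eps)(1-\eps)}$, and hence \reff{u} follows (with a slight strengthening of the constant).

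The main obstacles are mild: (i) justifying the attainment (or $\delta$-approximation) of the divergence centre $\sigma^*$ for $\chi_H^{2\eps}$, which follows from compactness of $\cD(\cH)$ and continuity of the hypothesis-testing divergence in $\sigma$; and (ii) aligning the strict inequality $e_k < 2\eps$ on the good sub-code with the (non-strict) feasibility constraint $\tr((\mathbb{I}-T)\rho) \le 2\eps$ in the definition of $\beta_{2\eps}$. If one insists on reproducing the exact constant $\log\frac{2}{(1-2\eps)(1-\eps)}$ of \reff{u} (as in the argument of \cite{TT15}), a more delicate one-shot meta-converse on the code state $\rho_{MB} := \frac{1}{M}\sum_k |k\rangle\langle k|\otimes\mathcal{W}(x_k)$ paired with the reference $\pi_M \otimes \sigma^*$ is needed, where the factors $(1-2\eps)^{-1}$ and $(1-\eps)^{-1}$ emerge from converting $D_H^{\eps}(\rho_{MB}\|\pi_M \otimes \sigma^*)$ into the pointwise supremum $\sup_x D_H^{2\eps}(\mathcal{W}(x)\|\sigma^*)$ via a smoothing/union-bound step; this is the genuine technical step of the TT15 proof, which my Markov-based argument sidesteps at the cost of a bounded multiplicative slack.
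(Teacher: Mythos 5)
Your proof is correct, and it is a genuinely different route from the one in \cite{TT15} that the paper cites. The \cite{TT15} argument (as you anticipate in your closing paragraph) passes through the meta-converse on the joint code state $\rho_{MB}=\frac1M\sum_k|k\rangle\langle k|\otimes\mathcal W(x_k)$ versus $\pi_M\otimes\sigma$, obtaining $\log M\le D_H^\eps(\rho_{MB}\|\pi_M\otimes\sigma)$ and then paying the $(1-2\eps)^{-1}(1-\eps)^{-1}$ factor to convert the joint hypothesis-testing divergence into the pointwise supremum $\sup_x D_H^{2\eps}(\mathcal W(x)\|\sigma)$. Your argument replaces that conversion step with a classical expurgation: Markov's inequality guarantees a sub-code $S$ with $|S|\ge M/2$ and per-codeword error $<2\eps$, each $T_k$ for $k\in S$ is then a feasible test in the definition of $\beta_{2\eps}(\mathcal W(x_k),\sigma)$, and the POVM normalization $\sum_k T_k\le\mathbb I$ bounds $M$. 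This is shorter, avoids the meta-converse machinery entirely, and in fact yields the sharper constant $\log 2$ in place of $\log\frac{2}{(1-2\eps)(1-\eps)}$; since $(1-2\eps)(1-\eps)\le1$ on $(0,1/2)$ your bound implies \reff{u}. Two small remarks. First, the ``obstacle'' you flag about strict versus non-strict inequality is not one: $e_k<2\eps$ implies $e_k\le 2\eps$, so $T_k$ is automatically feasible in the optimization defining $\beta_{2\eps}$, with no alignment needed. Second, the $\delta$-approximation of the infimum over $\sigma$ is handled exactly as you say and costs nothing in the limit $\delta\to0$; it is worth noting explicitly that because the final bound only involves $\sup_{\rho\in\overline{\im(\mathcal W)}}$, the per-codeword bound $D_H^{2\eps}(\mathcal W(x_k)\|\sigma)\le\sup_{\rho}D_H^{2\eps}(\rho\|\sigma)$ is all that is needed, which is precisely why expurgation suffices here and the more delicate joint-state argument of \cite{TT15} can be sidestepped.
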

 We will also need the following lemma from \cite{TT15}:
 \begin{lemma}[Lemma 18 of \cite{TT15}]\label{lemmabis}
 	For every $\delta\in(0,1)$, there exists a finite set $\mathcal{S}^\delta\subset \cD(\cH)$ of cardinality
 	\begin{align*}
 		|\mathcal{S}^\delta|\le \left(\frac{90 d}{\delta^2}\right)^{2d^2},
 	\end{align*}	
 	where $d\equiv \dim(\cH)$, such that for every $\rho\in\cD(\cH)$, there exists a state $\tau\in\mathcal{S}^\delta$ such that
 	\begin{align*}
 		D(\rho\|\tau)\le\delta,~~~~~~~~~~~~ \lambda_{\operatorname{min}}(\tau)\ge \frac{\delta}{25d^2},
 	\end{align*}
 	where $\lambda_{\min}(\tau)$ stands for the minimum eigenvalue of the state $\tau$.
 \end{lemma}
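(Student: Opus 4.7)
The plan is to build $\mathcal{S}^\delta$ in two stages: first a trace-norm net on $\cD(\cH)$, and then a uniform perturbation by the maximally mixed state to enforce the eigenvalue lower bound. For the first stage, I would exploit the fact that $\cD(\cH)$ is a compact convex subset of the real vector space of Hermitian $d\times d$ matrices (real dimension $d^2$), of trace-norm diameter at most $2$, and invoke the standard volumetric packing bound to produce a $\delta'$-net $\mathcal{N}$ in trace distance of cardinality at most $(c_0/\delta')^{2d^2}$ for some absolute constant $c_0$; the slightly generous exponent $2d^2$ comfortably accommodates the usual looseness in passing between Hilbert--Schmidt and trace norms and absorbs the logarithmic volume constants.

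Next I would set $p := \delta/(25d)$ and define
\begin{align*}
\mathcal{S}^\delta := \left\{(1-p)\tau + \frac{p}{d}\mathbb{I}\,:\,\tau\in\mathcal{N}\right\},
\end{align*}
so that every element $\tilde\tau$ of $\mathcal{S}^\delta$ automatically satisfies $\lambda_{\operatorname{min}}(\tilde\tau)\ge p/d = \delta/(25 d^2)$, matching the minimum-eigenvalue requirement. The cardinality assertion then reduces to choosing $\delta' = \delta/(25d)$ and bookkeeping constants inside the exponent of $(c_0/\delta')^{2d^2}$ to land at $(90 d/\delta^2)^{2d^2}$.

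The crux of the argument is the relative-entropy estimate. Given $\rho\in\cD(\cH)$, I would pick the closest net point $\tau_0\in\mathcal{N}$ so that $\|\rho-\tau_0\|_1\le\delta'$, let $\tilde\tau_0$ be its perturbation, and observe that $\|\rho-\tilde\tau_0\|_1\le\delta'+2p$. Writing
\begin{align*}
D(\rho\|\tilde\tau_0) = \bigl(-\tr\rho\log\rho + \tr\tilde\tau_0\log\tilde\tau_0\bigr) + \tr\bigl((\tilde\tau_0-\rho)\log\tilde\tau_0\bigr),
\end{align*}
the first bracket is an entropy difference controlled by the Fannes--Audenaert continuity inequality (of order $(\delta'+2p)\log d$ plus a binary-entropy remainder), while the second summand is bounded in absolute value by $\|\rho-\tilde\tau_0\|_1\cdot\|\log\tilde\tau_0\|_\infty\le(\delta'+2p)\log(d/p)$ via H\"older's inequality combined with the operator lower bound $\tilde\tau_0\ge(p/d)\mathbb{I}$.

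The main obstacle is the simultaneous calibration of $\delta'$ and $p$: the smaller $p$, the worse the $\log(d/p)$ blow-up in the second summand; the larger $p$, the greater the trace distance $\|\rho-\tilde\tau_0\|_1$. The symmetric choice $\delta'=p=\delta/(25d)$ balances these competing effects and produces a combined bound of order $(\delta/d)\log(d^2/\delta)$, which sits below $\delta$ once the explicit prefactor $25$ absorbs the Fannes constant, the logarithmic blow-up, and the binary-entropy remainder. A routine numerical verification of these constants, together with the cardinality count, completes the proof of \Cref{lemmabis}.
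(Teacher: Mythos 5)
The paper does not actually prove this lemma; it is quoted verbatim from Tomamichel and Tan \cite{TT15} (their Lemma 18), so there is no in-paper proof to compare against. Assessing your argument on its own terms, the high-level plan (trace-norm net, then mix with $\mathbb{I}/d$ to force the eigenvalue floor) is reasonable, but the calibration step does not close, and this is structural rather than a matter of tuning constants.

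With $\delta' = p = \delta/(25d)$ you get $\varepsilon := \|\rho - \tilde\tau_0\|_1 \le 3\delta/(25d)$, $\lambda_{\min}(\tilde\tau_0) \ge p/d$, hence $\|\log\tilde\tau_0\|_\infty \le \log(25d^2/\delta)$, and your cross term alone already contributes roughly $\frac{3\delta}{25d}\log\frac{25d^2}{\delta}$. The ratio of this to $\delta$ is $\frac{3}{25d}\log\frac{25d^2}{\delta}$, which for any fixed $d$ diverges as $\delta\to 0$ because of the $\log(1/\delta)$ factor; the binary-entropy remainder $h_2(\varepsilon/2) \approx (\varepsilon/2)\log(2/\varepsilon)$ has the same $\delta\log(1/\delta)$ blow-up. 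Already for $d=2$ the total exceeds $\delta$ once $\delta \lesssim 10^{-3}$, and no absolute prefactor (the $25$, the $90$, or anything else) fixes a bound of order $(\delta/d)\log(d^2/\delta)$ so that it lies below $\delta$ for \emph{all} $\delta\in(0,1)$. Note also that the key factor that $d$ can absorb in the cross term \emph{cannot} be reduced by refining the net, since $\varepsilon \ge 2p - \delta'$ and $p$ is pinned to $\delta/(25d)$ by the eigenvalue requirement. (There is also a sign slip in your decomposition — the entropy difference should read $S(\tilde\tau_0) - S(\rho)$, not $S(\rho) - S(\tilde\tau_0)$ — but that is harmless once you take absolute values.)

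A second, related symptom: your net has cardinality $(c_0/\delta')^{2d^2} = (25 c_0 d / \delta)^{2d^2}$, which scales as $\delta^{-1}$ per dimension, not as $\delta^{-2}$ as in the lemma. That the bound $(90d/\delta^2)^{2d^2}$ is \emph{numerically} larger for $\delta<1$ does not mean your bookkeeping has recovered it; the $\delta^2$ in the denominator is a signal that the net actually needs mesh $\sim\delta^2/\mathrm{poly}(d)$, which is what a correct argument requires. Concretely, a route that does close is to compare against the perturbed input rather than Fannes: set $\tilde\rho := (1-p)\rho + (p/d)\mathbb{I}$ so that $\rho \le (1-p)^{-1}\tilde\rho$ gives $D(\rho\|\tilde\rho) \le -\log(1-p) \le 2p$, then write $D(\rho\|\tilde\tau_0) = D(\rho\|\tilde\rho) + \tr\rho(\log\tilde\rho - \log\tilde\tau_0) \le 2p + \|\log\tilde\rho - \log\tilde\tau_0\|_\infty$, and control the last term by the operator-Lipschitz estimate $\|\log A - \log B\|_\infty \le \mu^{-1}\|A - B\|_\infty$ valid for $A, B \ge \mu\mathbb{I}$ (here $\mu = p/d$). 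This forces a net of mesh $\delta_0 \sim \mu\delta \sim \delta^2/d^2$, which both makes the $\delta$-budget close for every $\delta\in(0,1)$ and produces the $\delta^{-2}$ scaling in the cardinality.
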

 
 	\begin{proof}[\Cref{eq29}]
 		
 		The proof of (i) follows very closely the one of \Cref{prop} (ii), the only difference being the application of \reff{inward} instead of \reff{eq17} to bound the optimal error of type II.\\\\
 		We now prove (ii): by \reff{u}, 
 		\begin{align}\label{R}
 			C(\mathcal{W}_n,\eps_n)\le \chi_H^{2\eps_n}(\overline{\im(\mathcal{W}_n)})+\log\frac{2}{(1-2\eps_n)(1-\eps_n)}
 		\end{align}
 		$\overline{\im(\mathcal{W}_n)}$ consists of states satisfying a non-homogeneous lower factorization property with parameter $R>1$. Let $\rho_n\in\overline{\im(\mathcal{W}_n)}$, with associated auxiliary sequence $\{\tilde{\rho}_k\}_{k=1}^n$. Define \[\bar{\rho}_n:=\frac{1}{n}\sum_{k=1}^n\tilde{\rho}_k.\] For $\gamma$ a constant to be chosen later, define 
 		\begin{align*}
 			&	H(\gamma):=\left\{n:~ \frac{1}{n}\sum_{i=1}^n D(\tilde{\rho}_i\| \bar{\rho}_n)\ge \chi^*(\mathcal{W}_1)-\gamma\right\}, \\
 			&	 L(\gamma):=\left\{ n:~ \frac{1}{n}\sum_{i=1}^n D(\tilde{\rho}_i\|\bar{\rho}_n)<\chi^*(\mathcal{W}_1)-\gamma\right\},
 		\end{align*}
 		such that $H(\gamma)$ and $L(\gamma)$ bipartition $\NN$ for all $\gamma$. Let us define
 		\begin{align}\label{eq37}
 			\sigma_n(\gamma):=\frac{1}{2}[\sigma^*(\overline{\im(\mathcal{W}_1)})]^{\otimes n}+\frac{1}{2|\mathcal{S}^{\gamma/4}|}\sum_{\tau\in\mathcal{S}^{\gamma/4}}\tau^{\otimes n},
 		\end{align}
 		where $\mathcal{S}^{\gamma/4}$ is defined through \Cref{lemmabis}, with $\delta\equiv \gamma/4$. 
 		The idea of the proof is then to bound the divergences with respect to $\sigma_n(\gamma)$ by those with respect to either $\sigma^*(\overline{\im(\mathcal{W}_1)})$ or any element of $\mathcal{S}^{\gamma/4}$, using the following inequality:
 		\begin{align}\label{split}
 			D_H^\eps(\rho\|\mu\sigma+(1-\mu)\sigma')\le 	D_H^\eps(\rho\|\sigma)-\log\mu.
 		\end{align}	
 		Let us first assume that $n\in L(\gamma)$, and define $\tau_n$ to be the closest element in $\mathcal{S}^{\gamma/4}$ to $\overline{\rho}_n$, so that $D(\bar{\rho}_n\|\tau_n)\le \gamma/4$. Using \Cref{eq37}  we extract the term involving $\tau_n^{\otimes n}$ to obtain
 		\begin{align*}
 			D_H^{\eps_n}(\rho_n\|\sigma_n(\gamma))&\le D_H^{\eps_n}(\rho_n\|\tau_n^{\otimes n})-\log \frac{1}{2|\mathcal{S}^{\gamma/4}|}.
 		\end{align*}
 		Hence, for $n$ large enough, 
 		\begin{align}\label{eq32}
 			\frac{1}{n}	D_H^{\eps_n}(\rho_n\|\sigma_n(\gamma))\le\frac{1}{n} D_H^{\eps_n}(\rho_n\|\tau_n^{\otimes n})+\gamma/4.
 		\end{align}
 		Now applying \reff{eq27} we get that, for $n$ large enough,
 		\begin{align}\label{eq31}
 			\frac{1}{n}D_H^{\eps_n}(\rho_n\|\tau_n^{\otimes n})\le \frac{1}{n} \sum_{i=1}^n D(\tilde{\rho}_i\|\tau_n)-\sqrt{\frac{2\sum_{i=1}^nV(\tilde{\rho}_i\|\tau_n)}{n}}a_n+\gamma/4.
 		\end{align}
 	Now the following holds: 
 		\begin{align}
 			\sum_{i=1}^n D(\tilde{\rho}_i\|\tau_n)&=\sum_{i=1}^n\tr\tilde{\rho}_i(\log \tilde{\rho}_i-\log\bar{\rho}_n)+\sum_{i=1}^n\tr\tilde{\rho_i}(\log\bar{\rho}_n-\log\tau_n)\nonumber\\
 			&=\sum_{i=1}^n D(\tilde{\rho}_i\|\bar{\rho}_n)+nD(\bar{\rho}_n\|\tau_n)\nonumber\\
 			&\le \sum_{i=1}^nD(\tilde{\rho}_i\|\bar{\rho}_n)+n\gamma/4,\label{eq30}
 		\end{align}
 		where the last inequality above comes from the fact that we picked $\tau_n$ specifically so that $D(\bar{\rho}_n\|\tau_n)\le \gamma/4$.
 		Hence, combining \reff{eq32}, \reff{eq31} and \reff{eq30}, we have shown that for $n$ large enough:
 		\begin{align*}
 			\frac{1}{n}D_H^{\eps_n}(\rho_n\|\sigma_n(\gamma))\le \frac{1}{n}\sum_{i=1}^nD(\tilde{\rho}_i\|\bar{\rho}_n)-\sqrt{\frac{2\sum_{i=1}^nV(\tilde{\rho}_i\|\tau_n)}{n}}a_n+3\gamma/4.
 		\end{align*}
 		Finally, since $n\in L(\gamma)$, 
 		\begin{align}\label{eq36}
 			\frac{1}{n}	D_H^{\eps_n}(\rho_n\|\sigma_n(\gamma))\le \chi^*(\mathcal{W}_1)-\sqrt{\frac{2\sum_{i=1}^nV(\tilde{\rho}_i\|\tau_n)}{n}}a_n-\gamma/4.
 		\end{align}
 		We now take care of the case when $n\in H(\gamma)$. Let $\eta>0$. We use \Cref{eq37} to extract the term involving $\sigma^*(\overline{\im(\mathcal{W}_1)})$:
 		\begin{align*}
 			D_H^{\eps_n}(\rho_n\|\sigma_n(\gamma))\le  D_H^{\eps_n}(\rho_n\|[\sigma^*(\overline{\im(\mathcal{W}_1)})]^{\otimes n})+\log {2}.
 		\end{align*}
 		This implies that for any $\eta>0$ there exists $N>0$ such that for any $n>N$,
 		\begin{align}\label{eq33}
 		\frac{1}{n}	D_H^{\eps_n}(\rho_n\|\sigma_n(\gamma))\le\frac{1}{n}  D_H^{\eps_n}(\rho_n\|[\sigma^*(\overline{\im(\mathcal{W}_1)})]^{\otimes n})+\eta a_n/3.
 		\end{align}
 		Assume first that $V_{\min}(\mathcal{W}_1)\le \eta^2/18$. Applying \reff{eq27} for large enough $n$, 
 		\begin{align*}
 			\frac{1}{n}	D_H^{\eps_n}(\rho_n\|\sigma_n(\gamma))&\le\frac{1}{n}	D_H^{\eps_n}(\rho_n\|[\sigma^*(\overline{\im(\mathcal{W}_1)})]^{\otimes n})+\eta a_n/3\\
 			&\le \frac{1}{n}\sum_{i=1}^nD(\tilde{\rho}_i\|\sigma^*(\overline{\im(\mathcal{W}_1)}))+2\eta a_n/3\\
 				&\le \frac{1}{n}\sum_{i=1}^nD(\tilde{\rho}_i\|\sigma^*(\overline{\im(\mathcal{W}_1)}))-\sqrt{2V_{\operatorname{min}}(\mathcal{W}_1)}a_n+\eta a_n\\
 			&\le 
 			\chi^*(\mathcal{W}_1)-\sqrt{2V_{\operatorname{min}}(\mathcal{W}_1)}a_n+\eta a_n,
 		\end{align*}
 		where the last inequality follows from the definition of $\sigma^*(\overline{\im(\mathcal{W}_1)})$, and the fact that $\tilde{\rho}_i\in\overline{\im(\mathcal{W}_1)}$.\\\\For $V_{\operatorname{min}}(\mathcal{W}_1)> \eta^2/18$, consider the following quantity:
 			\begin{align*}
 				&\tilde{V}_{\operatorname{min}}(\gamma)\nonumber\\
 				&:=\inf_{p_\cX}\left\{\sum_{x\in\cX} p_\cX(x) V\left(\mathcal{W}_1(x)\|\sigma^*(\overline{\im(\mathcal{W}_1)})\right)\Bigg|~\sum_{x\in\cX}p_\cX(x)D\left(\mathcal{W}_1(x)\|\mathcal{W}_1(p_\cX)\right)\ge \chi^*(\mathcal{W}_1)-\gamma\right\},
 				\end{align*}
 			where the infimum is taken over any probability mass function $p_\cX$ on $\cX$. By definition $\tilde{V}_{\operatorname{min}}(0)=V_{\operatorname{min}}(\mathcal{W}_1)$, and using Lemma 22 of \cite{TT15}, we know that $\lim_{\gamma\to 0^+}\tilde{V}_{\operatorname{min}}(\gamma)=V_{\operatorname{min}}(\mathcal{W}_1)$. Hence, for any $\eta>0$ there exists a positive constant $\gamma_0$ such that
 			\begin{align}\label{eq34}
 				\sqrt{2\tilde{V}_{\operatorname{min}}(\gamma_0)}\ge \sqrt{2V_{\operatorname{min}}(\mathcal{W}_1)}-\eta/3.
 				\end{align}
 				As $V_{\operatorname{min}}(\mathcal{W}_1)> \eta^2/18$, this implies that $\tilde{V}_{\operatorname{min}}(\gamma_0)>0$.\\\\
 				Next, define the empirical probability mass function $p_n(x):=\frac{1}{n}\sum_{i=1}^n\delta(\mathcal{W}_1(x)-\tilde{\rho}_i)$. For all $n\in H(\gamma_0)$,
 				\begin{align*}
 				\sum_{x\in\cX}p_n(x)D\left(\mathcal{W}_1(x)\Bigg\|\sum_{y\in\cX}p_n(y)\mathcal{W}_1(y)\right)=\frac{1}{n}\sum_{i=1}^n D(\tilde{\rho}_i\|\bar{\rho}_n)\ge \chi^*(\mathcal{W}_1)-\gamma_0,
 					\end{align*}
 					and so we can lower bound the average quantum information variance with respect to the divergence centre
 					\begin{align*}
 						\frac{1}{n}\sum_{i=1}^n V(\tilde{\rho}_i\|\sigma^*(\overline{\im(\mathcal{W}_1)}))=\sum_{x\in\cX} p_n(x)V(\mathcal{W}_1(x)\|\sigma^*(\overline{\im(\mathcal{W}_1)}))\ge \tilde{V}_{\operatorname{min}}(\gamma_0)>0.
 						\end{align*}
 						Using this lower bound, we can once again apply \reff{eq27} to \reff{eq33} so that for $n\in H(\gamma_0)$ large enough,
 						\begin{align*}
 							\frac{1}{n} D_H^{\eps_n}(\rho_n\|\sigma_n(\gamma_0))&\le \frac{1}{n}D_H^{\eps_n}(\rho_n\|[\sigma^*(\overline{\im(\mathcal{W}_1)})]^{\otimes n})+\eta a_n/3\\
 							&\le \frac{1}{n}\sum_{i=1}^n D(\tilde{\rho}_i\|\sigma^*(\overline{\im(\mathcal{W}_1)}))-\sqrt{\frac{2\sum_{i=1}^nV(\tilde{\rho}_i\|\sigma^*(\overline{\im(\mathcal{W}_1)}))}{n}}a_n+2\eta a_n/3\\
 							&\le \frac{1}{n}
 								\sum_{i=1}^n D(\tilde{\rho}_i\|\sigma^*(\overline{\im(\mathcal{W}_1)}))-		\sqrt{2\tilde{V}_{\operatorname{min}}(\gamma_0)}a_n+2\eta a_n/3\\
 								&\le \chi^*(\mathcal{W}_1)-\sqrt{2V_{\operatorname{min}}(\mathcal{W}_1)}a_n+\eta a_n,
 							\end{align*}
 						where we used \reff{eq34} in the last line.
 				We showed that for any $\eta>0$ there exists $ \gamma_0$ such that for $n$ large enough
 				\begin{align*}
 					\frac{1}{n}D_H^{\eps_n}(\rho_n\|\sigma_n(\gamma_0))\le\left\{\begin{aligned}
 						&\chi^*(\mathcal{W}_1)-\gamma_0/4~~~~~~~~~~~~~~~~~~~~~~~~~~n\in L(\gamma_0),\\
 						&\chi^*(\mathcal{W}_1)-\sqrt{2V_{\operatorname{min}}(\mathcal{W}_1)}a_n+\eta a_n~~~n\in H(\Gamma).
 						\end{aligned}
 						\right.
 						\end{align*}
 		This implies that for any $n$ large enough:
 		\begin{align*}
 			\frac{1}{n}D_H^{\eps_n}(\rho_n\|\sigma_n(\gamma_0))\le \chi^*(\mathcal{W}_1)-\sqrt{2V_{\operatorname{min}}(\mathcal{W}_1)}a_n+\eta a_n
 			\end{align*}
 		Using the definition of $\chi_H$ and substituting the above inequality on the right hand side of \reff{R}, for $n$ large enough, we get
 		\begin{align*}
 			C(\mathcal{W}_n,\eps_n)&\le \inf_{\sigma_n\in\cD(\cH^{\otimes n})}\sup_{\rho_n\in\overline{\im(\mathcal{W}_n)}}D_H^{2\eps_n}(\rho_n\|\sigma_n)+\log \frac{2}{(1-\eps_n)(1-2\eps_n)}\\
 			&\le n\chi^*(\mathcal{W}_1)-\sqrt{2V_{\operatorname{min}}(\mathcal{W}_1)}na_n+\eta n a_n+\log \frac{2}{(1-\eps_n)(1-2\eps_n)}
 			\end{align*}
 			Taking the limit $\eta\to0$, we end up with,
 			\begin{align*}
 				C(\mathcal{W}_n,\eps_n)\le n\chi^*(\mathcal{W}_1)-\sqrt{2V_{\operatorname{min}}(\mathcal{W}_1)}na_n+\circ(na_n)
 				\end{align*}
 			
 		\qed
 	\end{proof}
\end{document}